\newtheorem{theorem}{Theorem}
\newtheorem{lemma}{Lemma}[section]
\newtheorem{corollary}[lemma]{Corollary}
\newtheorem{remark}[lemma]{Remark}
\newtheorem{definition}[lemma]{Definition}
\newtheorem{proposition}[lemma]{Proposition}
\newcommand{\argmin}[1]{\underset{#1}{\operatorname{argmin}}}
\newcommand{\IR}{\mathbb{R}}
\newcommand{\IE}{\mathbb{E}}
\newcommand{\IP}{\mathbb{P}}
\newcommand{\Ind}{\mathbbm{1}}
\newcommand{\IN}{\mathbb{N}}
\renewcommand{\tilde}{\widetilde}
\renewcommand{\epsilon}{\varepsilon}
\renewcommand{\phi}{\varphi}
 \numberwithin{equation}{section}
\def\boxit#1{\vbox{\hrule\hbox{\vrule\kern6pt
			\vbox{\kern6pt#1\kern6pt}\kern6pt\vrule}\hrule}}
\def\mythanks#1{%
	\protected@xdef \@thanks {\@thanks \protect \footnotetext [\the \c@footnote ]{#1}}%
}
\begin{document}
\onehalfspacing

\title{Inference in Regression Discontinuity Designs with High-Dimensional Covariates\mythanks{First version: October 26, 2021. 
		This version: \today.
		The authors gratefully acknowledge financial support by the European Research Council (ERC) through grant SH1-77202. Computations for this work were done in part using resources of the Leipzig University Computing Centre.
		 Author contact information:    
		Alexander Kreiß, Mathematical Institute, Leipzig University and Department of Statistics, London School of Economics. E-Mail: alexander.Kreiss@math.uni-leipzig.de. 
		Christoph Rothe, Department of Economics, University of Mannheim.
		E-Mail: rothe@vwl.uni-mannheim.de. Website: http://www.christophrothe.net.  }}

\author{Alexander Kreiß \and Christoph Rothe}

\date{}
\maketitle

\begin{abstract}
We study regression discontinuity designs in which many predetermined covariates, possibly much more than the number of observations, can be used to increase the precision of treatment effect estimates. We consider a two-step estimator which first selects a small number of ``important'' covariates through a localized Lasso-type procedure, and then, in a second step, estimates the treatment effect by including the selected covariates linearly into the usual local linear estimator. We provide an in-depth analysis of the algorithm's theoretical properties, showing that, under an approximate sparsity condition, the resulting estimator is  asymptotically normal, with asymptotic bias and variance that are conceptually similar to those obtained in low-dimensional settings. Bandwidth selection and inference can  be carried out using standard methods. We also provide simulations and an empirical application.

\end{abstract}

\newpage

\section{Introduction}

Regression discontinuity (RD) designs are widely used for estimating causal effects from observational data in economics and other social sciences. These designs exploit institutional settings in which a unit's treatment assignment is determined by whether its realization of a running variable falls above or below some known cutoff value. Estimates of causal effects are then obtained by comparing the outcomes of units that are close to, but on different sides of the cutoff. Methods based on local linear regression are widely used in this context, and their theoretical properties have been studied extensively in the literature \citep[e.g.][]{hahn2001identification, IK12, calonico2014robust, AK18}. 

While an empirical RD study can be carried out using only data on the outcome and the running variable, in practice researchers often want to incorporate additional covariates into their analysis to improve the precision of their  estimates. This is commonly done by including the covariates linearly and without localization in a standard local linear RD regression  \citep{CCFT19}. Such linear adjustment estimators generally work well if the number of available covariates is small relative to the sample size. However, they might yield distorted inference even with a rather moderate number of covariates; and they are generally  expected to break down in high-dimensional settings where the number of covariates is large and possibly even exceeds the number of observations. Such settings can occur, for example, when working with rich administrative data sets, but also if a large number of transformations, like interactions or polynomials, is applied to a low-dimensional set of underlying covariates.

In this paper, we study a two-step approach that addresses this problem. In the first step, we select a small subset of the covariates by adding an $\ell_1$ or Lasso penalty \citep[cf.][]{T96} to the local least squares problem that defines the linear adjustment estimator, and collect those variables with non-zero coefficient estimates. By construction, the selected covariates are strongly related to the outcome, and thus have the greatest potential to ``absorb'' some of its variance. In the second step, we then compute a standard linear adjustment estimator, but use only the selected covariates. We show that the resulting ``post-Lasso'' estimator is asymptotically normal under an approximate sparsity condition, with asymptotic bias and variance that are conceptually similar to those obtained in low-dimensional settings (here ``approximate sparsity'' means that only a small number of the covariates is particularly relevant for the empirical analysis, in the sense that including further covariates would not lead to meaningful improvements of estimation accuracy).  We also argue that one can  use standard methods for bandwidth selection and inference with the selected variables, making the method very easy to implement in practice.

Our estimator has many parallels with the well-known post-Lasso approach  for treatment effect estimation under unconfoundedness with high-dimensional controls in \citet{BCH14}, including the use of a similar notion of approximate sparsity. One important difference is that, in contrast to unconfoundedness, it is not necessary to select the ``right'' covariates in our RD framework in order to obtain a consistent estimator. This is because in our setting the purpose of controlling for covariates is only to increase efficiency, and not to address issues of selection bias.\footnote{This is a conceptual parallel between our setting and the use of covariates in randomized controlled experiments with a constant propensity score \citep[e.g.][]{wager2016high}. Note, however, that there is no explicit notion of random assignment in RD designs, and thus results from the literature on experiments do not simply carry over to our setting. } Our method therefore  only requires a single selection step that collects variables which are strongly related to the outcome, and not a ``double selection'' as in \citet{BCH14} that also selects variables related to treatment status. Our variable selection step is also not based on the standard Lasso, but on a ``Lasso-penalized'' local linear regression problem; and in contrast to unconfoundedness, one cannot make use of (conditionally) random treatment assignment in RD settings, but only exploit continuity conditions. The theoretical results that we derive in this paper therefore do not follow from existing arguments.

Our paper contributes to a growing literature that considers covariates in RD designs,  including \citet{AK18}, \citet{CCFT19}, \citet{froelich2019including}, \citet{noack2021flexible} and \citet{arai2021regression}.
In particular, \citet{arai2021regression} study an estimation approach that is very similar to the one we consider in this paper. However, their analysis relies on strong conditions, like exact sparsity and a ``$\beta$-min'' condition that puts a large lower bound on the coefficients of relevant covariates, which imply perfect model selection.\footnote{Roughly speaking, the conditions in \citet{arai2021regression} describe a setup in which the covariates can be partitioned into groups of ``very important'' and ``completely irrelevant'' ones, irrespective of the chosen bandwidth. Moreover,  the influence of the  ``very important'' ones is assumed to be large enough that they are selected by a localized Lasso procedure with near-certainty.
Such conditions seem unlikely to be satisfied in practice. We also note that the estimator studied by \citet{arai2021regression} differs from ours in that it does not use all the covariates with non-zero coefficients in the first stage for the ``post-Lasso'' step, but only those whose estimated coefficients exceed some positive bound, which depends on an additional tuning parameter. } Our paper uses an arguably more realistic framework, does not require perfect model selection, and develops a complete asymptotic theory for the final RD estimator. Our paper's technical arguments are also related to those in \citet{su2019non}, who use a localized Lasso to handle high-dimensional covariates in a nonparametric setup, namely a continuous treatment model.

The remainder of this paper is structured as follows. In Section~\ref{sec:setup} we introduce our model and our proposed estimator, and give an informal description of its theoretical properties. In Section~\ref{sec:theory} we give the assumptions for our theoretical analysis, and state and discuss our  main result. Section~\ref{sec:numerical_results} explains some implementation details of our procedure, and gives the results of a simulation study and an empirical illustration. Section~5 concludes. All proofs are collected in the Online Appendix.

\section{Setup and Method}
\label{sec:setup}

\subsection{Setup and Preliminaries}

Consider a sharp RD design to determine the causal effect of a binary treatment on some outcome variable of interest. The data are
an independent sample $\{(Y_i,X_i,Z_i), i=1,\ldots,n\}$ of size $n$ from some large population. Here  $Y_i\in\IR$ is the outcome variable, $X_i\in\IR$ is the running variable, and $Z_i \in \IR^p$ is a  vector of pre-treatment covariates. We particularly consider high-dimensional settings in which the covariate dimension $p$ can be large relative to, or indeed significantly larger than, the sample size $n$. We account for this in our framework by allowing $p=p_n$ to increase with the number of observations.
High-dimensional covariates occur of course if the researcher observes a large number of conceptually distinct variables for each unit, but also if  the researcher applies a large number of transformations from a dictionary of basis functions, that might create interactions or polynomials, to an underlying low-dimensional vector of covariates.

Units receive the treatment if and only if the running variable exceeds some known cutoff, which we normalize to zero without loss of generality. We denote the resulting treatment indicator by $T_i$, so that $T_i=\Ind(X_i \geq 0)$. Units also have potential outcomes $Y_i(t)$, for $t \in\{0, 1\}$, corresponding to the outcome unit $i$ would have experienced had it received treatment $t$, so that $Y_i = Y_i(T_i)$. The parameter of interest is the average treatment effect among units at the cutoff:
$$\tau_Y = \IE(Y_i(1)-Y_i(0)|X_i=0).$$
If $\IE(Y_i(t)|X_i=x)$ is continuous around the cutoff for $t \in\{0, 1\}$, this parameter is identified by the jump in the conditional expectation function $\IE(Y_i|X_i=x)$ of the observed outcome given the running variable at the threshold:
\begin{equation}\label{eq:estimand_limits}
\tau_Y = \lim_{x\downarrow 0} \IE(Y_i|X_i=x) - \lim_{x\uparrow 0}\IE(Y_i|X_i=x).
\end{equation}

Local linear regression \citep{fan1996local} is arguably the most popular framework for estimation and inference in RD designs. In the absence of covariates, the jump $\tau_Y$ is estimated by fitting a linear regression of $Y_i$ on $X_i$ locally around the cutoff, allowing for different intercepts and slopes on each side. This estimator is the baseline procedure for our analysis:
 \begin{equation}
 \label{eq:Base}
 \hat\tau_{h,\textnormal{Base}} = e_2^\top\argmin{\theta\in\IR^4}\sum_i^nK_h\left(X_i\right)\left(Y_i-V_i^\top\theta\right)^2,
 \end{equation}
with  $K$  a non-negative kernel function, $h>0$ a bandwidth, $K_h(x)=K(x/h)/h$, $V_i=\left(1,T_i,X_i/h,T_iX_i/h\right)^\top$ a vector of appropriate transformations of the running variable,  and $e_2 = (0,1,0,\ldots,0)^\top$  a unit vector of appropriate length.
 As discussed in \citet{CCFT19}, practitioners often augment the local regression in~\eqref{eq:Base} with additional covariates in a simple linear fashion, which yields the linear adjustment estimator
\begin{equation}
\label{eq:LS}
\hat\tau_{h,\textnormal{CCFT}} = e_2^\top\argmin{(\theta,\gamma)\in \IR^{4+p}}\sum_i^nK_h\left(X_i\right)\left(Y_i-V_i^\top\theta-Z_i^\top\gamma\right)^2.
\end{equation}
This estimator is consistent under standard regularity conditions if the dimension of the covariates is fixed and if their conditional  distribution given the running variable  changes smoothly around the cutoff, in the sense that  the conditional expectation of the covariates given the running variable does not jump:
\begin{equation}
\tau_Z = \lim_{x\downarrow 0} \IE(Z_i|X_i=x) - \lim_{x\uparrow 0}\IE(Z_i|X_i=x) =0. \label{eq:tauz}
\end{equation}

The linear adjustment estimator is typically more efficient than the baseline ``no covariates'' estimator. It is not uniquely defined, however, if the number of local parameters exceeds the number of observations that receive positive kernel weights in~\eqref{eq:LS}. Moreover, due to overfitting, asymptotic approximations based on a ``fixed $p$'' analysis might not provide adequate descriptions of the estimator's finite sample properties
 even in settings where the number of covariates is moderate relative to the effective sample size. For instance, in our simulations below we illustrate that conventional standard errors might severely underestimate the true variability of the linear adjustment estimator in a setting with 10--50 covariates and 1,000 data points. Linear adjustment estimators are therefore only appropriate for very low-dimensional settings.

\subsection{Proposed Method}
\label{subsec:method}

A natural way to extend linear adjustment estimators to high-dimensional settings is to consider versions that only use  a ``small'' active subset of the available covariates. Formally, with $J=\{j_1,\ldots,j_s\}\subset\{1,\ldots, p\}$ a generic subset of the covariates' indices of size $s\equiv|J|\ll p$, and $Z_{i}(J)= (Z^{(j_1)}_i,\ldots,Z^{(j_s)}_i)^{\top}$ the $s$-dimensional vector of components of $Z_i$ whose indices are collected in $J$, such estimators are given by
\begin{equation}
\label{eq:fixedJ}
\hat{\tau}_h(J) = e_2^\top\argmin{(\theta,\gamma)\in \IR^{4+s} } \sum_i^nK_h\left(X_i\right)\left(Y_i-V_i^\top\theta-Z_{i}(J)^\top\gamma\right)^2.
\end{equation}
Using arguments from \citet{CCFT19}, it is easily seen that such estimators are consistent for any fixed covariate subset $J$ under the appropriate regularity conditions. The choice of $J$ does affect the asymptotic variance, however, and using covariates that have high  correlation  with the outcome (locally at the cutoff) can generally be expected to yield more efficient estimates of $\tau_Y$. In practice, the identity of these ``most useful'' covariates is typically not known \emph{a priori}, but can potentially  be inferred in a data driven way. We therefore consider  estimators of the form $\hat{\tau}_h(\hat J_n)$, with $\hat J_n$ a data-dependent subset of the covariates' indices that is intended to contain the most relevant ones.

Our proposed RD estimator for settings with high-dimensional covariates determines the set $\hat J_n$ of active covariates through a ``localized'' version of a Lasso regression in a preliminary model selection step. We consider a version of the minimization problem in~\eqref{eq:LS} that includes an additional penalty on the sum  of the absolute values of the coefficients associated with the (appropriately standardized)  covariates, 
 and define $\hat J_n$ as the set of covariate indices for which the  corresponding coefficient estimate is non-zero. Specifically, our procedure is as follows.
\begin{enumerate}
	\item Using a preliminary bandwidth $b$ and a penalty parameter $\lambda$, solve the
	following ``Lasso version'' of the weighted least squares problem in~\eqref{eq:fixedJ}:
		$$\left(\tilde{\theta}_n,\tilde{\gamma}_n\right)=\argmin{(\theta,\gamma)\in \IR^{4+p_n}}\sum_{i=1}^nK_b(X_i)\left(Y_i-V_i^{\top}\theta-\left(Z_i-\hat{\mu}_{Z,n}\right)^{\top}\gamma\right)^2+\lambda\sum_{k=1}^{p_n}\hat{w}_{n,k}|\gamma_k|,$$
	where
	\begin{align*}
	\hat{\mu}_{Z,n}=\frac{1}{n}\sum_{i=1}^nZ_iK_b\left(X_i\right) \textnormal{ and }
	\hat{w}_{n,k}^2=\frac{b}{n}\sum_{i=1}^n\left(K_b\left(X_i\right)Z_i^{(k)}-\mu_{Z,n}^{(k)}\right)^2
	\end{align*}
	 are the local sample mean and variance, respectively, of the covariates. Note that standardizing the covariates allows the penalty parameter $\lambda$ to be reasonably tuned to all coefficients simultaneously.

	\item Using a final bandwidth $h$, compute the restricted post-Lasso estimate of $\tau_Y$ as $\hat{\tau}_h(\hat{J}_n)$ as in~\eqref{eq:fixedJ}, where  $\hat{J}_n=\{k\in\{1,...,p_n\}:\tilde{\gamma}_n^{(k)}\neq 0\}$
	is the set of the indices of those covariates selected in the first step.
\end{enumerate}

The tuning parameters $b$ and $\lambda$ do not appear in the asymptotic distribution of $\hat{\tau}_h(\hat{J}_n)$, and hence some \emph{ad hoc} choices are needed. 
To choose $b$, we recommend  using a method for bandwidth choice designed for the baseline estimator without covariates, like the ones proposed in  \citet{IK12}, \citet{calonico2014robust} or \citet{AK18}. For $\lambda$, we compare different approaches in our simulations, and the results suggest that the plug-in procedure of \citet{BCH14} works well in practice. The final bandwidth $h$ can be chosen via any method suitable for linear adjustment estimators with low-dimensional covariates, such as those discussed in  \citet{CCFT19} or  \citet{AK18}.

\subsection{Overview of Main Result} 
\label{subsec:overview}
We now give an informal overview of the main theoretical result in this paper. For generic random vectors $A$ and $B$, we use the notation that 
$\mu_{A}(x)=\IE(A|X=x)$,  $\mu_{AB}(x)=\IE(AB^\top|X=x)$, $\sigma^2_{AB}(x)=\mu_{AB}(x)-\mu_{A}(x)\mu_{B}(x)^\top$;
and write $\sigma_{A}^2(x)=\sigma_{AA}^2(x)$ for simplicity. For a generic function $f$,
we also write $f_+\ = \lim_{x\downarrow 0}f(x)$ and $f_-\ = \lim_{x\uparrow 0}f(x)$ for its   right and left limit at zero, respectively, so that $\tau_Y=\mu_{Y+}-\mu_{Y-}$,
for example.

A key assumption for our analysis is that the covariates satisfy an \emph{approximate sparsity} condition, which intuitively means that only a small subset of the covariates is particularly relevant for the empirical analysis, and that including any further covariates would not lead to meaningful improvements of estimation accuracy. To state this notion more formally,  we define  the following population regression coefficients and corresponding residuals for any $J\subset\{1,\ldots,p_n\}$ and  bandwidth $h$:
\begin{equation}
\label{eq:optimal_regression}
\begin{array}{l}
(\theta_{0}(J,h),\gamma_{0}(J,h))=\argmin{(\theta,\gamma)}\quad \IE\left(K_h(X_i)\left(Y_i-V_i^{\top}\theta-Z_i(J)^{\top}\gamma\right)^2\right), \\
r_{i}(J,h)=Y_i-V_i^{\top}\theta_{0}(J,h)-Z_i(J)^{\top}\gamma_{0}(J,h).
\end{array}
\end{equation}
Approximate sparsity then means that there exist   deterministic \emph{target covariate sets} $J_n\subset \{1,\ldots,p_n\}$   that contain a ``small'' number  $s_n\equiv|J_n|\ll p_n$ of elements, and are such that the local correlation between 
the corresponding regression errors $r_{i}(J_n,h)$ and \emph{each} component of $Z_i$ is small relative to the estimation error:
\begin{align*}
\max_{j=1,...,p_n}\left|\IE\left(K_h(X_i) Z_{i}^{(j)}r_{i}(J_n,h)\right)\right|=O\left(\sqrt{\frac{\log p_n}{nh}}\right).
\end{align*}
 Moreover, this condition needs to be satisfied for an appropriate range of bandwidths, so that the sequence $J_n$ does  not depend on the exact choice of $h$.

Under this and other regularity conditions discussed below, one can show that the post-Lasso estimator $\hat{\tau}_h(\hat{J}_n)$ has the same first-order asymptotic properties as an infeasible estimator $\hat{\tau}_h(J_n)$ that uses the true target set, and then prove an asymptotic normality result for the latter. Taken together, this yields the main result of our paper, which is that the post-Lasso estimator $\hat{\tau}_h(\hat J_n)$ of $\tau_Y$ satisfies
\begin{align}
&\frac{\sqrt{nh}\left(\hat{\tau}_h(\hat J_n)-\tau_Y  -h^2\mathcal{B}_n\right)}{\mathcal{S}_n}\overset{d}{\to}\mathcal{N}\left(0,1\right),\label{eq:main_result}
\end{align}
with  asymptotic bias and variance, respectively,  such that
\begin{align}
\mathcal{B}_n &\approx\frac{C_\mathcal{B}}{2}\left(\mu_{\tilde Y+}''-\mu_{\tilde Y-}''\right)\quad\textnormal{and}\quad\mathcal{S}^2_n\approx\frac{C_\mathcal{S}}{f_X(0)}\left(\sigma^2_{\tilde Y+} +\sigma^2_{\tilde Y-} \right) \label{eq:approx_bias_var}
\end{align}
in a sense made precise below. Here $C_\mathcal{B}$ and $C_\mathcal{S}$ are constants that depend on the kernel function $K$ only, and
$$\tilde Y_i = Y_i-Z_{i}(J_n)^\top\gamma_n, \textnormal{ with }\gamma_n=\left(\sigma_{Z(J_n)-}^2+\sigma_{Z(J_n)+}^2\right)^{-1}\left(\sigma_{YZ(J_n)-}^2+\sigma_{YZ(J_n)+}^2\right),$$
is a ``covariate-adjusted'' version of the outcome variable that uses a vector $\gamma_n$ that can be thought of as an approximation of $\gamma_0(J_n,h)$ that is independent of the bandwidth. Our proposed estimator is thus first-order asymptotically equivalent to a ``baseline'' sharp RD estimator as in~\eqref{eq:Base}  with the covariate-adjusted outcome
$\tilde Y_i$
replacing the original outcome $Y_i$.\footnote{Note, however, that in contrast to the $Y_i$ the distribution of the $\tilde Y_i$ depends on the sample size $n$, and hence the properties of this estimator do not directly follow from existing ones for ``baseline'' sharp RD estimators. }

Note that, as in  \citet{CCFT19}, the continuity of $\mu_Z$ from~\eqref{eq:tauz} is necessary to establish this result. If the components of $\mu_Z$ could potentially have a jump at the cutoff,
the estimator $\hat{\tau}_h(\hat{J}_n)$ would generally not be consistent for $\tau_Y$, but satisfy $\hat{\tau}_h(\hat{J}_n) = (\tau_Y-\tau_Z^\top\gamma_n)(1+o_P(1)).$ In practice, researchers may want to investigate the plausibility of assuming~\eqref{eq:tauz} by running a series of sharp RD regressions in which the covariates take the role of the dependent variable, and testing whether the estimated jump at the cutoff is significantly different from zero via some approach that is appropriate for large-scale hypothesis testing \citep[e.g.][]{benjamini1995controlling}. Alternatively, one could carry out this exercise only for the selected covariates $Z_i(\hat J_n)$.

The formulas for the bias and variance of $\hat{\tau}_h(\hat{J}_n)$ in~\eqref{eq:approx_bias_var} are analogous to those obtained in \citet{CCFT19} for the case that $J_n \equiv J$ contains only a fixed number of covariates. This suggests that one can  select the final bandwidth $h$ and create a confidence interval for $\tau_Y$  by applying their proposed methods for low-dimensional setups  to the generated data set  $\{(Y_i,X_i,Z_i(\hat J_n)), i=1,\ldots,n\}$ that only contains the covariates selected by our algorithm. Similarly, given a bound on the second derivative of the function $\IE(\tilde Y_i|X_i=x)$, one can select the bandwidth $h$ and construct confidence intervals  for $\tau_Y$ by using the methods proposed by \citet{AK18} with the  generated data set. See Section~\ref{sec:implementation} for further discussion of implementation details.

\section{Theoretical Analysis}
\label{sec:theory}
\subsection{Assumptions}
\label{subsec:assumptions}

We impose the following assumptions in our theoretical analysis.

\smallskip

\noindent\textbf{Assumption (BW):} (Bandwidth). \emph{There are positive constants $c_{g,1},c_{g,2}$ such that $h,b\in[c_{g,1}g,c_{g,2}g]$, where $g\to0$ is a reference sequence such that $|J_n|\log p_n/\sqrt{ng}\to0$ and $|J_n|g^2\sqrt{\log p_n}\to0$.}

\smallskip

Assumption (BW) means that the bandwidths in the first and second stage of our procedure are such that $b\asymp h$, i.e., $b$ and $h$ converge to zero with the same speed. The exact role of this assumption is related to regularity conditions in Assumption (MS) and is discussed below. We emphasize here that (BW) can be achieved simply by selecting bandwidths from the range $[c_{g,1}g,c_{g,2}g]$. The condition that  $|J_n|\log p_n/\sqrt{ng}\to0$ is a version of a standard assumption in the Lasso literature (cf. Chapter 6 in \citet{vdGB11}), adapted to our locally penalized setup. The requirement that $|J_n|g^2\sqrt{\log p_n}\to0$ is needed to control for the bias. If the rate of the reference bandwidth $g$ is considered to be given, Assumption (BW) can be seen as imposing restrictions on the maximal number of covariates in the target set $J_n$. If, on the other hand, the growth of $|J_n|$ is considered to be given, this assumption can be interpreted as imposing limitations on the rate at which localization occurs.

\smallskip

\noindent\textbf{Assumption (AS):} (Approximate Sparsity).
\emph{It holds that $p_n\to\infty$ and, with $r_i(J_n,h)$ as in \eqref{eq:optimal_regression}, that
\begin{align}
\max_{k=1,...,p_n}\left|\IE\left(Z_{n,i}^{(k)}K_h(X_i)r_i(J_n,h)\right)\right|=O\left(\sqrt{\frac{\log p_n}{nh}}\right). \label{eq:as}
\end{align}
In addition, equation~\eqref{eq:as} remains true with $h$  replaced by $b$.}

\smallskip

Assumption (AS) is similar in nature to the notion of approximate sparsity in, for example, \citet{BCH14}.\footnote{Note that (AS) is required to make the Lasso work. Thus, it should not be read as a restriction, but it rather allows for data-dependent model selection (cf.\ our Remark \ref{rem:role_lasso} below for alternatives to (AS)). Online Appendix \ref{sec:non_sparse-simulations} provides a numerical example for the behavior of the Lasso in a non-sparse setting.}
 Note that it follows from the definition of $r_i(J_n,h)$  that $\IE(Z_{n,i}^{(k)}K_h(X_i)r_i(J_n,h))=0$ for all $k\in J_n$, and thus~\eqref{eq:as} only restricts the properties of covariates that are not part of the target set. Intuitively, (AS) means that the set $J_n$ contains ``essentially''  all relevant covariates, in the sense that any covariate which is not contained in $J_n$ is almost locally uncorrelated with the regression error $r_i(J_n,h)$. Note that in a setting with exact rather than approximate sparsity, condition~\eqref{eq:as} follows automatically as, by definition, $Z_{n,i}^{(k)}$ is uncorrelated with $(X_i,r_i(J_n,h))$ for $k\notin J_n$ in this case.

\smallskip

\noindent\textbf{Assumption (D):} (Differentiability). \emph{The density of $X_i$, $f_X$, is three times continuously differentiable in a neighborhood around zero and $f_X(0)>0$. Moreover, $\mu_Z$ is continuous and uniformly bounded in a neighborhood around zero. $\mu_Z$ and $\mu_Y$ are three times one-sided differentiable at $0$, i.e., $\mu_Z'$, $\mu_Z''$ and $\mu_Z'''$ exist on $(-\infty,0)\cup(0,\infty)$ and the left- and right sided limits at zero exist as well (and the same for $\mu_Y$). The functions $\mu_{ZZ}$ and $\mu_{ZY}$ are one-sided differentiable, and the derivatives fulfill
\begin{align*}
&\sup_{n\in\IN}\sup_{k\in\{1,...,p_n\}}\sup_{u\in[0,1]}\left|\mu_{Z^{(k)}}'(uh)\right|+\left|\mu_{Z^{(k)}}'(-uh)\right|<\infty, \\
&\sup_{n\in\IN}\sup_{k,l\in\{1,...,p_n\}}\sup_{u\in[0,1]}\left|\mu_{Z^{(k)}Z^{(l)}}'(uh)\right|+\left|\mu_{Z^{(k)}Z^{(l)}}'(-uh)\right|<\infty, \\
&\sup_{n\in\IN}\sup_{k\in J_n}\sup_{u\in[0,1]}\left|\mu_{Z^{(k)}Y}'(uh)\right|+\left|\mu_{Z^{(k)}Y}'(-uh)\right|<\infty, \\
&\sup_{n\in\IN}\sup_{k\in\{1,...,p_n\}}\sup_{u\in[0,1]}\left|\mu_{Z^{(k)}}''(uh)\right|+\left|\mu_{Z^{(k)}}''(-uh)\right|<\infty, \\
&\sup_{n\in\IN}\sup_{k\in J_n}\sup_{u\in[0,1]}\left|\mu_{Z^{(k)}}'''(uh)\right|+\left|\mu_{Z^{(k)}}'''(-uh)\right|<\infty.
\end{align*}}

\smallskip

In a finite dimensional setting, like in \citet{CCFT19}, the above conditions are implied by assuming existence and continuity of the one-sided derivatives. In the high-dimensional setting, the uniformity assumption is required in order to avoid pathological cases such as $\mu_{Z^{(k)}}$ getting increasingly steep as $k\to\infty$. Note that the conditions on the third derivative are only required on the target set $J_n$. 

To state the next assumptions, we define the matrix
$$M_n=\begin{pmatrix}
 \mu_{Z}(0) &
 \mathbf{0} &
h\mu_{Z-}' &
h\left(\mu_{Z+}'-\mu_{Z-}'\right)
\end{pmatrix}^\top\in\IR^{4\times p_n},$$
where $\mathbf{0}$ denotes a vector of zeros and put $\tilde{Z}_i=Z_i-M_n^\top V_i$.

\smallskip

\noindent\textbf{Assumption (TCS):} (Target Covariate Set).
\emph{It holds that
$$\left\|\IE\left(K_h(X_i)\tilde{Z}_i(J_n)\tilde{Z}_i^\top(J_n)\right)^{-1}\right\|_2=O(1),$$
and there are finite numbers $\delta,\sigma_l,\sigma_r,C>0$ such that
\begin{align}
&\lim_{n\to\infty}\sup_{u\in[0,1]}\left|\IE(r_i(J_n,h)^2|X_i=uh)-\sigma_r^2\right|=0, \label{eq:equicont1} \\
&\lim_{n\to\infty}\sup_{u\in[0,1]}\left|\IE(r_i(J_n,h)^2|X_i=-uh)-\sigma_l^2\right|=0, \label{eq:equicont2} \\
&\sup_{x\in[-h,h]}\sup_{n\in\IN}\left|\IE(|r_i(J_n,h)|^{2+\delta}|X_i=x)\right|<C, \label{eq:ubound1} \\
&\sup_{n\in\IN}\sup_{k\in J_n}\sup_{u\in[0,1]}\left|\IE\left(Z_i^{(k)}Y_i\big|X_i=uh\right)\right|+\left|\IE\left(Z_i^{(k)}Y_i\big|X_i=-uh\right)\right|<\infty. \label{eq:ubound2}
\end{align}
In addition, \eqref{eq:equicont1} and \eqref{eq:equicont2} also hold when $h$ is replaced by $b$.}

\smallskip

We call the requirements \eqref{eq:equicont1} and \eqref{eq:equicont2} in (TCS) equi-continuity from the right and left, respectively, and \eqref{eq:ubound1} and \eqref{eq:ubound2} are called uniform boundedness. In the proof the asymptotic normality of our final RD estimator, we use uniform boundedness to show a Lyapunov condition for the central limit theorem, and equi-continuity to ensure that the respective asymptotic variance converges to a finite and positive constant. While  boundedness seems to be unavoidable, the equi-continuity is assumed for convenience in our proofs. Removing it would potentially lead to a different convergence rate for our estimator by allowing settings where ``almost'' all variance is explained through the covariates in the limit. Given that this is not a realistic assumption, we do not consider adding this extra generality. Note  that it is necessary to distinguish the limits from left and right in  (TCS) because the conditional distribution of $r_i(J_n,h)$ given the running variable may experience a jump at zero.

\smallskip
 
\noindent\textbf{Assumption (K):} (Kernel). \emph{The kernel $K:\IR\to[0,\infty)$ integrates to one, is continuous, symmetric and is supported on $[-1,1]$.} 

\smallskip

Such conditions on the kernel are standard in the literature, and satisfied by the commonly used triangular and Epanechnikov kernels, for example. Kernels with unbounded support, like the Gaussian kernel, could be accommodated at the cost of slightly more involved theoretical arguments. Note that (K) implies that the following quantities are finite:
$$K^{(a)}=\int_{-\infty}^{\infty}u^aK(u)du,\quad K_+^{(a)}=\int_0^{\infty}u^aK(u)du, \quad a\in\{0,1,2,3,4\}.$$

For the following assumptions, we need further preliminary definitions. Let
\begin{align*}
&\mathbf{Y}=(Y_1,...,Y_n)',\quad \mathbf{K}_h=\textrm{diag}\left(K(X_1/h)/h,\ldots,K(X_n/h)/h\right), \\
&\mathbf{V}=\begin{pmatrix}
1 & T_1 & X_1/h & T_1X_1/h \\ \vdots & \vdots & \vdots & \vdots \\ 1 & T_n & X_n/h & T_nX_n/h
\end{pmatrix},\quad \mathbf{Z}(J)=\begin{pmatrix}
Z_1(J)^\top \\ \vdots \\ Z_n(J)^\top
\end{pmatrix},
\end{align*}
and for simplicity write $\mathbf{Z}=\mathbf{Z}(\{1,...,p_n\})$.

\begin{definition}
\label{def:comp_constant}
Let $c>0$ and $J\subseteq\{1,...,p_n\}$, define
\begin{equation}
\label{eq:comp_constant}
k(c,J)=\inf\frac{|J|\frac{1}{n}\left\|\mathbf{K}_b^{\frac{1}{2}}\begin{pmatrix}\mathbf{V} & \mathbf{Z}\end{pmatrix}\begin{pmatrix}\theta \\ \gamma \end{pmatrix}\right\|_2^2}{\left\|\begin{pmatrix} \theta & \gamma_{J} \end{pmatrix}'\right\|_1^2},
\end{equation}
where the infimum is taken over all vectors $(\theta\quad \gamma )'\in\IR^{p_n+4}$ for which
\begin{equation}
\label{eq:min_range}
\left\|\gamma_{J^c}\right\|_1\leq c\left\|\begin{pmatrix}\theta \\ \gamma_{J} \end{pmatrix}\right\|_1
\end{equation}
and, for any $\gamma\in\IR^{p_n}$, $\gamma_J^{(j)}=\gamma^{(j)}$ for $j\in J$ and $\gamma_J^{(j)}=0$ for $j\in J^c$. We say that the compatibility condition  $\textrm{CC}(c,J_n)$ holds for a possibly random sequence $J_n\subseteq\{1,...,p_n\}$  if $k(c,J_n)^{-1}=O_P(1)$.
\end{definition}
The constant $k(c,J)$ differs from the \emph{compatibility constants} known from the classical Lasso literature (cf.\ Chapter 6.13 in \citet{vdGB11}) only in the additional kernel weight and in the fact that the vector $\theta$ is not penalized in our setup. In order to give some intuition, we rewrite \eqref{eq:comp_constant} as follows:
\begin{equation}
\label{eq:cc_intuition}
k(c,J)=\frac{|J|}{nh}\inf\sum_{i=1}^nK\left(\frac{X_i}{h}\right)\left(V_i^\top\theta+Z_i(J)^\top\gamma_{J}-\left(-Z_i(J^c)^\top\gamma_{J^c}\right)\right)^2,
\end{equation}
where the infimum is taken over all pairs $(\theta,\gamma)$ for which \eqref{eq:min_range} and additionally $\|\theta\|_1+\|\gamma_{J}\|_1=1$ hold. Thus, $k(c,J)$ is bounded away from zero if the covariates in $J^c$ with small coefficients are unable to linearly represent the RD design vectors $V_i$ or the active covariates $Z_i(J)$.

\begin{definition}
\label{def:RSE}
Define for a sequence $m_n\in\IN$ and set $J_n$
\begin{equation}
\label{eq:rse}
\phi(m_n,J_n)=\inf\frac{\frac{1}{n}\left\|\mathbf{K}_h^{\frac{1}{2}}\begin{pmatrix} \mathbf{V} & \mathbf{Z}\end{pmatrix}\alpha\right\|_2^2}{\|\alpha\|^2}\leq\sup\frac{\frac{1}{n}\left\|\mathbf{K}_h^{\frac{1}{2}}\begin{pmatrix} \mathbf{V} & \mathbf{Z}\end{pmatrix}\alpha\right\|_2^2}{\|\alpha\|^2}=:\Phi(m_n,J_n),
\end{equation}
where $\inf$ and $\sup$ are taken over all vectors $\alpha=(\theta, \gamma)^\top\in\IR^{p_n+4}\setminus\{0\}$ such that $|\{i\in J_n^c:\,\gamma_i\neq0\}|\leq m_n$. We say that the restricted sparse eigenvalue condition $\textrm{RSE}(m_n,J_n,h)$ holds for a (random) sequence $m_n$ and a sequence of index sets $J_n$ if $\phi(m_n,J_n)^{-1}=O_P(1)$ and $\Phi(m_n,J_n)=O_P(1)$.
\end{definition}
Following the pattern of the compatibility constant in Definition~\ref{def:comp_constant}, we extend the concept of restricted sparse eigenvalues to localized problems. Continuing with the analogy, we can write down an equivalent formulation of \eqref{eq:rse} in the fashion of \eqref{eq:cc_intuition} to see that CC and RSE are similar in terms of their interpretation. The restricted sparse eigenvalue assumption is often required when it comes to Lasso estimators. See for example \citet{BCH14} for a discussion for non-localized estimators (Comment 3.2 therein) or Lemma 1 in \citet{BC13}. The localized case with the additional kernel changes the problem to a conditional instead of an unconditional variance.

\smallskip

\noindent\textbf{Assumption (RSE \& CC):} (Restricted Sparse Eigenvalues and Compatibility). \emph{The matrix $\tilde{\mathbf{Z}}(\hat{J}_n)^\top\mathbf{K}_h\tilde{\mathbf{Z}}(J_n)$ is almost surely invertible. The conditions RSE$(|J_n|\log n,J_n,h)$ for $\tilde{\mathbf{Z}}$, RSE$(|J_n|\log n,J_n,b)$, RSE$(0,J_n,h)$ and CC$(\bar{w},J_n)$ hold true for $\bar{w}=3w^{(u)}/w^{(l)}$, where  $w^{(l)}$ and $w^{(u)}$ are lower and upper bounds, respectively, on the weights $\hat{w}_{n,k}$ given in Lemma \ref{lem:weights}.}
\smallskip

Recalling the discussion after Definitions \ref{def:comp_constant} and \ref{def:RSE}, the assumption above means that $Z_i(J_n^c)$ cannot be used to represent $V_i$ or $Z_i(J_n)$. It can therefore be understood as excluding collinearity between the covariates. In order to establish standard consistency results for the Lasso (like Lemma \ref{lem:reg_error} in the Online Appendix) we only require the compatibility condition. The restricted eigenvalue assumptions are required for our asymptotic normality results, where they  guarantee that the number of selected covariates is growing slowly and that results about the model selection step carry over to the RD step. 

Recall the notation from \eqref{eq:optimal_regression}  for the following assumption.

\smallskip

\noindent\textbf{Assumption (MS):} (Model Smoothness). 
\emph{There is a sub-sequence $J_{0,n}\subseteq J_n$, a sequence $\eta_n\to\infty$ and a constant $C>0$ such that for any bandwidth $\mathcal{H}$ which fulfills $c_{g,1}g\leq\mathcal{H}\leq c_{g,2}g$, with $g,c_{g,1},c_{g,2}$ are as in (BW), we have that for any $k\in J_{0,n}$
\begin{equation}
\label{eq:vic}\left|\gamma_0^{(k)}(J_n,\mathcal{H})\right|\geq\eta_n\sqrt{\frac{|J_n|\log p_n}{ng}}
\end{equation}
and for any $k\in J_n\setminus J_{0,n}$
\begin{equation}
\label{eq:vuc}
\left|\gamma_0^{(k)}(J_n,\mathcal{H})\right|\leq C\sqrt{\frac{|J_n|\log p_n}{ng}}.
\end{equation}
Moreover,
$$\sqrt{|J_n\setminus J_{0,n}|}\cdot\frac{|J_n|\log p_n}{\sqrt{ng}}\to0\quad\textrm{and}\quad\sqrt{|J_n\setminus J_{0,n}|}\cdot|J_n|g^2\sqrt{\log p_n}\to0.$$}

This assumption rules out pathological settings in which a covariate's relevance within the target set is strongly affected by minor changes of the bandwidth.\footnote{Note that \eqref{eq:vic} and \eqref{eq:vuc} are mutually exclusive but not exhaustive. It would be possible to formulate mutually exclusive conditions by introducing a constant $C_0>0$ which depends on many unknown quantities. This extra generality would thus bring no meaningful practical benefit.} To see this, fix $\mathcal{H}$ and consider sets $J_{0,n}(\mathcal{H})\subseteq J_n$ such that \eqref{eq:vic} holds for $k\in J_{0,n}(\mathcal{H})$ while for $k\in J_n\setminus J_{0,n}(\mathcal{H})$, \eqref{eq:vuc} is true. Assumption (MS) reads then as: The mapping $\mathcal{H}\mapsto J_{0,n}(\mathcal{H})$ is constant for each $n\in\IN$. In other words, the identity of the covariates with large population regression coefficients in $J_n$ remains the same when the bandwidth is slightly altered. There might also be covariates in the target set $J_n$ with relatively small coefficients, namely those for which \eqref{eq:vuc} holds. Therefore, assumption (MS) is different from a $\beta$-min assumption \citep[cf.][]{vdGB11}.

For the next assumption, we define for $k\in\{1,...,p_n\}$ and $m\in\IN$:
$$\mu_{k,m}(x)=\IE\left(\left|Z_i^{(k)}\right|^m\Big|X_i=x\right)\textrm{ and }\mu_{k,m}^{(r)}(x)=\IE\left(\left|Z_i^{(k)}r_i(J_n,b)\right|^m\Big|X_i=x\right)$$

\noindent\textbf{Assumption (CTB):} (Covariate Tail Behavior).
\emph{The functions $\mu_{k,1},\mu_{k,2}$ and $\mu_{k,1}^{(r)}$ are uniformly bounded in a neighborhood around zero. There are finite numbers $\sigma_a^2,c_a,c_a^*>0$ for $a=0,1,2$ such that for all  $m\in\IN$ 
\begin{align}
\int_{\IR}\left(1+|u|^m\right)K(u)^m\mu_{k,m}(ub)f_X(ub)du\leq&\frac{m!}{2}\sigma_0^2c_0^{m-2}, \label{eq:B0} \\
\int_{\IR}K(u)^{2m}\mu_{k,2m}(ub)f_X(ub)du\leq&\frac{m!}{2}\sigma_1^2c_1^{m-2}, \label{eq:B2} \\
\int_{\IR}(1+|u|^m)K(u)\mu^{(r)}_{k,m}(ub)f_X(ub)du\leq&\frac{m!}{2}\sigma_2^2c_2^{m-2}. \label{eq:B4} \end{align}
Equations \eqref{eq:B0} and \eqref{eq:B4} hold also when $b$ is replaced by $h$.}

\smallskip

Conditions \eqref{eq:B0} and \eqref{eq:B2} hold, for example, for bounded covariates, or covariates which fulfill a local sub-Gaussianity condition. Similarly, \eqref{eq:B4} is implied by a sub-gaussianity or boundedness condition on the covariates and the residuals. We use these tail-constraints to prove a Bernstein-type concentration result (cf.\ Proposition~\ref{prop:bernstein} and the reference there for a general statement and Lemma~\ref{lem:expbound} for a formulation of the statement which is tailored to our setting). Condition \eqref{eq:B2} is a specific requirement for the model selection step and is thus formulated in terms of the model selection bandwidth~$b$.


\smallskip

\noindent\textbf{Assumption (CV):} (Covariate Variance).
\emph{It holds that
\begin{align*}
&\min_{n\in\IN}\min_{k\in\{1,...,p_n\}}\IE\left(\frac{1}{b}K\left(\frac{X_i}{b}\right)^2\left(Z_i^{(k)}\right)^2\right)>0, \\
&\max_{n\in\IN}\max_{k\in\{1,...,p_n\}}\IE\left(\frac{1}{b}K\left(\frac{X_i}{b}\right)^2\left(Z_i^{(k)}\right)^2\right)<\infty.
\end{align*}}

\smallskip

This assumption ensures that no single covariate has, asymptotically, either a negligible or dominating variance, and thus  that all covariates have a roughly similar scale.

\subsection{Main Result and Discussion}
\label{subsec:result}
The main result of this paper shows asymptotic normality of our estimator under the conditions stated in Section~\ref{subsec:assumptions}. In order to state it, we define the following constants, which depend on the kernel function only:
\begin{align*}
	C_\mathcal{B}=&\frac{K_+^{(3)}-2K_+^{(1)}K_+^{(2)}}{K_+^{(2)}-2\left(K_+^{(1)}\right)^2},\quad C_{\mathcal{S}}=\frac{(K^2)^{(0)}(K_+^{(2)})^2+(K^2)_+^{(2)}(K_+^{(1)})^2-2(K^2)_+^{(1)}K_+^{(1)}}{\left[(K_+^{(1)}-\frac{1}{2}K_+^{(2)}\right]^2}.
\end{align*}

\begin{theorem}
\label{thm:main}
Suppose the assumptions from Section \ref{subsec:assumptions} hold, and that the penalty parameter $\lambda$ is chosen such that $\lambda=O(\sqrt{\log p_n / (ng)})$. Then there are sequences $\mathcal{B}_n$ and $\mathcal{S}_n$  satisfying
\begin{align}
\mathcal{B}_n =&\frac{C_\mathcal{B}}{2}\left(\mu_{\tilde Y+}''-\mu_{\tilde Y-}''\right)+o\left(|J_n|^{1/2}\right)\textnormal{ and} \label{eq:bias_formula} \\
\mathcal{S}^2_n=&\frac{C_\mathcal{S}}{f_X(0)}\left(\sigma^2_{\tilde Y+} +\sigma^2_{\tilde Y-} \right)+o(1) \label{eq:var_formula}
\end{align}
such that~\eqref{eq:main_result} holds:
$$\frac{\sqrt{nh}\left(\hat{\tau}_h(\hat J_n)-\tau_Y  -h^2\mathcal{B}_n\right)}{\mathcal{S}_n}\overset{d}{\to}\mathcal{N}\left(0,1\right).$$
\end{theorem}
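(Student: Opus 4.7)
The plan is to split the argument into two conceptually independent parts: an \emph{oracle reduction} showing that $\hat\tau_h(\hat J_n)$ is asymptotically equivalent to the infeasible estimator $\hat\tau_h(J_n)$ that uses the deterministic target set, followed by a classical local linear central limit argument for $\hat\tau_h(J_n)$ viewed as a baseline sharp RD estimator applied to the covariate-adjusted outcome $\tilde Y_i = Y_i - Z_i(J_n)^\top\gamma_n$. The calibration $\lambda = O(\sqrt{\log p_n/(ng)})$ is chosen so that $\lambda$ dominates the typical size of the localized score $\max_k |n^{-1}\sum_i K_b(X_i) Z_i^{(k)} r_i(J_n,b)|$, whose mean is controlled by Assumption (AS) and whose concentration follows from the Bernstein-type tail conditions in (CTB) together with the weight bounds provided by the lemma on $\hat w_{n,k}$.

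For the first stage, I would begin by proving the standard Lasso $\ell_1$-consistency bound $\|\tilde\gamma_n - \gamma_0(J_n,b)\|_1 = O_P(|J_n|\sqrt{\log p_n/(ng)})$ through a basic-inequality argument adapted to the kernel-weighted objective, using the compatibility condition $\textrm{CC}(\bar w, J_n)$ from Assumption (RSE \& CC). I would then apply the restricted sparse eigenvalue condition $\textrm{RSE}(|J_n|\log n, J_n, b)$ to derive the sparsity bound $|\hat J_n| = O_P(|J_n|\log n)$, mimicking the localized analogue of the argument in \citet{BCH14}. Assumption (MS), specifically \eqref{eq:vic}, then ensures that indices $k \in J_{0,n}$ have population coefficients $|\gamma_0^{(k)}(J_n,b)| \gg \sqrt{|J_n|\log p_n/(ng)}$, so that the $\ell_\infty$ consequence of the $\ell_1$-bound forces $J_{0,n} \subseteq \hat J_n$ with probability tending to one. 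Coefficients of covariates in $J_n \setminus J_{0,n}$, controlled by \eqref{eq:vuc}, are of the same order as the Lasso error, so whether they are selected or not is immaterial for the second stage.

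For the oracle reduction I would write $\hat\tau_h(\hat J_n) - \hat\tau_h(J_n)$ through the normal equations of the two post-Lasso regressions. The invertibility clause and $\textrm{RSE}(|J_n|\log n, J_n, h)$ part of (RSE \& CC) guarantee that the relevant kernel-weighted Gram matrices over the random set $\hat J_n$ are well conditioned with high probability. The difference is driven by indices in the symmetric difference $\hat J_n \triangle J_n$; these lie either in $J_n \setminus J_{0,n}$, where their population coefficients are $O(\sqrt{|J_n|\log p_n/(ng)})$ by \eqref{eq:vuc}, or outside $J_n$ entirely, where (AS) makes their contribution $O(\sqrt{\log p_n/(nh)})$. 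Summing over at most $O(|J_n|\log n)$ such indices and invoking the second line of (MS) produces the desired $o_P((nh)^{-1/2})$ bound. This is the main obstacle: coordinating (AS), (MS), (BW) and two distinct RSE conditions (one at bandwidth $b$, one at bandwidth $h$) simultaneously, so that a selection result proved at bandwidth $b$ transfers to the RD regression at bandwidth $h$; the assumption $b \asymp h$ from (BW) is what makes the bias-constant $\gamma_n$, defined without reference to a bandwidth, close to both $\gamma_0(J_n,b)$ and $\gamma_0(J_n,h)$ in the relevant norm.

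The final, most classical step is to show asymptotic normality of $\hat\tau_h(J_n)$. Writing this estimator in terms of $\tilde Y_i$, it reduces to a baseline local linear RD regression of $\tilde Y_i$ on $V_i$, with residual equal to $r_i(J_n,h)$ up to a term governed by $\gamma_n - \gamma_0(J_n,h)$, which is negligible by Assumption (MS) combined with the smoothness of $\mu_{ZZ}$ and $\mu_{ZY}$ from (D). A third-order Taylor expansion of $\mu_{\tilde Y \pm}(x)$, using the third-derivative bounds in (D) restricted to the target set, produces the bias formula \eqref{eq:bias_formula}. A Lyapunov CLT applied to the kernel-weighted sum of residuals — with the $(2+\delta)$-moment bound \eqref{eq:ubound1} verifying the Lyapunov condition and the equicontinuity conditions \eqref{eq:equicont1}--\eqref{eq:equicont2} together with (K) producing convergence of the variance — yields \eqref{eq:var_formula}, completing the derivation of \eqref{eq:main_result}.
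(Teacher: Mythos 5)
Your high-level architecture matches the strategy the paper announces in its overview, but the actual proof does not pass through the infeasible estimator $\hat\tau_h(J_n)$ at all: it centers $\hat\tau_h(\hat J_n)$ at the \emph{population} coefficient $\theta_0^{(2)}(J_n,h)$, uses a Frisch--Waugh--Lovell projection onto the selected covariates, and reduces everything to showing that the single contamination term $\frac{1}{n}\mathbf{V}^\top\mathbf{K}_h^{1/2}\mathbf{M}_n(\hat J_n)\mathbf{K}_h^{1/2}\mathbf{Z}\gamma_{0,n}$ is $o_P((nh)^{-1/2})$. Crucially, the quality of the model selection enters only through a prediction-error criterion ($\min(B_n,C_n)$ in Assumption (CMS)), not through any resemblance between $\hat J_n$ and $J_n$; the containment $J_{0,n}\subseteq\hat J_n$ is used solely to bound $C_n(J_{0,n})$. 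Your route of comparing $\hat\tau_h(\hat J_n)$ with $\hat\tau_h(J_n)$ coordinatewise over the symmetric difference is therefore a genuinely different decomposition, and as sketched it does not deliver the required rate.

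The concrete gaps are these. First, you omit the reparametrization $\tilde Z_i=Z_i-M_n^\top V_i$, which forces $\mu_{\tilde Z}(0)=\mu_{\tilde Z}'(0)=0$ and hence $\sup_a\|\frac{1}{n}\tilde{\mathbf{Z}}^\top\mathbf{K}_h\mathbf{V}_{\cdot a}\|_\infty=O_P(\sqrt{\log p_n/(nh)})+O(h^2)$; without it the $V$--$Z$ cross-moments are $O(1)$ and the contamination term is not negligible. Second, your rate accounting for the spuriously selected covariates fails: bounding each index outside $J_n$ by the (AS) rate $O(\sqrt{\log p_n/(nh)})$ and summing over $O(|J_n|\log n)$ indices gives, after multiplying by $\sqrt{nh}$, a term of order $|J_n|\log n\sqrt{\log p_n}$, which does not vanish under (BW). The paper instead pairs the $\ell_1$ error of the post-Lasso coefficients (obtained from the RSE condition and the prediction-norm bound $\frac{1}{\sqrt n}\|\mathbf{K}_h^{1/2}(\mathbf{V}\ \mathbf{Z})\hat\alpha_n\|_2\le\rho_n(|\hat J_n|)+\zeta_n$) with the near-orthogonality bound above, yielding the product rate $(\sqrt{\log p_n}+\sqrt{nh^5})(|J_n|+|\hat J_n|)(\frac{1}{\sqrt{nh}}+\delta_n)$ that (BW) does control. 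Third, you do not address why the bias remainder is $o(|J_n|^{1/2})$ rather than $o(1)$: the covariates' contribution to the curvature term is a sum of $|J_n|$ products, and the paper needs the separate argument that $\|\check\beta_n-\gamma_n\|_2\to0$ (via $\mu_++\mu_-=0$ and a Cauchy--Schwarz step) to obtain \eqref{eq:bias_formula}; likewise $\gamma_0(J_n,h)=\gamma_n+O(h)$ is derived from explicit least-squares algebra, not from (MS) as you suggest. Your first-stage analysis (the $\ell_1$ bound via the compatibility condition, the sparsity bound $|\hat J_n|=O_P(|J_n|)$, and $\IP(J_{0,n}\subseteq\hat J_n)\to1$ from \eqref{eq:vic}) does match Lemma \ref{lem:reg_error} and Theorems \ref{thm:my_sparsity} and \ref{thm:LassoCn}.
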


The theorem's proof is given in Online Appendix \ref{app:overview}. The following remarks discuss its implications and possible extensions.

\begin{remark}[Asymptotic Bias]\normalfont
\label{rem:asymptotic_bias}
	The first term  in the expansion of the asymptotic bias $\mathcal{B}_n $
	in~\eqref{eq:bias_formula} is proportional to
	$$\mu_{\tilde Y+}''-\mu_{\tilde Y-}'' = (\mu_{Y+}''-\mu_{Y-}'') - (\mu_{Z(J_n)+}''-\mu_{Z(J_n)-}'')^\top \gamma_n \equiv A - B_n.$$
	The term $A=\mu_{Y+}''-\mu_{Y-}''$ is the one we would obtain for the baseline estimator, and the term $B_n=(\mu_{Z(J_n)+}''-\mu_{Z(J_n)-}'')^\top \gamma_n$  captures the covariates' contribution to the bias. Depending on the curvature of the components of $\mu_{Z(J_n)}$, in theory our estimator's overall asymptotic bias could thus be larger or smaller than that of the baseline estimator; and since $B_n$  could potentially be of order $O(|J_n|^{1/2})$
	our estimator's overall asymptotic bias
	  could also vanish at the same or a slightly slower rate than that of the baseline estimator.
	  
	We would argue that one should not be too concerned though that including covariates could increase the  bias in an empirical context. In particular, the general notion of covariates being predetermined prior to treatment assignment is plausibly compatible with a strengthening of~\eqref{eq:tauz} that assumes that not only the levels but also the second derivatives of $\mu_Z$ are continuous around the cutoff. In this case, we have $B_n=0$, and the leading bias in~\eqref{eq:bias_formula} simplifies to that of the baseline RD estimator.
	
	We note that there are some pathological cases in which the first term in~\eqref{eq:bias_formula} would actually be of smaller order than the remainder term, and thus not be leading. The following Lemma provides conditions that rule this out.
	\begin{lemma}
		\label{lem:bias_formula}
		Suppose that, in addition to the assumptions of Theorem \ref{thm:main}, there is a constant $\eta>0$ such that
		\begin{equation}
			\label{eq:bias1}
			\left|\left(\mu_{Z(J_n)+}''-\mu_{Z(J_n)-}''\right)^\top \gamma_n-\left(\mu_{Y+}''-\mu_{Y-}''\right)\right|\geq\eta
		\end{equation}
		and, if $\|\mu_{Z(J_n)+}''-\mu_{Z(J_n)-}''\|_2\to\infty$, there is a constant $c>0$ such that
		\begin{equation}
			\label{eq:bias2}
			\|\mu_{Z(J_n)+}''-\mu_{Z(J_n)-}''\|_2\leq c\left|\left(\mu_{Z(J_n)+}''-\mu_{Z(J_n)-}''\right)^\top\gamma_n\right|.
		\end{equation}
		Then
		\begin{equation}
			\label{eq:bias_good_formula}
			\mathcal{B}_n=\frac{C_\mathcal{B}}{2}\left(\mu_{\tilde Y+}''-\mu_{\tilde Y-}''\right)(1+o\left(1\right)).
		\end{equation}
	\end{lemma}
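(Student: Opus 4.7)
The plan is to sharpen the additive remainder $o(|J_n|^{1/2})$ in Theorem~\ref{thm:main}'s bias expansion~\eqref{eq:bias_formula} into one that scales with $\|v_n\|_2$, where $v_n:=\mu_{Z(J_n)+}''-\mu_{Z(J_n)-}''$, and then use hypotheses~\eqref{eq:bias1}--\eqref{eq:bias2} to show that the refined remainder is of smaller order than the leading term $\mu_{\tilde Y+}''-\mu_{\tilde Y-}''$. The crude $o(|J_n|^{1/2})$ rate in~\eqref{eq:bias_formula} ultimately originates from a Cauchy--Schwarz step of the form $|v_n^\top(\gamma_0(J_n,h)-\gamma_n)|\leq\|v_n\|_2\cdot\|\gamma_0(J_n,h)-\gamma_n\|_2$ followed by the a priori bound $\|v_n\|_2=O(|J_n|^{1/2})$ inherited from Assumption~(D). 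Revisiting that step and leaving $\|v_n\|_2$ symbolic, while combining it with the remaining error contributions that are genuinely $o(1)$ (namely the empirical process and model-selection terms controlled by Assumptions (RSE \& CC), (MS) and (CTB)), yields the refined bound
\[
\mathcal{B}_n=\tfrac{C_\mathcal{B}}{2}\bigl(\mu_{\tilde Y+}''-\mu_{\tilde Y-}''\bigr)+o\bigl(1+\|v_n\|_2\bigr).
\]

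From here I would split into two cases. If $\|v_n\|_2=O(1)$, the remainder above is $o(1)$; writing $A=\mu_{Y+}''-\mu_{Y-}''$ and $B_n=v_n^\top\gamma_n$, hypothesis~\eqref{eq:bias1} gives $|\mu_{\tilde Y+}''-\mu_{\tilde Y-}''|=|A-B_n|\geq\eta>0$, so the leading term stays bounded away from zero and the remainder is $o(|\mu_{\tilde Y+}''-\mu_{\tilde Y-}''|)$, which is precisely~\eqref{eq:bias_good_formula}. If instead $\|v_n\|_2\to\infty$, hypothesis~\eqref{eq:bias2} yields $\|v_n\|_2\leq c|B_n|$, so $|B_n|\to\infty$, and since $A$ is a fixed bounded constant,
\[
|\mu_{\tilde Y+}''-\mu_{\tilde Y-}''|=|A-B_n|=|B_n|(1+o(1))\geq c^{-1}\|v_n\|_2(1+o(1)).
\]
Consequently the refined remainder $o(\|v_n\|_2)$ is again $o(|\mu_{\tilde Y+}''-\mu_{\tilde Y-}''|)$, and~\eqref{eq:bias_good_formula} follows. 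Note that~\eqref{eq:bias1} is automatic in this second case since $|A-B_n|\to\infty$, but~\eqref{eq:bias2} is genuinely needed to ensure that the growth of $v_n$ is not wasted on a direction nearly orthogonal to $\gamma_n$.

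The main obstacle is the first step: extracting the sharper rate $o(1+\|v_n\|_2)$ from the proof of Theorem~\ref{thm:main}. The subsequent case analysis is essentially bookkeeping, but the sharpening requires going back into the bias derivation, separating the contributions that genuinely carry a factor of $\|v_n\|_2$ (those arising from the difference $\gamma_0(J_n,h)-\gamma_n$ and from higher-order Taylor terms in $\mu_Z$) from those that are uniformly $o(1)$, and checking that the model-selection error between $\hat J_n$ and $J_n$ — controlled through Assumptions (MS) and (RSE \& CC) — contributes only to the $o(1)$ component. Once this decomposition is in place, the two-case dichotomy closes the proof with no further work.
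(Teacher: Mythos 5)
Your proposal is correct and follows essentially the same route as the paper: the paper's proof likewise isolates the cross term $\sum_{k\in J_n}(\mu_{Z^{(k)}+}''-\mu_{Z^{(k)}-}'')(\check{\beta}_n^{(k)}-\gamma_n^{(k)})$ via Cauchy--Schwarz, invokes $\|\check{\beta}_n-\gamma_n\|_2\to0$ (already established in the proof of Theorem~\ref{thm:main}), and uses \eqref{eq:bias1}--\eqref{eq:bias2} to show that $\|\mu_{Z(J_n)+}''-\mu_{Z(J_n)-}''\|_2/|\mu_{\tilde Y+}''-\mu_{\tilde Y-}''|=O(1)$, which is exactly your $o(1+\|v_n\|_2)$ refinement plus two-case dichotomy. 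One small correction: the remaining $o(1)+O(|J_n|h^2)+O(|J_n|^{1/2}h)$ terms are deterministic Taylor-expansion remainders from Lemma~\ref{lem:bias}, absorbed using Assumptions (BW) and (D); they are not empirical-process or model-selection contributions, since $\mathcal{B}_n$ is a population quantity.
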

 Requirements \eqref{eq:bias1} and \eqref{eq:bias2} are the natural extensions of the standard assumption that $\mu_{Y+}''-\mu_{Y-}''\neq0$ to our setting. They exclude pathological cases in which the covariates' contribution to the bias $B_n$ and the ``no covariates'' component $A$ happen to  cancel each other asymptotically, and cases in which some of the components of the vector  $\mu_{Z(J_n)+}''-\mu_{Z(J_n)-}''$ are large, but happen to exactly offset each other such that their linear combination $B_n$ vanishes.
\end{remark}

\begin{remark}[Asymptotic Variance]\normalfont
	\label{rem:asymp_variance}
	The leading term of our estimator's asymptotic variance $\mathcal{S}_n^2$ converges to a positive constant under our assumptions, and 	is guaranteed not to exceed that of the baseline estimator from~\eqref{eq:Base}, or that of any estimator of the form in~\eqref{eq:fixedJ} that uses a strict subset of the the target set $J_n$. This can be seen by noting first that  it depends on the covariates only through the term
	$$\sigma^2_{\tilde Y+} +\sigma^2_{\tilde Y-} = \lim_{x\downarrow 0} \textrm{Var}(Y_i-Z_i(J_n)^{\top}\gamma_n|X_i=x)+\lim_{x\uparrow 0}\textrm{Var}(Y_i-Z_i(J_n)^{\top}\gamma_n|X_i=x);$$
	and second that $\gamma_n$ minimizes the function
	$$\gamma\mapsto \lim_{x\downarrow 0}\textrm{Var}(Y_i-Z_i(J_n)^{\top}\gamma|X_i=x)+\lim_{x\uparrow 0}\textnormal{Var}(Y_i-Z_i(J_n)^{\top}\gamma|X_i=x).$$
	This holds because the function is  quadratic in $\gamma$, has a positive
	definite Hessian matrix, and its Jacobian is set to zero by $\gamma_n=(\lim_{x\downarrow 0}\textrm{Var}(Z_i(J_n)|X_i=x)+\lim_{x\uparrow 0}\textnormal{Var}(Z_i(J_n|X_i=x))^{-1}(\lim_{x\downarrow 0}\textrm{Cov}(Y_i,Z_i(J_n)|X_i=x)+\lim_{x\uparrow 0}\textnormal{Cov}(Y_i,Z_i(J_n)|X_i=x))$, as defined above.\footnote{The remarks in \citet[][Section IV.B]{CCFT19} seem to suggest that this result should only hold under additional conditions, but this does not appear to be the case here.}
	
	At first glance, this result might seem to suggest that a sequence of larger target sets must always lead to a smaller asymptotic variance. The following Lemma  shows that this is not the case. Indeed, it shows the stronger statement that if there are two covariate sets $J_{1,n}, J_{2,n}\subseteq\{1,...,p_n\}$ that both satisfy our assumptions, the corresponding estimators of the form in~\eqref{eq:fixedJ} must have the same  asymptotic variance.

\begin{lemma}
\label{lem:optimal_variance}
Let $J_{1,n}, J_{2,n}\subseteq\{1,...,p_n\}$ be two sequences of covariate indices such that
$$\left\|\IE\left(K_h(X)\left(V_i^\top \quad Z_i(J_{1,n}\cup J_{2,n})^\top\right)^\top\left(V_i^\top \quad Z_i(J_{1,n}\cup J_{2,n})^\top\right)\right)^{-1}\right\|_2=O(1)$$
and let $|J_{j,n}|\log p_n/nh\to0$ for $j=1,2$. Suppose Assumptions (AS) and (TCS) hold with $J_n$ replaced by either $J_{1,n}$ or $J_{2,n}$, and let $\mathcal{S}_{j,n}^2$ be the asymptotic variance of $\hat\tau_h(J_{j,n})$ as in Theorem~\ref{thm:main}, for $j=1,2$. Then $\mathcal{S}_{1,n}^2-\mathcal{S}_{2,n}^2=o(1)$.
	\end{lemma}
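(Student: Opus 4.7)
The plan is to establish the lemma via an intermediate comparison with the variance $\mathcal{S}_{3,n}^2$ corresponding to the union set $J_{3,n}=J_{1,n}\cup J_{2,n}$: I will show $\mathcal{S}_{j,n}^2-\mathcal{S}_{3,n}^2=o(1)$ for $j\in\{1,2\}$ and then conclude by the triangle inequality. Define the variance functional $\mathcal{V}(\gamma;J)=\lim_{x\downarrow 0}\text{Var}(Y_i-Z_i(J)^\top\gamma|X_i=x)+\lim_{x\uparrow 0}\text{Var}(Y_i-Z_i(J)^\top\gamma|X_i=x)$. By Remark~\ref{rem:asymp_variance}, the leading part of $\mathcal{S}_{j,n}^2$ is proportional to $\min_\gamma\mathcal{V}(\gamma;J_{j,n})$, attained at the coefficient vector $\gamma_{j,n}$ defined in Theorem~\ref{thm:main}.

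One direction is trivial: padding $\gamma_{1,n}$ with zeros on $J_{3,n}\setminus J_{1,n}$ yields a feasible competitor for the $J_{3,n}$-minimization, so $\mathcal{V}(\gamma_{3,n};J_{3,n})\leq\mathcal{V}(\gamma_{1,n};J_{1,n})$. For the reverse inequality I would Taylor-expand the quadratic functional $\mathcal{V}(\cdot;J_{3,n})$ around the padded vector $\gamma_{1,n}^+\in\IR^{|J_{3,n}|}$. With Hessian $H_n=2(\sigma^2_{Z(J_{3,n})+}+\sigma^2_{Z(J_{3,n})-})$ and gradient $g_n=\nabla_\gamma\mathcal{V}(\gamma_{1,n}^+;J_{3,n})$, exact optimality of $\gamma_{3,n}$ gives
$$\mathcal{V}(\gamma_{1,n}^+;J_{3,n})-\mathcal{V}(\gamma_{3,n};J_{3,n})=\tfrac{1}{2}g_n^\top H_n^{-1}g_n.$$
The components of $g_n$ indexed by $k\in J_{1,n}$ vanish by the first-order conditions defining $\gamma_{1,n}$, while for $k\in J_{3,n}\setminus J_{1,n}$ they equal $-2(\sigma^2_{Z^{(k)}\tilde Y^{(1)}+}+\sigma^2_{Z^{(k)}\tilde Y^{(1)}-})$, with $\tilde Y_i^{(1)}=Y_i-Z_i(J_{1,n})^\top\gamma_{1,n}$. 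Invoking Assumption (AS) for $J_{1,n}$---together with (D) to pass from the local inner product to its cutoff limit, as explained below---gives $|g_{n,k}|=O(\sqrt{\log p_n/(nh)})$ uniformly in $k$, whence $\|g_n\|_2^2=O(|J_{3,n}|\log p_n/(nh))=o(1)$ by the lemma's hypothesis. The inverse Hessian satisfies $\|H_n^{-1}\|_{\text{op}}=O(1)$ by a Schur-complement reduction of the lemma's assumption on the joint $(V_i,Z_i(J_{3,n}))$ local second-moment matrix together with (D) and (TCS). This yields $\tfrac{1}{2}g_n^\top H_n^{-1}g_n=o(1)$ and hence $\mathcal{S}_{1,n}^2-\mathcal{S}_{3,n}^2=o(1)$; the symmetric argument for $j=2$ and the triangle inequality conclude the proof.

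The main technical obstacle is to pass from the localized bound in (AS), stated for $\IE(K_h(X_i)Z_i^{(k)}r_i(J_{1,n},h))$, to the cutoff-limit covariance that forms $g_{n,k}$. Near $X_i=0$ the local regression residual decomposes as $r_i(J_{1,n},h)=\tilde Y_i^{(1)}-V_i^\top\Delta_n+\text{remainder}$, where $\Delta_n$ absorbs the discrepancy between the local population coefficients $\theta_0(J_{1,n},h)$ and their cutoff limits. The orthogonality $\IE(K_h(X_i)V_i r_i(J_{1,n},h))=0$ removes the $V_i^\top\Delta_n$ contribution, and dominated convergence, justified by the smoothness in (D) and the boundedness in (TCS) and (CV), converts the kernel-weighted expectation into the one-sided covariances at zero, up to a term of order $o(\sqrt{\log p_n/(nh)})$. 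The uniformity across $k\in\{1,\ldots,p_n\}$ needed to then control $\|g_n\|_2^2$ is precisely what the uniform derivative bounds in (D) and the covariate variance bounds in (CV) provide.
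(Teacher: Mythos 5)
Your overall architecture matches the paper's: reduce to the nested case via the union set $J_{3,n}=J_{1,n}\cup J_{2,n}$ and the triangle inequality, then exploit Assumption (AS) together with the bound on the inverse joint second-moment matrix to show the two minima are close. Within the nested comparison, however, you take a genuinely different route. The paper never leaves the localized world: it works with $\mathcal{S}_n^2(J)=h^{-1}\IE\bigl(K(X_i/h)^2\xi(X_i/h)r_i(J,h)^2\bigr)$ directly, writes the difference of the two residuals as a linear form in $(\theta_0(J_{2,n},h)-\theta_0(J_{1,n},h),\gamma_0(J_{2,n},h)-\gamma_0(J_{1,n},h))$, and bounds that coefficient difference by least-squares algebra, applying (AS) to $\IE(K_h(X_i)Z_i(J_{1,n})r_i(J_{2,n},h))$ exactly in the form in which it is stated. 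You instead pass to the limiting variance functional $\mathcal{V}(\cdot;J)$ via \eqref{eq:var_formula} and use the exact quadratic identity $\mathcal{V}(\gamma_{1,n}^+)-\mathcal{V}(\gamma_{3,n})=\tfrac12 g_n^\top H_n^{-1}g_n$. That identity and the vanishing of $g_{n,k}$ for $k\in J_{1,n}$ are correct, and the resulting picture is arguably more transparent about \emph{why} the variances agree.

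The price of your route is the transport step you flag as ``the main technical obstacle,'' and this is where there is a genuine gap. Assumption (AS) controls $\IE\bigl(Z_i^{(k)}K_h(X_i)r_i(J_{1,n},h)\bigr)$, a kernel-weighted moment of the \emph{localized} residual; your $g_{n,k}$ is a sum of one-sided \emph{limit} covariances of $Z^{(k)}$ with $\tilde Y_i^{(1)}=Y_i-Z_i(J_{1,n})^\top\gamma_{1,n}$. Converting one into the other via the expansion \eqref{eq:kernel} introduces an $O(h)$ bias term per coordinate, and replacing $\gamma_0(J_{1,n},h)$ by its limit $\gamma_{1,n}$ introduces a further error of the form $\IE\bigl(Z_i^{(k)}Z_i(J_{1,n})^\top\,\big|\,X_i=0\pm\bigr)\bigl(\gamma_0(J_{1,n},h)-\gamma_{1,n}\bigr)$, which is at best $O(|J_{1,n}|^{1/2}h)$ under the uniform bounds in (D). Your claim that the conversion costs only $o(\sqrt{\log p_n/(nh)})$ per coordinate is therefore not justified, and after squaring and summing over the $|J_{3,n}\setminus J_{1,n}|$ coordinates of $g_n$ these errors contribute terms like $|J_{3,n}|h^2$ and $|J_{3,n}||J_{1,n}|h^2$, which are not controlled by the lemma's stated hypotheses ($|J_{j,n}|\log p_n/nh\to0$ plus (AS) and (TCS)); only the first of them is implied by the ambient Assumption (BW). The same transport issue affects your Schur-complement argument for $\|H_n^{-1}\|=O(1)$, since the lemma's assumption bounds the inverse of the finite-$h$ moment matrix, not of the limiting covariance. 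To close the proof along your lines you would need to either carry out the transport with explicit error accounting under additional rate conditions, or---more simply---follow the paper and keep the entire comparison at the level of the localized quantities, where (AS) applies with no conversion at all.
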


	We conjecture that even in a setting in which (AS) does not hold for any target set, our estimator continues to have the smallest asymptotic variance  among all linear adjustment estimators that only use a moderate (in some appropriate sense) number of the available covariates, given some suitable choice of $\lambda$. However, proving such a result would require developing a ``non-sparse'' theory for the Lasso, which is beyond the scope of this paper.

\end{remark}

\begin{remark}[Double Selection] \normalfont As an alternative to our proposed estimator, one could also consider a ``double selection''  procedure which, as in \citet{BCH14}, additionally includes covariates that are predictive for treatment status into the active set. That is, one could redefine the set $\hat{J}_n$ as $\hat{J}_n=\{k\in\{1,...,p_n\}:\tilde{\gamma}_n^{(k)}\neq 0 \textnormal{ or }\bar{\gamma}_n^{(k)}\neq 0\}$, where
	$$\bar{\gamma}_n=\argmin{\gamma\in \IR^{p_n}}\sum_{i=1}^nK_b(X_i)\left(T_i-\left(Z_i-\hat{\mu}_{Z,n}\right)^{\top}\gamma\right)^2+\lambda\sum_{j=k}^{p_n}\hat{w}_{n,k}|\gamma_k|,$$
	and $\tilde{\gamma}_n$ is as defined above. This change would not affect the properties of the final RD estimator, however, as under~\eqref{eq:tauz} the covariates are not informative about treatment status among units local to the cutoff (a conceptually similar phenomenon appears in randomized experiments).
	For analogous reasons, the large sample properties of our final RD estimator would also be the same as that of an alternative procedure that excludes the predictors $V_i$ from the model selection step (we choose to include them because that  substantially simplifies the algebra in some steps of our proofs).
\end{remark}

\begin{remark}[Role of the Lasso]
\label{rem:role_lasso}\normalfont
	Our assumptions do not imply that  the Lasso recovers the target set with very high probability, in the sense that $\IP(\hat{J}_n=J_n)\to1$. Existing results suggest that such a property could only be established under substantially stronger conditions, including exact rather than approximate sparsity and a so-called ``$\beta$-min'' condition that imposes a substantial lower bound on the values of all non-zero coefficients (cf. Section 2.6 in \citet{vdGB11} and the references therein).	
	Consistent estimation of the target set is also not required for our results. In fact, similarly as in \citet{BC13}, we only require that the covariates obtained via the model selection step lead to regression residuals that are ``similar'' to those corresponding to the target set. Precise requirements are given in Assumption (CMS) in the Online Appendix, and we show in the proof of Theorem \ref{thm:main} that these are satisfied by the Lasso. Any model selection procedure that satisfies Assumption (CMS) could be used instead of the Lasso in our procedure as well. The simplest example being the \emph{ad-hoc} selection of a pre-defined set of covariates in which case $p_n\equiv p_0$ and $\hat{J}_n\equiv\{1,...,p_0\}$.
\end{remark}

\section{Numerical Results}
\label{sec:numerical_results}
\subsection{Implementation}
\label{sec:implementation}

In order to implement our proposed method in practice, one has to choose the initial bandwidths $b$, the Lasso penalty $\lambda$, and the final bandwidth $h$. As discussed in Section~2, the latter can in principle be chosen by applying any approach deemed suitable for settings with low-dimensional covariates to the generated data set  $\{(Y_i,X_i,Z_i(\hat J_n)), i=1,\ldots,n\}$, such as those proposed by  \citet{CCFT19} or  \citet{AK18}. We conduct our simulations using both frameworks, with results based on the methods in  \citet{AK18} reported in this section, and results based on the methods in \citet{CCFT19} reported in Online Appendix \ref{sec:rdrobust_empirics}.

The choice of $b$ and $\lambda$ is complicated by the fact that these quantities do not appear in the limiting distribution of our final RD estimator. Our heuristic recommendation is to use a method for bandwidth choice designed for settings without covariates, like the ones proposed in  \citet{IK12}, \citet{calonico2014robust} or \citet{AK18} to select $b$; and we focus on the method proposed in \citet{AK18} in this section. We also consider choosing $\lambda$ via adaptions of three methods for non-localized Lasso estimators to our RD setting: standard cross-validation, the plug-in procedure of \citet{BCH14}, and a recently proposed bootstrap-based method by  \citet{VL20}. The three procedures are described formally in Online Appendix \ref{sec:app_implementation}, and we refer to them by the acronyms (CV), (BCH) and (LV), respectively, below.  
Our computations in this section use the {\tt R} packages  {\tt glmnet} for implementing Lasso-based covariate selection, and {\tt RDHonest} for bandwidth selection, standard errors, and confidence intervals.

\subsection{Simulations}
\label{sec:simulations}
For the simulations, we consider the following DGP, which is a variation of ``Model~2'' in \citet{CCFT19} and corresponds to an RD setting with $p=200$ covariates and parameter of interest $\tau_Y=0.02$:
\begin{align*}
X&\sim 2\cdot\textrm{beta}(2,4)-1, \quad T=\Ind(X\geq0), \quad (\epsilon,  Z^\top)^\top\sim \mathcal{N}\left(\mathbf{0},\Sigma\right), \quad \Sigma=\begin{pmatrix}
	\sigma_ {\epsilon}^2 & v^\top \\ v & \sigma_Z^2I_{200}
\end{pmatrix},\\
Y&=\epsilon+\left\{\begin{array}{lll}
0.36+0.96\cdot X+5.47\cdot X^2 && \\ 
\quad\quad\quad\quad+15.28\cdot x^3+15.87\cdot X^4+5.14\cdot X^5+0.22\cdot Z^\top\alpha,& \textrm{if }T=0, \\
0.38+0.62\cdot X-2.84\cdot X^2 && \\
\quad\quad\quad\quad+8.42\cdot X^3-10.24\cdot X^4+4.31\cdot X^5+0.28\cdot Z^\top \alpha,& \textrm{if }T=1,\end{array}\right.
\end{align*}
with $\sigma_{\epsilon}^2=0.1295^2$, $\sigma_Z^2=0.1353^2$, $I_{200}$ denoting the $200\times200$ identity matrix, $v\in\IR^{200}$ a vector whose $k$th component is equal to $v_k=0.8\sqrt{6}\sigma_{\epsilon}^2/\pi k$, and  $\alpha\in\IR^{200}$ a vector whose $k$th component is equal to $\alpha_k=2/k^2$.  This choice of $\alpha$ implies that the $k$th covariate $Z^{(k)}$ becomes less important for variance reduction as $k$ increases. 
We  consider the sample size $n=1,000$ and set the number of Monte Carlo replications to $10,000$. 

We report results for our post-Lasso procedure with the penalty parameter $\lambda$ selected  via either of (CV), (BCH) and (VL). For comparison, we also consider linear adjustment estimators that use different fixed subsets of the covariates, namely either no covariates, only the first covariate, only the first 10 covariates, only the first 30 covariates, only the first 50 covariates, or only the ``optimal'' linear combination of covariates  $Z^\top\alpha$. Note that the covariates are ordered by their ``importance'' in our DGP, the procedures that use a fixed non-zero number of covariates are  infeasible, as in practice the econometrician would generally not know which covariates are the most important ones. The ``optimal covariate'' estimator is not feasible as well, as the vector $\alpha$ is generally unknown in applications. These estimators serve as (oracle) performance benchmarks in our simulation study.
 
  \begin{table}[!t]
 	\centering\caption{Simulation Results}\label{tab1}\small
 	\begin{tabular}{l|cccccc}
 		\toprule
 		Covariate Selection &  \#Cov. & Bias  & SD    & Avg.\ SE &  CI Length & Coverage           \\
 		\toprule
 		Lasso (CV)   & 9.5 & 0.0054 & 0.0464 & 0.0325 & 0.1559 & 87.7 \\ 
 		Lasso (BCH)  & 1.2 & 0.0051 & 0.0512 & 0.0488 & 0.2189 & 96.1 \\ 
 		Lasso (LV)   & 1.9 & 0.0047 & 0.0482 & 0.0445 & 0.2005 & 95.6 \\ 
 		\midrule
 		Fixed: No Covariates     & -- & 0.0070 & 0.0744 & 0.0735 & 0.3278 & 96.7 \\ 
 		Fixed: Most Important Covariate Only        & -- & 0.0050 & 0.0516 & 0.0499 & 0.2236 & 96.3 \\ 
 		Fixed: 10 Most Important Covariates   & -- & 0.0044 & 0.0427 & 0.0364 & 0.1670 & 94.6 \\ 
 		Fixed: 30 Most Important Covariates  & -- & 0.0055 & 0.0447 & 0.0282 & 0.1390 & 87.5 \\ 
 		Fixed:  50 Most Important Covariates  & -- & 0.0063 & 0.0487 & 0.0213 & 0.1176 & 77.0 \\ 
 		Fixed: Optimal Covariate & -- & 0.0042 & 0.0441 & 0.0424 & 0.1900 & 96.2 \\ 
 		\bottomrule
 	\end{tabular}
 
 	\raggedright \footnotesize Results based on $10,000$ Monte Carlo replications. For each estimator, the table shows shows average number of selected covariates (\#Cov.), the bias (Bias), the standard deviation (SD), the average value of the final estimator's standard error (SE), the average length of the corresponding confidence interval for the parameter of interest (CI Length), and the share of simulation runs in which the respective confidence interval covered the true parameter value (Coverage). 	
 \end{table}

  Our simulation results are summarized in Table~\ref{tab1}. Regarding our proposed procedures, we see that choosing the Lasso penalty via~(CV) leads to substantially more covariates being selected relative to~(BCH) or~(LV), with the latter being roughly similar. All Lasso-based estimators have similarly low bias and similar empirical standard deviations; and the latter are both lower than that of the ``no covariates'' baseline, and close to that of the ``optimal covariate'' oracle estimator. Standard errors and CI coverage are accurate with (BCH) and (LV). However, for (CV) the standard errors notably underestimate the true standard deviation, which leads to slight CI undercoverage. This phenomenon seems to occur because the slightly larger number of covariates that (CV) tends to select already leads to overfitting in the post-Lasso stage.
  
  A similar effect occurs with the linear adjustment estimator that uses fixed sets of the covariates. While the performance of this estimator is good in our simulations if only the single most important covariate is used, with the 10, 30 or 50 most important covariates we see a progressively severe downward bias in the standard error, with corresponding CI undercoverage.
   
  Overall, the simulation results are in line with our asymptotic theory, and show that our procedures can obtain near-oracle performance in practice. They also highlight the need for working with a small number of covariates to obtain reliable inference, and thus the need for covariate selection even if the number of available covariates is only moderate relative to the sample size.

\subsection{Empirical Application}
\label{subsec:application}

In this section, we apply our methodology to data on Austrian workers from \citet{CCW07}, to whom we refer for an extensive description of its construction. During the sample period, workers are eligible for severance payments when losing their job if they have at least 36 months of job tenure at the time of separation.
 One part of the analysis in \citet{CCW07} concerns the question whether severance payments lead to higher wages in future jobs (by enabling workers to search longer for a new position, and thus find better matches). We use our method to reanalyze this question, taking previous job tenure as the  running variable, with a cutoff at 36 months, and the difference in log wages between old and new jobs as the outcome. The data include a large number of covariates containing information about workers' socio-demographic characteristics and the nature of their employment. We select 60 of these covariates, and  split them into a basic and an extended set as follows:
\begin{description}
	\item[Basic Covariates:] Gender, marital status, Austrian nationality, ``blue collar'' occupation, age and its square, $\log$ of previous wage and its square, indicators for month and year of job termination (38 covariates)
	\item[Additional Covariates:] Work experience and its square, number of employees in firm at job just lost, indicator of having a job before the one just lost, ``blue collar'' status at job prior to the one lost, indicator of having a prior spell of nonemployment, duration of last nonemployment, total number of spells of nonemployment in career, indicator of being recalled to the job before the one just lost, indicator for higher education, indicators for industry sector and region (22 covariates)
\end{description}
We also create further covariates by including all non-trivial interaction terms and trigonometric series transformations of all non-dummy variables, which are of the form $\sin(2\pi k\times\textrm{variable})$ and $\cos(2\pi k\times\textrm{variable})$ for $k=1,...,5$. This results in a total of 1,958 covariates. After removing all observations with at least one missing covariate value, data on 288,175 workers is available for the empirical analysis. We then compute an estimate of the RD parameter, with associated standard error and confidence intervals. In view of our simulation results, we only consider (BCH) for selecting the penalty parameter. We compare the result to those based on the baseline estimator or linear adjustment estimators that either use only the basic set of covariates, or both the basic and the additional set of covariates. 

The results are shown in Table~\ref{tab3}. All four methods produce similar point estimates close to zero, which is in line with the results in \citet{CCW07}. Our method's standard error is about $11\%$ lower than that of the baseline estimator without covariates, showing that the use of a small number of carefully selected covariates can meaningfully improve estimation accuracy. The corresponding confidence interval is also shorter by a similar factor. Importantly, (BCH) only selects three of the 1,958 covariates, all of which are interactions of the squared logarithm of the previous wage with some other variable. The results for the two remaining linear adjustment estimators show that controlling for a larger number of covariates does not yield meaningfully smaller standard errors, which suggest that approximate sparsity is a reasonable assumption in this context. In view of our simulation results above, there  also is concern that using 38 or 60 covariates could lead to downward biased standard errors for the corresponding linear adjustment estimators.

\begin{table}
\centering\caption{Estimation Results}\label{tab3}
\begin{tabular}{l|c|cc|ccc}
\toprule
            &&  &      & \multicolumn{3}{c}{Confidence Interval}    \\
   Covariate Selection         &\#Cov.     &  Estimator         &   SE     & Lower   & Upper  & Length \\
\toprule
Lasso (BCH)       & 3            & 0.0387 & 0.0207 & -0.0066 & 0.0839 & 0.0905 \\ 
\midrule
Fixed: None                  &0 & -0.0063 & 0.0233 & -0.0570 & 0.0443 & 0.1013 \\ 
Fixed: Basic only   &38         & 0.0116 & 0.0200 & -0.0322 & 0.0554 & 0.0876 \\ 
Fixed: Basic + Additional&60  & 0.0168 & 0.0203 & -0.0277 & 0.0612 & 0.0890 \\ 

\bottomrule
\end{tabular}

	\raggedright \footnotesize Results based on 288,175 observations. Column \#Cov. shows the number of selected covariates. Based on our simulation results above, there are concerns that SEs for estimators ``Fixed: Basic only'' and ``Fixed: Basic + Additional'' could be downward biased.
\end{table}

\section{Concluding Remarks}

Our results on sharp RD estimation with a potentially large number of covariates can be extended to other settings, such as fuzzy RD or regression kink (RK) designs. In fuzzy RD designs, for instance, units are assigned to treatment if their realization of the running variable falls above the threshold value, but they do not necessarily comply with this assignment. The conditional treatment probability hence jumps at the cutoff, but in contrast to sharp RD designs it generally does not jump from zero to one.  The parameter of interest in fuzzy RD designs is 
	$$
	\tau_{\textnormal{fuzzy}} = \frac{\tau_Y}{\tau_T}   \equiv \frac{\mu_{Y+} - \mu_{Y-} }{\mu_{T+} - \mu_{T-}},
	$$
	which is the ratio of two sharp RD estimands. It can be estimated by running our proposed procedure twice, once with $Y_i$ and once with $T_i$ as the dependent variable, and taking the ratio of the two estimates. If the assumptions from Section \ref{subsec:assumptions} also hold with $T_i$ replacing $Y_i$, and $\tau_T$ is bounded away from zero, the asymptotic normality of the resulting estimator of $\tau_{\textnormal{fuzzy}}$ simply follows from the delta method.

\bibliography{mybib}{}

\begin{thebibliography}{21}
\newcommand{\enquote}[1]{``#1''}
\expandafter\ifx\csname natexlab\endcsname\relax\def\natexlab#1{#1}\fi

\bibitem[\protect\citeauthoryear{Arai, Otsu, and Seo}{Arai
  et~al.}{2022}]{arai2021regression}
\textsc{Arai, Y., T.~Otsu, and M.~H. Seo} (2022): \enquote{Regression
  Discontinuity Design with Potentially Many Covariates,} \emph{Working Paper}.

\bibitem[\protect\citeauthoryear{Armstrong and Kolesár}{Armstrong and
  Kolesár}{2018}]{AK18}
\textsc{Armstrong, T.~B. and M.~Kolesár} (2018): \enquote{Optimal inference in
  a class of regression models,} \emph{Econometrica}, 86, 655--683.

\bibitem[\protect\citeauthoryear{Belloni and Chernozhukov}{Belloni and
  Chernozhukov}{2013}]{BC13}
\textsc{Belloni, A. and V.~Chernozhukov} (2013): \enquote{Least squares after
  model selection in high-dimensional sparse models,} \emph{Bernoulli}, 19,
  521--547.

\bibitem[\protect\citeauthoryear{Belloni, Chernozhukov, and Hansen}{Belloni
  et~al.}{2013}]{BCH14}
\textsc{Belloni, A., V.~Chernozhukov, and C.~Hansen} (2013): \enquote{Inference
  on treatment effects after selection among high-dimensional controls,}
  \emph{The Review of Economic Studies}, 81, 608--650.

\bibitem[\protect\citeauthoryear{Benjamini and Hochberg}{Benjamini and
  Hochberg}{1995}]{benjamini1995controlling}
\textsc{Benjamini, Y. and Y.~Hochberg} (1995): \enquote{Controlling the false
  discovery rate: a practical and powerful approach to multiple testing,}
  \emph{Journal of the Royal statistical society: series B (Methodological)},
  57, 289--300.

\bibitem[\protect\citeauthoryear{Bickel, Ritov, and Tsybakov}{Bickel
  et~al.}{2009}]{BRT09}
\textsc{Bickel, P.~J., Y.~Ritov, and A.~B. Tsybakov} (2009):
  \enquote{Simultaneous analysis of {L}asso and {D}antzig selector,}
  \emph{Annals of Statistics}, 37, 1705--1732.

\bibitem[\protect\citeauthoryear{Calonico, Cattaneo, Farrell, and
  Titiunik}{Calonico et~al.}{2019}]{CCFT19}
\textsc{Calonico, S., M.~D. Cattaneo, M.~H. Farrell, and R.~Titiunik} (2019):
  \enquote{Regression discontinuity designs using covariates,} \emph{The Review
  of Economics and Statistics}, 101, 442--451.

\bibitem[\protect\citeauthoryear{Calonico, Cattaneo, and Titiunik}{Calonico
  et~al.}{2014}]{calonico2014robust}
\textsc{Calonico, S., M.~D. Cattaneo, and R.~Titiunik} (2014): \enquote{Robust
  nonparametric confidence intervals for regression-discontinuity designs,}
  \emph{Econometrica}, 82, 2295--2326.

\bibitem[\protect\citeauthoryear{Card, Chetty, and Weber}{Card
  et~al.}{2007}]{CCW07}
\textsc{Card, D., R.~Chetty, and A.~Weber} (2007): \enquote{Cash-on-Hand and
  Competing Models of Intertemporal Behavior: New Evidence from the Labor
  Market,} \emph{The Quarterly Journal of Economics}, 122.

\bibitem[\protect\citeauthoryear{Fan and Gijbels}{Fan and
  Gijbels}{1996}]{fan1996local}
\textsc{Fan, J. and I.~Gijbels} (1996): \emph{Local polynomial modelling and
  its applications}, Chapman \& Hall/CRC.

\bibitem[\protect\citeauthoryear{Frölich and Huber}{Frölich and
  Huber}{2019}]{froelich2019including}
\textsc{Frölich, M. and M.~Huber} (2019): \enquote{Including Covariates in the
  Regression Discontinuity Design,} \emph{Journal of Business \& Economic
  Statistics}, 37, 736--748.

\bibitem[\protect\citeauthoryear{Giné and Nickl}{Giné and Nickl}{2016}]{GN16}
\textsc{Giné, E. and R.~Nickl} (2016): \emph{Mathematical foundations of
  infinite-dimensional statistical models}, Cambridge series in statistical and
  probabilistic mathematics, New York: Cambridge University Press.

\bibitem[\protect\citeauthoryear{Hahn, Todd, and Van~der Klaauw}{Hahn
  et~al.}{2001}]{hahn2001identification}
\textsc{Hahn, J., P.~Todd, and W.~Van~der Klaauw} (2001):
  \enquote{Identification and Estimation of Treatment Effects with a
  Regression-Discontinuity Design,} \emph{Econometrica}, 69, 201--209.

\bibitem[\protect\citeauthoryear{Imbens and Kalyanaraman}{Imbens and
  Kalyanaraman}{2012}]{IK12}
\textsc{Imbens, G. and K.~Kalyanaraman} (2012): \enquote{Optimal bandwidth
  choice for the regression discontinuity estimator,} \emph{The Review of
  Economic Studies}, 79, 933--959.

\bibitem[\protect\citeauthoryear{Klenke}{Klenke}{2008}]{K08}
\textsc{Klenke, A.} (2008): \emph{Probability Theory}, Springer, London.

\bibitem[\protect\citeauthoryear{Lederer and Vogt}{Lederer and
  Vogt}{2021}]{VL20}
\textsc{Lederer, J. and M.~Vogt} (2021): \enquote{Estimating the Lasso's
  Effective Noise,} \emph{Journal of Machine Learning Research}, 22, 1--32.

\bibitem[\protect\citeauthoryear{Noack, Olma, and Rothe}{Noack
  et~al.}{2021}]{noack2021flexible}
\textsc{Noack, C., T.~Olma, and C.~Rothe} (2021): \enquote{Flexible Covariate
  Adjustments in Regression Discontinuity Designs,} \emph{Working Paper}.

\bibitem[\protect\citeauthoryear{Su, Ura, and Zhang}{Su
  et~al.}{2019}]{su2019non}
\textsc{Su, L., T.~Ura, and Y.~Zhang} (2019): \enquote{Non-separable models
  with high-dimensional data,} \emph{Journal of Econometrics}, 212, 646--677.

\bibitem[\protect\citeauthoryear{Tibshirani}{Tibshirani}{1996}]{T96}
\textsc{Tibshirani, R.} (1996): \enquote{Regression shrinkage and selection via
  the {L}asso,} \emph{Journal of the Royal Statistical Society. Series B
  (Methodological)}, 58, 267--288.

\bibitem[\protect\citeauthoryear{van~de Geer and Bühlmann}{van~de Geer and
  Bühlmann}{2011}]{vdGB11}
\textsc{van~de Geer, S. and P.~Bühlmann} (2011): \emph{Statistics for
  high-dimensional data}, Heidelberg: Springer.

\bibitem[\protect\citeauthoryear{Wager, Du, Taylor, and Tibshirani}{Wager
  et~al.}{2016}]{wager2016high}
\textsc{Wager, S., W.~Du, J.~Taylor, and R.~J. Tibshirani} (2016):
  \enquote{High-dimensional regression adjustments in randomized experiments,}
  \emph{Proceedings of the National Academy of Sciences}, 113, 12673--12678.

\end{thebibliography}

\newpage
\appendix

\section{Proofs}
\subsection{Notation and Overview}
\label{app:overview}
For subsets $J\subseteq\{1,...,p_n\}$ we denote by $|J|$ its size and for an arbitrary vector $a\in\IR^{p_n}$ we denote its restricted version by $a_J$, i.e., $a_J^{(i)}=a^{(i)}$ for $i\in J$ and $a_J^{(i)}=0$ for $i\notin J$, where $a^{(i)}$ refers to the $i$-th entry of $a$. The restricted version of $Z_i$ will be denoted by $Z_i(J)$. Furthermore,
\begin{align*}
\mathbf{Y}=&(Y_1,...,Y_n)^{\top},\quad \mathbf{K}_h=\textrm{diag}\left(h^{-1}K(X_1h^{-1}),...,h^{-1}K(X_nh^{-1})\right), \\
\mathbf{V}=&\begin{pmatrix}
1 & T_1 & X_1/h & T_1X_1/h \\ \vdots & \vdots & \vdots & \vdots \\ 1 & T_n & X_n/h & T_nX_n/h
\end{pmatrix},\quad \mathbf{Z}(J)=\begin{pmatrix}
Z_1(J)^{\top} \\ \vdots \\ Z_n(J)^{\top}
\end{pmatrix}, \\
\mathbf{r}_n(J_n,h)=&(r_1(J_n,h),...,r_n(J_n,h))^{\top},
\end{align*}
where $\textrm{diag}(v)$ denotes a diagonal matrix which diagonal is given by the vector $v$. In this notation, the (constrained to $J$) minimizer in the $\argmin{}$ of \eqref{eq:LS} is given by
$$\left(\hat{\theta}_n(J),\hat{\gamma}_n(J)\right)=\underset{\gamma_{J^c=0}}{\argmin{(\theta,\gamma)\in\IR^{4+p_n}}}\frac{1}{n}\left\|\mathbf{K}_h^{\frac{1}{2}}\left(\mathbf{Y}-\mathbf{V}\theta_0-\mathbf{Z}\gamma\right)\right\|^2$$
and $\hat{\tau}_n(J)$ is the second entry of $\hat{\theta}_n(J)$. Here $\mathbf{K}_h^{\frac{1}{2}}$ denotes a diagonal matrix whose diagonal equals the square roots of the corresponding diagonal elements of $\mathbf{K}_h$. Note that $\hat{\tau}_n=\hat{\tau}_n(\hat{J}_n)$ and $\hat{\theta}_n=\hat{\theta}_n(\hat{J}_n)$. Moreover, we can find an explicit formula for $\hat{\theta}_n(J)$:
\begin{align*}
\hat{\theta}_n(J)&=\left[\left(\mathbf{V}-\mathbf{Z}(J)\hat{\gamma}_{V,n}(J)\right)^{\top}\mathbf{K}_h\left(\mathbf{V}-\mathbf{Z}(J)\hat{\gamma}_{V,n}(J)\right)\right]^{-1} \\
&\quad\quad\quad\quad\times\left(\mathbf{V}-\mathbf{Z}(J)\hat{\gamma}_{V,n}(J)\right)^{\top}\mathbf{K}_h\left(\mathbf{Y}-\mathbf{Z}(J)\hat{\gamma}_n^*(J)\right), \\
\hat{\gamma}_n^*(J)&=\left(\mathbf{Z}(J)^{\top}\mathbf{K}_h\mathbf{Z}(J)\right)^{-1}\mathbf{Z}(J)^{\top}\mathbf{K}_h\mathbf{Y}, \quad \hat{\gamma}_{V,n}(J)=\left(\mathbf{Z}(J)^{\top}\mathbf{K}_h\mathbf{Z}(J)\right)^{-1}\mathbf{Z}(J)^{\top}\mathbf{K}_h\mathbf{V}.
\end{align*}
Note that $\hat{\gamma}_n^*(J)$ is the regression coefficient in a regression of $Y_i$ on $Z_i(J)$, and $\hat{\gamma}_{V,n}(J)$ is a matrix of regression coefficients from a regression of each component of the vector $V_i=(1,T_i,X_i/h,T_iX_i/h)^{\top}$ on $Z_i(J)$.

In the proof of Theorem \ref{thm:main} we will begin by generalizing the results from \citet{CCFT19} to a setting where there is a large (possibly larger than $n$) set of covariates available of which we choose  a subset which can be reasonably handled with the given number of observations in a data driven way. Therefore, we have to extend the considerations from \citet{CCFT19} to the case where the covariates are a random subset of a large covariate set which grows as the number of observations grows. While doing this, we do not specify the exact model selection algorithm but we study the asymptotic behavior of a general model selection procedure. In the proof of Theorem \ref{thm:main} we will formulate conditions that a general model selection procedure has to fulfill and we will show that the Lasso procedure which we introduced in the main text has these properties.

\begin{proof}[Proof of Theorem \ref{thm:main}]
We first show how we can reduce the situation to the case where   $\mu_Z$ is continuously differentiable. Recall that $\tilde{Z}_i=Z_i-M_n^\top V_i$ and define $\tilde{\mathbf{Z}}$ like $\mathbf{Z}$ but where $Z_i$ is replaced by $\tilde{Z}_i$. We can directly see that
\begin{align*}
\mu_{\tilde{Z}}(x)=\IE(\tilde{Z}_i|X_i=x)=&\mu_Z(x)-\mu_{Z-}-x\mu_{Z-}'-x\Ind(x\geq0)(\mu_{Z+}'-\mu_{Z-}')
\end{align*}
is differentiable with $\mu_{\tilde{Z}}(0)=\mu_{\tilde{Z}}'(0)=0$. We compare an estimator based on $V_i$ and $\tilde{Z}_i$,
$$\left(\check{\theta}_n,\check{\gamma}_n\right)=\argmin{(\theta,\gamma)}\sum_{i=1}^nK_h(X_i)\left(Y_i-V_i^{\top}\theta-\tilde{Z}_i^{\top}\gamma\right)^2,$$
with our estimator (note that it is no problem that we have here $V_i$ rather than $V_i$ in \eqref{eq:fixedJ} because only the second component of $\hat{\theta}_n$ will be of interest later):
\begin{align*}
\left(\hat{\theta}_n,\hat{\gamma}_n\right)=&\argmin{(\theta,\gamma)}\sum_{i=1}^nK_h(X_i)\left(Y_i-V_i^{\top}\theta-Z_i^{\top}\gamma\right)^2 \\
=&\argmin{(\theta,\gamma)}\sum_{i=1}^nK_h(X_i)\left(Y_i-V_i^{\top}\left(\theta+M_n\gamma\right)-\tilde{Z}_i^{\top}\gamma\right)^2,
\end{align*}
where all $\textrm{argmin}$ above are over the set of all $(\theta,\gamma)$ with $\gamma_{\hat{J}_n^c}=0$. Since both estimators are otherwise unconstrained, it is true that the optimal values of the two objective functions above are identical. From this we conclude that
$$\check{\gamma}_n=\hat{\gamma}_n\quad\textrm{and}\quad\check{\theta}_n=\hat{\theta}_n+M_n\hat{\gamma}_n.$$
If the optima are not unique, such solutions exist which we keep for the remainder of the proof.

In a completely analogous fashion we also compare the population quantities
$$\left(\check{\theta}_0(J_n,h),\check{\gamma}_0(J_n,h)\right)=\argmin{(\theta,\gamma)}\,\IE\left(K_h(X_i)\left(Y_i-V_i^{\top}\theta-\tilde{Z}_i^{\top}\gamma\right)^2\right)$$
and
$$\left(\theta_0(J_n,h),\gamma_0(J_n,h)\right)=\argmin{(\theta,\gamma)}\,\IE\left(K_h(X_i)\left(Y_i-V_i^{\top}\left(\theta+M_n\gamma\right)-\tilde{Z}_i^{\top}\gamma\right)^2\right),$$
where the $\textrm{argmin}$ are over the set of all $(\theta,\gamma)$ for which $\gamma_{J_n^c}=0$. Then, we obtain
$$\check{\gamma}_0(J_n,h)=\gamma_0(J_n,h)\quad\textrm{and}\quad\check{\theta}_0(J_n,h)=\theta_0(J_n,h)+M_n\gamma_0(J_n,h).$$
In particular, we read from these formulas
$$\check{\theta}_n^{(2)}(J_n,h)=\hat{\theta}_n^{(2)}(J_n,h)+\left(M_n\right)_{2\cdot}\hat{\gamma}_n(J_n,h)=\hat{\theta}_n^{(2)}(J_n,h).$$
We formulate now the general condition on a model selection procedure. In principle, we can distinguish two different types of such selection procedures: ones that (like the Lasso) generate parameter estimates $\tilde{\theta}_n$ and $\tilde{\gamma}_n$ and then put $\hat{J}_n=\{j:\tilde{\gamma}^{(j)}\neq0\}$; and ones that generate a set $\hat{J}_n$ directly. Depending on the type of the procedure at least one of the following quantities is well defined:
\begin{align}
B_n=&\frac{1}{n}\left\|\mathbf{K}_h^{\frac{1}{2}}\left(\mathbf{Y}-\mathbf{V}\check{\theta}_0(J_n,h)-\tilde{\mathbf{Z}}\tilde{\gamma}_n\right)\right\|_2^2 \nonumber \\
&\quad\quad\quad-\frac{1}{n}\left\|\mathbf{K}_h^{\frac{1}{2}}\left(\mathbf{Y}-\mathbf{V}\check{\theta}_0(J_n,h)-\tilde{\mathbf{Z}}\check{\gamma}_0(J_n,h)\right)\right\|_2^2, \label{eq:defB} \\
C_n(I_n)=&\frac{1}{n}\left\|\mathbf{K}_h^{\frac{1}{2}}\left(\mathbf{Y}-\mathbf{V}\check{\theta}_0(J_n,h)-\tilde{\mathbf{Z}}\left(\check{\gamma}_0(J_n,h)\right)_{\hat{J}_n\cap I_n}\right)\right\|_2^2 \nonumber \\
&\quad\quad\quad-\frac{1}{n}\left\|\mathbf{K}_h^{\frac{1}{2}}\left(\mathbf{Y}-\mathbf{V}\check{\theta}_0(J_n,h)-\tilde{\mathbf{Z}}\check{\gamma}_0(J_n,h)\right)\right\|_2^2, \label{eq:defC}
\end{align}
where $I_n$ is an arbitrary sequence of subsets of indices. If we use a procedure which does not output $\tilde{\gamma}_n$, we just put $B_n=\infty$. Both of the quantities $B_n$ and $C_n(I_n)$ can function as performance measures for model selection. Let $\delta_n$ be such that (recall the definition of $r_i(J_n,h)$ from \eqref{eq:optimal_regression}):
\begin{equation}
\label{eq:cc}
\sup_{k=1,...,p_n}\frac{1}{nh}\sum_{i=1}^nK\left(\frac{X_i}{h}\right)\tilde{Z}_i^{(k)}r_i(J_n,h)=O_P(\delta_n(h)).
\end{equation}
The assumption (CMS) below formulates precisely what is required of the model selection $\hat{J}_n$ in order to obtain asymptotic normality of the final RD estimator.

\smallskip

\noindent\textbf{Assumption (CMS):} (Conditions on Model Selection) 
\emph{$\tilde{\mathbf{Z}}(\hat{J}_n)\mathbf{K}_h\tilde{\mathbf{Z}}(\hat{J}_n)$ is almost surely invertible and RSE$(|\hat{J}_n|,J_n,h)$ holds for $\tilde{Z}_i$. Denote $\zeta_n=\sqrt{\max(0,\min(B_n,C_n(I_n)))}$ (with $Z_i$ replaced by $\tilde{Z}_i$ in the definition) for some subset $I_n\subseteq\{1,...,p_n\}$ and suppose that
\begin{align*}
\beta_{1,n}&=\frac{|\hat{J}_n|\log p_n}{nh}=o_P(1),\quad\beta_{2,n}=|\hat{J}_n|h^4=o_P(1), \\
\beta_{3,n}&=\left(\sqrt{\log p_n}+\sqrt{nh^5}\right)\left(|J_n|+|\hat{J}_n|\right)^{\frac{1}{2}}\zeta_n=o_P(1), \\
\alpha_n&=\left(\sqrt{\log p_n}+\sqrt{nh^5}\right)\left(|J_n|+|\hat{J}_n|\right)\left(\frac{1}{\sqrt{nh}}+\delta_n\right)=o_P(1), \\
&\sqrt{|\hat{J}_n|}\left(\sqrt{\beta_{1,n}}+\sqrt{\beta_{2,n}}\right)\left(\alpha_n+\beta_{3,n}\right)=o_P(1)
\end{align*}}

Suppose for the moment that (CMS) holds. Our notation from above yields (where $\mathcal{S}_n^2$ is defined in Theorem \ref{thm:GeneralAsymptoticNormality}):
\begin{align}
\sqrt{\frac{nh}{\mathcal{S}_n^2}}\left(\hat{\theta}_n^{(2)}-\check{\theta}_0^{(2)}(J_n,h)\right)=&\sqrt{\frac{nh}{\mathcal{S}_n^2}}\left(\check{\theta}_n^{(2)}-\check{\theta}_0^{(2)}(J_n,h)\right). \label{eq:check+err}
\end{align}
Hence, the asymptotics of the left hand side are determined by the asymptotics of the right hand side. Note that
$$\check{r}_i(J_n,h)=Y_i-V_i^{\top}\check{\theta}_n-\tilde{Z}_i^{\top}\check{\gamma}_n=Y_i-V_i^{\top}\hat{\theta}_n-Z_i^{\top}\hat{\gamma}_n=r_i(J_n,h).$$
Hence, all assumptions which we make on $r_i(J_n,h)$ carry over to $\check{r}_i(J_n,h)$. Moreover, $\mu_Z''=\mu_{\tilde{Z}}''$. Thus, all assumptions of Theorem \ref{thm:GeneralAsymptoticNormality} hold true and we may thus apply this theorem with $Z_i=\tilde{Z}_i$ to show that
$$\sqrt{\frac{nh}{\mathcal{S}_n^2}}\left(\check{\theta}_n^{(2)}-\check{\theta}_0(J_n,h)^{(2)}\right)\to\mathcal{N}(0,1).$$
We also see that all assumptions of Lemma \ref{lem:bias} hold for $Z_i=\tilde{Z}_i$ and can hence invoke this lemma to show that $\check{\theta}_0(J_n,h)^{(2)}=\tau+h^2\mathcal{B}_n$. This completes the asymptotic normality. We show now that \eqref{eq:bias_formula}, \eqref{eq:var_formula} and (CMS) hold.

\smallskip
\noindent\textbf{Discussion of Bias:}
Define for ease of notation
$$\check{\beta}_n=\IE\left(K_h(X_i)\tilde{Z}_i(J_n)\tilde{Z}_i(J_n)^\top\right)^{-1}\IE\left(K_h(X_i)\tilde{Z}_i(J_n)Y_i\right),$$
and use the formula for $\mathcal{B}_n$ from Lemma \ref{lem:bias} below. In order to prove \eqref{eq:bias_formula}, we have to show that
\begin{align*}
&\frac{C_\mathcal{B}}{2}\left(\mu_{Y+}''-\mu_{Y-}''-\sum_{k\in J_n}\left(\mu_{Z^{(k)}+}''-\mu_{Z^{(k)}-}''\right)\gamma_n^{(k)}\right)+o\left(|J_n|^{1/2}\right) \\
&\quad=\frac{C_{\mathcal{B}}}{2}\left(\mu_{Y+}''-\mu_{Y-}''-\sum_{k\in J_n}\left(\mu_{Z^{(k)}+}''-\mu_{Z^{(k)}-}''\right)\check{\beta}_n^{(k)}\right)+o(1)+O(|J_n|h^2)+O(|J_n|^{\frac{1}{2}}h) \\
\Leftrightarrow&\sum_{k\in J_n}\left(\mu_{Z^{(k)}+}''-\mu_{Z^{(k)}-}''\right)\left(\check{\beta}_n^{(k)}-\gamma_n^{(k)}\right)=o(1)+O(|J_n|h^2)+O(|J_n|^{\frac{1}{2}}h)+o\left(|J_n|^{1/2}\right).
\end{align*}
Since we assume $|J_n|h^2\to0$ in (BW), the dominating term on the right hand side is $o(|J_n|^{1/2})$. By the Cauchy-Schwarz Inequality we have
$$\left|\sum_{k\in J_n}\left(\mu_{Z^{(k)}+}''-\mu_{Z^{(k)}-}''\right)\left(\check{\beta}_n^{(k)}-\gamma_n^{(k)}\right)\right|\leq\left(\sum_{k\in J_n}\left(\mu_{Z^{(k)}+}''-\mu_{Z^{(k)}-}''\right)^2\right)^{\frac{1}{2}}\left\|\check{\beta}_n-\gamma_n\right\|_2.$$
From the boundedness assumptions in (D) we conclude that
$$\left(\sum_{k\in J_n}\left(\mu_{Z^{(k)}+}''-\mu_{Z^{(k)}-}''\right)^2\right)^{\frac{1}{2}}=O\left(|J_n|^{1/2}\right)$$
and hence we have left to prove that $\left\|\check{\beta}_n-\gamma_n\right\|_2\to0$. We have
\begin{align*}
\left\|\check{\beta}_n-\gamma_n\right\|_2=&\left\|\IE\left(K_h(X_i)\tilde{Z}_i(J_n)\tilde{Z}_i(J_n)^\top\right)^{-1}\IE\left(K_h(X_i)\tilde{Z}_i(J_n)\left(Y_i-\tilde{Z}_i(J_n)^\top\gamma_n\right)\right)\right\|_2 \\
=&O(1)\left\|\IE\left(K_h(X_i)\tilde{Z}_i(J_n)\left(Y_i-\tilde{Z}_i(J_n)^\top\gamma_n\right)\right)\right\|_2
\end{align*}
by assumption. Define
$$\mu(x)=\IE\left((\tilde{Z}_i(J_n)\left(Y_i-\tilde{Z}_i(J_n)^\top\gamma_n\right)\big|X_i=x\right).$$
It holds that
$$\mu_+=\sigma_{Z(J_n)Y+}^2-\sigma_{Z(J_n)+}^2\gamma_n\quad\textrm{and}\quad\mu_+=\sigma_{Z(J_n)Y+}^2-\sigma_{Z(J_n)-}^2,$$
which in turn implies $\mu_++\mu_-=0$. Using an argument as in \eqref{eq:kernel} and the bounded derivatives we obtain that each entry of $\mu(x)$ can be bounded in absolute value by $Ch$, where $C$ is a suitable constant. This yields
$$\left\|\IE\left(K_h(X_i)\tilde{Z}_i(J_n)\left(Y_i-\tilde{Z}_i(J_n)^\top\gamma_n\right)\right)\right\|_2^2\leq C^2|J_n|h^2\to0$$
by the assumptions on $h$ and $|J_n|$. This completes the proof of \eqref{eq:bias_formula}.

\smallskip

\noindent\textbf{Discussion of Variance:}
Recall that $\sigma_l^2,\sigma_r^2$ denote the left and right limits of $\IE(r_i(J_n,h)^2|X_i=x)$ at $x=0$, respectively, in (TCS). It can be computed that for $h\to0$
$$\mathcal{S}_n^2\to \frac{C_{\mathcal{S}}}{f_X(0)}\left(\sigma_l^2+\sigma_r^2\right).$$
Thus, in order to show that \eqref{eq:var_formula} holds, we have to prove that $\sigma_{\tilde Y+}^2\to\sigma_r^2$ and $\sigma_{\tilde Y-}^2\to\sigma_l^2$. In the beginning of the proof of Lemma \ref{lem:bias}, cf. \eqref{eq:begin_lem12}, we provide a formula for $\check{\theta}_0(J_n,h)$ (note that each $Z_i$ in Lemma \ref{lem:bias} has to be replaced by $\tilde{Z}_i$). By combining this with \eqref{eq:kernel_exp}, Lemma \ref{lem:help1} and the boundedness assumptions on the derivatives we obtain (recall that $\mu_{\tilde{Z}(J_n)}$ is continuous with $\mu_{\tilde{Z}(J_n)}(0)=0$)
$$\check{\theta}_0(J_n,h)\to\begin{pmatrix}
\mu_{Y-} & \tau & 0  & 0
\end{pmatrix}^\top.$$
We can obtain a formula for $\gamma_0(J_n,h)=\check{\gamma}_0(J_n,h)$ in the fashion of \eqref{eq:begin_lem12} simply by interchanging the roles of $V_i$ and $\tilde{Z}_i$. Doing this, we obtain $\gamma_0(J_n,h)=\gamma_n+O(h)$ uniformly. Using this and the relation between $\theta_0(J_n,h)$ and $\check{\theta}_0(J_n,h)$ we obtain (recall that $|J_n|h\to0$):
\begin{align*}
&\theta_0(J_n,h)-\begin{pmatrix}
\mu_{\tilde Y-} & \mu_{\tilde Y+}-\mu_{\tilde Y-} & 0 & 0
\end{pmatrix}^\top \\
=&\check{\theta}_0(J_n,h)-M_n\gamma_0(J_n,h)-\begin{pmatrix}
\mu_{Y-}-\mu_{Z-}^{\top}\gamma_n & \tau & 0 & 0
\end{pmatrix}^\top\to0.
\end{align*}
Using this convergence we obtain
\begin{align*}
&\IE\left(r(J_n,h)^2|X_i=0+\right)-\textrm{Var}(\tilde Y_i|X_i=0+) \\
=&\theta_0(J_n,h)^{\top}\IE(V_iV_i^{\top}|X_i=0+)\theta_0(J_n,h)-\left(\mu_{Y+}-\mu_{Z(J_n)+}^{\top}\gamma_n\right)^2 \\
&+\gamma_0(J_n,h)^{\top}\mu_{Z(J_n)Z(J_n)+}\gamma_0(J_n,h)-\gamma_n^{\top}\mu_{Z(J_n)Z(J_n)+}\gamma_n \\
&-2\IE(Y_iV_i^{\top}|X_i=0+)\theta_0(J_n,h)+2\mu_{Y+}^2-2\mu_{Z(J_n)+}^\top\mu_{Y+}\gamma_n \\
&-2\IE(Y_iZ_i(J_n)^{\top}|X_i=0+)\gamma_0(J_n,h)+2\mu_{Y(J_n)Z+}^\top\gamma_n \\
&+2\theta_0(J_n,h)^{\top}\IE(V_iZ_i(J_n)^{\top}|X_i=0+)\gamma_0(J_n,h)+2\gamma_n^{\top}\mu_{Z(J_n)+}\mu_{Z(J_n)+}^{\top}\gamma_n-2\mu_{Y+}\mu_{Z(J_n)+}^{\top}\gamma_n \\
\to&0.
\end{align*}
Thus, by definition of $\sigma_r^2$ in (TCS) we conclude that $\textrm{Var}(\tilde Y_i|X_i=0+)\to\sigma_r^2$. We can prove in a similar fashion that $\textrm{Var}(\tilde Y_i|X_i=0-)\to\sigma_l^2$ which concludes the proof of \eqref{eq:var_formula}.

\smallskip 
\noindent\textbf{Discussion of (CMS):} 
We show finally that the Lasso model selection fulfils the Assumption (CMS). We assume that $\tilde{\mathbf{Z}}(\hat{J}_n)\mathbf{K}_h\tilde{\mathbf{Z}}(\hat{J}_n)$ is almost surely invertible. By Theorem \ref{thm:my_sparsity} we have that $|\hat{J}_n|=O_P(|J_n|)$ which means in particular $\IP(|\hat{J}_n|\leq\log n\cdot |J_n|)\to1$. From this and $\textrm{RSE}(|J_n|\log n,J_n,h)$ for $\tilde{Z}_i$, we conclude that $\textrm{RSE}(|\hat{J}_n|,J_n,h)$ holds as well. From $|\hat{J}_n|=O(|J_n|)$ and $\delta_n=O(\sqrt{\log p_n/nh})$ which we get from Lemma \ref{lem:ZR} for $h$ we conclude that $\alpha_n,\beta_{1,n},\beta_{2,n}\to0$ by the assumptions on the rates. By employing Theorem \ref{thm:LassoCn} we find that also $\beta_{3,n}\to0$ and the last condition of (CMS) holds too.
\end{proof}

In the following we discuss the use of the Lasso as model selector in Theorem \ref{thm:main}, that is, we need to show that (CMS) is true. To this end, we use many standard arguments for the Lasso and the post Lasso. See, for example, \citet{vdGB11,BC13,BRT09}. Note that in the following our interest will lie solely in $\gamma$. Therefore, similarly to the discussion in the proof of Theorem \ref{thm:main}, we note that we may shift $Z_i$ by $\alpha'V_i$ for any $\alpha$ without changing the value of $\tilde{\gamma}_n$ (this is because $\theta$ is not penalized). In contrast to Theorem~\ref{thm:main} we shift for the theoretical analysis in a way such that $\IE(K_b(X_i)Z_i)=0$. This makes some notation simpler. However, for computational stability it might be useful to centralize the covariates by their empirical mean. This practice is hence not problematic. In order to understand well what the Lasso estimator is doing we recall in particular the notation in \eqref{eq:optimal_regression}. Note moreover that the bandwidth $b$ is different from $h$. Recall the definition of $\delta_n(h)$ in \eqref{eq:cc}. Under approximate sparsity, we can prove an exact rate.
\begin{lemma}
\label{lem:ZR}
Let (CTB, \ref{eq:B4}) and (AS) hold. Let $p_n\to\infty$, $b\to0$ and $\log p_n/nb\to0$. Then, for $C>0$ large enough
\begin{align*}
&\IP\left(\sup_{k=1,...,p_n}\left|\frac{1}{nb}\sum_{i=1}^nZ_i^{(k)}K\left(\frac{X_i}{b}\right)r_i(J_n,b)\right|>C\sqrt{\frac{\log p_n}{nb}}\right)\to0.
\end{align*}
Particularly, if in addition, (TCS \eqref{eq:equicont1}, \eqref{eq:equicont2}) and (D conditions on $\mu_Z$ and $\mu_Z'$) hold for $h$ we have
\begin{equation}
\label{eq:deltan}
\delta_n(b)=O\left(\sqrt{\frac{\log p_n}{nb}}\right).
\end{equation}
If, moreover, all of (CTB) and $\IE(K_b(X_i)Z_i)=0$ hold (but possibly not (TCS)), then
\begin{align*}
&\IP\left(\sup_{k=1,...,p_n}\left|\frac{1}{nb}\sum_{i=1}^n\hat{\omega}_{n,k}^{-1}Z_i^{(k)}K\left(\frac{X_i}{b}\right)r_i(J_n,b)\right|>C\sqrt{\frac{\log p_n}{nb}}\right)\to0.
\end{align*}
\end{lemma}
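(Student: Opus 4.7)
The plan is to establish the first inequality via Bernstein's inequality combined with a union bound over the $p_n$ coordinates, and then to deduce the $\delta_n(b)$ claim and the weighted version as bookkeeping corollaries. Write $W_i^{(k)} = Z_i^{(k)} K(X_i/b) r_i(J_n, b)/b$, so that the quantity inside the supremum is $n^{-1}\sum_{i=1}^n W_i^{(k)}$. Assumption (AS) immediately gives $\max_k |\IE W_i^{(k)}| = O(\sqrt{\log p_n/(nb)})$, which already hits the target rate; it therefore suffices to control the centered sum $n^{-1}\sum_i(W_i^{(k)} - \IE W_i^{(k)})$ uniformly in $k$.

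The key step is a Bernstein-type moment bound on $W_i^{(k)}$. Using the boundedness and compact support of $K$ from (K), one writes $K(u)^m \leq \|K\|_\infty^{m-1}K(u)$; substituting $u = x/b$ then yields
\begin{equation*}
\IE|W_i^{(k)}|^m \;\leq\; \frac{\|K\|_\infty^{m-1}}{b^{m-1}}\int K(u)(1+|u|^m)\,\mu_{k,m}^{(r)}(ub)\, f_X(ub)\,du \;\leq\; \frac{m!}{2}\,\tilde\sigma^2\,\tilde c^{\,m-2},
\end{equation*}
with $\tilde\sigma^2 \asymp 1/b$ and $\tilde c \asymp 1/b$ by (CTB, \ref{eq:B4}). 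Plugging this into Bernstein's inequality (Proposition \ref{prop:bernstein}) at deviation level $t_n = C\sqrt{\log p_n/(nb)}$ and using $\log p_n/(nb)\to 0$ so that the sub-exponential term $\tilde c\,t_n$ becomes $o(1)$, the per-coordinate tail is at most $2\exp(-c_0 C^2 \log p_n) = 2\,p_n^{-c_0 C^2}$ for some $c_0>0$ depending only on $\sigma_2^2$, $c_2$, and $\|K\|_\infty$; a union bound over $k\in\{1,\ldots,p_n\}$ then yields total probability $\leq 2\,p_n^{1 - c_0 C^2}$, which vanishes once $C$ is chosen large enough.

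For the $\delta_n(b)$ bound, write $\tilde Z_i^{(k)} = Z_i^{(k)} - \sum_{j=1}^4 (M_n)_{jk} V_i^{(j)}$ and decompose the sum accordingly: under the (D) assumptions on $\mu_Z$ and $\mu_Z'$ the entries of $M_n$ are uniformly bounded in $k$, the $Z_i$-part is controlled by the first claim, and each of the four $V$-terms has vanishing population expectation by the first-order conditions defining $r_i(J_n,b)$ and is bounded by a fixed-dimension Bernstein argument, giving $O_P(1/\sqrt{nb})$, which is dominated. For the weighted version, I would invoke Lemma \ref{lem:weights} to obtain bounds $w^{(l)} \leq \hat\omega_{n,k} \leq w^{(u)}$ uniformly in $k$ on an event of probability tending to one (this is where the remaining parts of (CTB) and the centering $\IE(K_b(X_i)Z_i)=0$ are used), and then factor the deterministic constant $1/w^{(l)}$ outside the supremum and re-apply the first claim. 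The main obstacle is getting the correct scaling in $b$: (CTB, \ref{eq:B4}) only carries a single power of $K$, so the power-reduction $K^m \leq \|K\|_\infty^{m-1}K$ is essential to convert the provided integral bound into the Bernstein scaling $\tilde\sigma^2,\tilde c = O(1/b)$ that (just barely, under $\log p_n/(nb)\to0$) yields the target rate.
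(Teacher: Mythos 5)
Your proposal follows essentially the same route as the paper: the centered sum is controlled by a localized Bernstein inequality (the paper's Lemma \ref{lem:expbound}) combined with a union bound over $k$, the expectation term is absorbed at the target rate via (AS), the power reduction $K(u)^m\leq\|K\|_\infty^{m-1}K(u)$ is exactly how \eqref{eq:B4} is converted into the required moment scaling, and the weighted version is handled by restricting to the high-probability event $\min_k\hat\omega_{n,k}\geq w^{(l)}$ from Lemma \ref{lem:weights} and absorbing $1/w^{(l)}$ into the constant. Two small points. First, your remark that ``$\tilde c\,t_n$ becomes $o(1)$'' should read that $\tilde c\,t_n=o(\tilde\sigma^2)$ (both scale like $1/b$ and their ratio is $\asymp\sqrt{\log p_n/(nb)}\to0$), which is what makes the Gaussian term dominate the exponent; your conclusion $p_n^{1-c_0C^2}$ is nevertheless correct. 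Second, for the $V$-part of the $\delta_n(b)$ decomposition, a Bernstein argument is not actually available under the stated hypotheses: (TCS) only bounds conditional moments of $r_i(J_n,h)$ up to order $2+\delta$, so the all-orders moment condition Bernstein needs is not guaranteed for $K(X_i/h)V_i^{(j)}r_i(J_n,h)$. The paper instead uses Chebyshev's inequality (Lemma \ref{lem:VR}), which requires only the bounded conditional second moment and delivers the same $O_P(1/\sqrt{nb})$ rate; substituting that one step makes your argument fully rigorous.
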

\begin{proof}
We only show the proof for the case with weights. Otherwise, put $\hat{\omega}_{n,k}=1$ below and carry out the steps analogously. Since the conditions of Lemma \ref{lem:weights} hold we have that $\IP(\mathcal{A}_n^c)\to0$ for
$$\mathcal{A}_n=\left\{\forall k\in\{1,...,p_n\}: \hat{\omega}_{n,k}\geq w^{(l)} \right\}.$$
Thus, we get for any $C>0$
\begin{align}
&\IP\left(\sup_{k=1,...,p_n}\left|\frac{1}{nb}\sum_{i=1}^n\hat{\omega}_{n,k}^{-1}Z_i^{(k)}K\left(\frac{X_i}{b}\right)r_i(J_n,b)\right|>C\sqrt{\frac{\log p_n}{nb}}\right) \nonumber \\
=&\IP\left(\exists k\in\{1,...,p_n\}:\left|\frac{1}{nb}\sum_{i=1}^nZ_i^{(k)}K\left(\frac{X_i}{b}\right)r_i(J_n,b)\right|>C\sqrt{\frac{\log p_n}{nb}}\hat{\omega}_{n,k}\right) \nonumber \\
\leq&\IP\left(\exists k\in\{1,...,p_n\}:\left|\frac{1}{nb}\sum_{i=1}^nZ_i^{(k)}K\left(\frac{X_i}{b}\right)r_i(J_n,b)\right|>C\sqrt{\frac{\log p_n}{nb}}w^{(l)}\right)+\IP(\mathcal{A}_n^c) \nonumber \\
\leq&p_n\max_{k=1,...,p_n}\IP\left(\left|\frac{1}{nb}\sum_{i=1}^nZ_i^{(k)}K\left(\frac{X_i}{b}\right)r_i(J_n,b)\right|>C\sqrt{\frac{\log p_n}{nb}}w^{(l)}\right)+\IP(\mathcal{A}_n^c) \nonumber \\
\leq&p_n\max_{k=1,...,p_n}\IP\Bigg(\left|\frac{1}{nb}\sum_{i=1}^nZ_i^{(k)}K\left(\frac{X_i}{b}\right)r_i(J_n,b)-\IE\left(Z_i^{(k)}\frac{1}{b}K\left(\frac{X_i}{b}\right)r_i(J_n,b)\right)\right| \nonumber \\
&\quad\quad\quad\quad\quad>C\sqrt{\frac{\log p_n}{nb}}w^{(l)}\Bigg) \label{eq:aps} \\
&+p_n\max_{k=1,...,p_n}\IP\left(\left|\IE\left(Z_i^{(k)}\frac{1}{b}K\left(\frac{X_i}{b}\right)r_i(J_n,b)\right)\right|>C\sqrt{\frac{\log p_n}{nb}}w^{(l)}\right)+\IP(\mathcal{A}_n^c). \nonumber
\end{align}
We have just argued that $\IP(\mathcal{A}_n^c)\to0$ and we assume that (cf. (AS)) the expectations are smaller than $C\sqrt{\frac{\log p_n}{nb}}$ for a large enough choice of $C>0$. Thus, we get that the second line converges to zero. The first line converges to zero by Lemma \ref{lem:expbound} which we may apply because we assume the moment conditions in (CTB, \eqref{eq:B4}).

Note that when proving the statement without weights, we do not need to rely on Lemma \ref{lem:weights} and hence we do not need to require $\IE(K_b(X_i)Z_i)=0$.

In order to see what $\delta_n(b)$ is, we note that
$$\frac{1}{nh}\sum_{i=1}^nK\left(\frac{X_i}{h}\right)\tilde{Z}_i^{(k)}r_i(J_n,h)=\frac{1}{nh}\sum_{i=1}^nK\left(\frac{X_i}{h}\right)Z_i^{(k)}r_i(J_n,h)-\left[M_n\right]_{\cdot k}^\top\frac{1}{nh}\sum_{i=1}^nV_ir_i(J_n,h).$$
We have just shown that the first part is of order $\sqrt{\log p_n/nb}$ after taking the $\sup$. The second part will be shown in Lemma \ref{lem:VR} to be of order $1/\sqrt{nb}$ because $M_n$ remains bounded by assumption.
\end{proof}
We finish this section with the proofs of Lemmas \ref{lem:bias_formula} and \ref{lem:optimal_variance} which we stated in the main text.
\begin{proof}[Proof of Lemma \ref{lem:bias_formula}]
Let $a_n$ be such that the following is true ($\check{\beta}_n$ was defined in the proof of Theorem \ref{thm:main} and the big-O and small-o terms are the same as in the definition of $\mathcal{B}_n$ in Lemma \ref{lem:bias}):
\begin{align*}
&\frac{C_{\mathcal{B}}}{2}\left(\mu_{\tilde{Y}+}''-\mu_{\tilde{Y}-}''\right)(1+a_n)=\frac{C_{\mathcal{B}}}{2}\Bigg[\mu_{Y+}''-\mu_{Y-}''-\sum_{k\in J_n}\left(\mu_{Z^{(k)}+}''-\mu_{Z^{(k)}-}''\right)\check{\beta}_n^{(k)}\Bigg] \\
&\quad\quad\quad\quad\quad\quad\quad\quad\quad\quad\quad\quad\quad\quad+o(1)+O(|J_n|h^2)+O(|J_n|^{\frac{1}{2}}h) \\
\Leftrightarrow& \left(\mu_{\tilde{Y}+}''-\mu_{\tilde{Y}-}''\right)a_n=\Bigg[-\sum_{k\in J_n}\left(\mu_{Z^{(k)}+}''-\mu_{Z^{(k)}-}''\right)\left(\check{\beta}_n^{(k)}-\gamma_n^{(k)}\right)\Bigg] \\
&\quad\quad\quad\quad\quad\quad\quad\quad\quad\quad+o(1)+O(|J_n|h^2)+O(|J_n|^{\frac{1}{2}}h) \\
\end{align*}
In order to prove \eqref{eq:bias_good_formula}, we need to show that $a_n\to0$. Note next, that we assume $|J_n|h^2\to0$ in (BW) and hence all big-O-terms above are $o(1)$. Since \eqref{eq:bias1} holds by assumption, we have left to show that
\begin{align*}
&\left|\frac{\sum_{k\in J_n}\left(\mu_{Z^{(k)}+}''-\mu_{Z^{(k)}-}''\right)\left(\check{\beta}_n^{(k)}-\gamma_n^{(k)}\right)}{\mu_{\tilde{Y}+}''-\mu_{\tilde{Y}-}''}\right| \\
\leq&\left|\frac{\sum_{k\in J_n}\left(\mu_{Z^{(k)}+}''-\mu_{Z^{(k)}-}''\right)^2}{\left(\mu_{\tilde{Y}+}''-\mu_{\tilde{Y}-}''\right)^2}\right|^{\frac{1}{2}}\left\|\check{\beta}_n-\gamma_n\right\|_2\to0
\end{align*}
(use the Cauchy-Schwarz Inequality). The first part is $O(1)$ by \eqref{eq:bias1} and \eqref{eq:bias2} and $\|\check{\beta}_n-\gamma_n\|_2\to0$ converges to zero which was proven in the proof of Theorem \ref{thm:main}.
\end{proof}

\begin{proof}[Proof of Lemma \ref{lem:optimal_variance}]
Suppose firstly that $J_{1,n}\supseteq J_{2,n}$. By definition (cf. \ref{thm:GeneralAsymptoticNormality})
$$\mathcal{S}_n^2(J_{1,n})=\frac{1}{h}\IE\left(K\left(\frac{X_i}{h}\right)^2\xi\left(\frac{X_i}{h}\right)r_i(J_{1,n},h)^2\right),$$
where $\xi$ is a given function. Hence, we need to study the limit of the following difference (below $\gamma_0(J_{2,n},h)$ as a vector in $\IR^{|J_{1,n}|}$ which has zeros at the places $J_{1,n}\setminus J_{2,n}$)
\begin{align*}
&\frac{1}{h}\IE\left(K\left(\frac{X_i}{h}\right)^2\xi\left(\frac{X_i}{h}\right)r_i(J_{1,n},h)^2\right)-\frac{1}{h}\IE\left(K\left(\frac{X_i}{h}\right)^2\xi\left(\frac{X_i}{h}\right)r_i(J_{2,n},h)^2\right) \\
=&\frac{1}{h}\IE\left(K\left(\frac{X_i}{h}\right)^2\xi\left(\frac{X_i}{h}\right)\left(\begin{pmatrix}\theta_0(J_{2,n},h)-\theta_0(J_{1,n},h) \\  \gamma_0(J_{2,n},h)-\gamma_0(J_{1,n},h) \end{pmatrix}^\top \begin{pmatrix} V_i \\ Z_i(J_{1,n})\end{pmatrix}\right)^2\right) \\
&+\frac{2}{h}\IE\left(K\left(\frac{X_i}{h}\right)^2\xi\left(\frac{X_i}{h}\right)r_i(J_{2,n},h)\begin{pmatrix}\theta_0(J_{2,n},h)-\theta_0(J_{1,n},h) \\  \gamma_0(J_{2,n},h)-\gamma_0(J_{1,n},h) \end{pmatrix}^\top \begin{pmatrix} V_i \\ Z_i(J_{1,n})\end{pmatrix}\right).
\end{align*}
We suppose in (TCS) that $\IE(r_i(J_{a,n},h)^2|X_i=x)$ behaves nicely around $x=0$ for $a=1,2$ and hence the qualitative behavior of the above is determined by $\left\|\theta_0(J_{1,n},h)-\theta_0(J_{2,n},h)\right\|_2^2$ and $\left\|\gamma_0(J_{1,n},h)-\gamma_0(J_{2,n},h)\right\|_2^2$. These can be controlled by applying least squares algebra as follows:
\begin{align*}
&\begin{pmatrix}
\theta_0(J_{1,n},h) \\ \gamma_0(J_{1,n},h)
\end{pmatrix}-\begin{pmatrix}
\theta_0(J_{2,n},h) \\ \gamma_0(J_{2,n},h)
\end{pmatrix} \\
=&\IE\left(K_h(X_i)\begin{pmatrix}
V_i \\ Z_i(J_{1,n})
\end{pmatrix}\begin{pmatrix}
V_i \\ Z_i(J_{1,n})
\end{pmatrix}^\top \right)^{-1} \\
&\quad\quad\quad\quad\times\IE\left(K_h(X_i)\begin{pmatrix}
V_i \\ Z_i(J_{1,n})
\end{pmatrix}\left(Y_i-\begin{pmatrix}
V_i \\ Z_i(J_{1,n})
\end{pmatrix}^\top\begin{pmatrix}
\theta_0(J_{2,n},h) \\ \gamma_0(J_{2,n},h)
\end{pmatrix}\right)\right) \\
=&\IE\left(K_h(X_i)\begin{pmatrix}
V_i \\ Z_i(J_{1,n})
\end{pmatrix}\begin{pmatrix}
V_i \\ Z_i(J_{1,n})
\end{pmatrix}^\top \right)^{-1}\IE\left(K_h(X_i)\begin{pmatrix}
V_i \\ Z_i(J_{1,n})
\end{pmatrix}r_i(J_{n,2},h)\right).
\end{align*}
Since
$$\left\|\IE\left(K_h(X_i)\begin{pmatrix}
V_i \\ Z_i(J_{1,n})
\end{pmatrix}\begin{pmatrix}
V_i \\ Z_i(J_{1,n})
\end{pmatrix}^\top \right)^{-1}\right\|_2=O(1)$$
and by Assumption (AS) and noting that $\IE(K_h(X_i)V_ir_i(J_{2,n},h))=0$ we obtain
\begin{align*}
&\left\|\begin{pmatrix} \theta_0(J_{1,n},h) \\ \gamma_0(J_{1,n},h)
\end{pmatrix}-\begin{pmatrix}
\theta_0(J_{2,n},h) \\ \gamma_0(J_{2,n},h)
\end{pmatrix}\right\|_2^2=O(1)\left\|\IE\left(K_h(X_i)\begin{pmatrix}
V_i \\ Z_i(J_{1,n})
\end{pmatrix}r_i(J_{n,2},h)\right)\right\|_2^2 \\
=&O\left(\frac{\log p_n}{nh}|J_{1,n}|\right)
\end{align*}
which converges to zero. If $J_{2,n}\supseteq J_{1,n}$ we can apply the same arguments with the roles of $J_{1,n}$ and $J_{2,n}$ interchanged. If $J_{1,n}$ and $J_{2,n}$ are not nested, define $J_{3,n}=J_{1,n}\cup J_{2,n}$. (AS) and (TCS) continue to hold for $J_{3,n}$ and by the above argument, $\mathcal{S}_n^2(J_{3,n})-\mathcal{S}_n^2(J_{1,n})\to0$ and $\mathcal{S}_n^2(J_{3,n})-\mathcal{S}_n^2(J_{2,n})\to0$ which implies $\mathcal{S}_n^2(J_{1,n})-\mathcal{S}_n^2(J_{2,n})\to0$ and the proof is complete.
\end{proof}

\subsection{Preliminary Results}
Let $L:\IR\to\IR$ be an arbitrary function and recall the definitions
$$L_-^{(\alpha)}=\int_{-\infty}^0L(u)u^{\alpha}du,\quad L_+^{(\alpha)}=\int_0^{\infty}L(u)u^{\alpha}du,\quad L^{(\alpha)}=\int_{-\infty}^{\infty}L(u)u^{\alpha}du$$
for $\alpha\in\{0,1,2,3,4\}$. If $L$ is a symmetric second order kernel, we have $L^{(0)}=1$, $L_-^{(0)}=L_+^{(0)}=\frac{1}{2}$, $L_-^{(1)}=-L_+^{(1)}$ and $L_-^{(2)}=L_+^{(2)}$. This proves the following lemma.

\begin{lemma}
\label{lem:kappa}
For every symmetric kernel $K$ with $K^{(2)}<\infty$ the matrix
$$\kappa(K)=\begin{pmatrix}
K^{(0)}   & K_+^{(0)} & K^{(1)}   & K_+^{(1)} \\
K_+^{(0)} & K_+^{(0)} & K_+^{(1)} & K_+^{(1)} \\
K^{(1)}   & K_+^{(1)} & K^{(2)}   & K_+^{(2)} \\
K_+^{(1)} & K_+^{(1)} & K_+^{(2)} & K_+^{(2)}
\end{pmatrix}$$
is invertible.
\end{lemma}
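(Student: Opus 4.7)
The plan is to identify $\kappa(K)$ as a Gram matrix and exploit positive semidefiniteness. Set $V(u)=(1,\Ind(u\ge 0),u,u\Ind(u\ge 0))^\top$; computing each of the ten distinct entries of $V(u)V(u)^\top$ and integrating against $K(u)$ shows directly that
$$\kappa(K)=\int_{-\infty}^{\infty}V(u)V(u)^\top K(u)\,du.$$
Because $K\ge 0$, this matrix is automatically positive semidefinite, and all entries are finite: $K^{(0)}=1$ and $K^{(2)}<\infty$ are assumed, and $|K^{(1)}|\le(K^{(0)}K^{(2)})^{1/2}<\infty$ by Cauchy--Schwarz; the one-sided versions are bounded by the two-sided ones.

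The next step is to upgrade semidefiniteness to definiteness. Suppose $v^\top\kappa(K)v=0$ for some $v=(v_1,v_2,v_3,v_4)^\top\in\IR^4$. Then $(v^\top V(u))^2K(u)=0$ for Lebesgue-almost every $u$, so the piecewise affine function
$$v^\top V(u)=v_1+v_2\Ind(u\ge 0)+(v_3+v_4\Ind(u\ge 0))u$$
vanishes on the set $\{u:K(u)>0\}$ up to a Lebesgue-null exceptional set.

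To force $v=0$, I would argue that the support of $K$ has positive Lebesgue measure on each side of the origin. By symmetry and the normalization $\int K=1$ implicit in the term ``kernel'', one has $\int_{-\infty}^{0}K=\int_{0}^{\infty}K=1/2>0$, so each of the sets $\{u<0:K(u)>0\}$ and $\{u>0:K(u)>0\}$ has positive Lebesgue measure. On the negative side, $v^\top V(u)$ reduces to the affine function $v_1+v_3u$; on the positive side, to $(v_1+v_2)+(v_3+v_4)u$. A nonzero affine function has at most one root, so vanishing on a positive-measure set forces it to be identically zero. This yields $v_1=v_3=0$ and $v_1+v_2=v_3+v_4=0$, hence $v=0$, and $\kappa(K)$ is invertible.

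The argument is largely routine once the Gram representation is in hand; the only subtle point is verifying that the support of $K$ is rich enough on each side of zero to defeat an arbitrary affine function there. This follows directly from symmetry together with $\int K=1$, and in particular does not require continuity of $K$ or compact support.
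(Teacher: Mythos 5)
Your proof is correct, and it takes a genuinely different route from the paper's. The paper computes the determinant of $\kappa(K)$ in closed form, using the symmetry relations $K_-^{(0)}=K_+^{(0)}=\tfrac12$, $K_-^{(1)}=-K_+^{(1)}$, $K_-^{(2)}=K_+^{(2)}$ to reduce it to $\bigl(\bigl(K_+^{(1)}\bigr)^2-\tfrac12 K^{(2)}\bigr)^2$, and then invokes Jensen's inequality to show this is strictly positive. You instead recognize $\kappa(K)=\int V(u)V(u)^\top K(u)\,du$ as a Gram matrix with $V(u)=(1,\Ind(u\ge 0),u,u\Ind(u\ge 0))^\top$ and upgrade positive semidefiniteness to definiteness by showing that a piecewise affine function in the span of the components of $V$ cannot vanish on $\{K>0\}$ unless it is identically zero; the only input you need is that $\{u<0:K(u)>0\}$ and $\{u>0:K(u)>0\}$ both have positive Lebesgue measure, which follows from symmetry and $\int K=1$. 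Your argument buys conceptual robustness and generality: it extends immediately to higher-order local polynomial design vectors or to asymmetric kernels with mass on both sides of zero, where a hand computation of the determinant would be painful, and it establishes positive definiteness rather than mere invertibility. The paper's computation, on the other hand, delivers the explicit value of the determinant, which is implicitly what makes the closed-form expression for $\kappa(K)^{-1}$ in Lemma \ref{lem:kernel} possible and hence feeds directly into the bias constants $C_{\mathcal{B}}$ and $C_{\mathcal{S}}$; your argument alone would not produce those formulas. Both proofs use the same underlying hypotheses (symmetry and $\int K=1$), so neither is more general in terms of assumptions.
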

\begin{proof}
By using the above relations, we find that the determinant of $\kappa(K)$ is given by
$$\left(\left(K_+^{(1)}\right)^2-\frac{1}{2}K^{(2)}\right)^2.$$
By using Jensen's Inequality for integrals we get that $\left(K_+^{(1)}\right)^2<\frac{1}{2}K^{(2)}$ and we conclude that the determinant is strictly positive which completes the proof.
\end{proof}
The matrix $\kappa(K)$ will play a role in Lemma \ref{lem:SigmaV} below. We will also often need computations of the following type and therefore we do it here once as a reference: Suppose that $f:\IR\to\IR$ is twice one-sided differentiable at zero, that is, the first two derivatives of $f_m:(-\infty,0)\to\IR, x\mapsto f(x)$ and $f_p:(0,\infty)\to\IR, x\mapsto f(x)$ exist and can be continuously extended to zero. We write $f_-$ for the extension of $f_m$ to zero and $f_+$ for the extension of $f_p$ at zero. Define $f_-',f_+',f_-''$, and $f_+''$ in a similar way. Then,
\begin{align}
&\IE\left(\frac{1}{h}L\left(\frac{X_i}{h}\right)f(X_i)\right)=\int_{-\infty}^0L(u)f(uh)f_X(uh)du+\int_0^{-\infty}L(u)f(uh)f_X(uh)du \nonumber \\
=&L_-^{(0)}f_-f_X(0)+L_+^{(0)}f_+f_X(0)+h\left[L_-^{(1)}\left(f\cdot f_X\right)_-'+L_+^{(1)}\left(f\cdot f_X\right)_+'\right] \nonumber \\
&\quad+\frac{1}{2}h^2\left[L_-^{(2)}\left(f\cdot f_X\right)_-''+L_+^{(2)}\left(f\cdot f_X\right)_+''\right]+o(h^2). \label{eq:kernel}
\end{align}
A simple consequence of the above is the following Lemma:
\begin{lemma}
\label{lem:SigmaV}
Let $f_X$ be twice continuously differentiable and let $K^{(\alpha)}$ for $\alpha\in\{0,...,4\}$ and $\left(K^2\right)^{(\alpha)}$ for $\alpha\in\{0,1,2\}$ be finite. If $h\to0$ and $nh\to\infty$, then
\begin{align}
&\frac{1}{n}\sum_{i=1}^nK_h(X_i)V_iV_i^{\top}=\IE\left(K_h(X_i)V_iV_i^{\top}\right)+O_P\left(\frac{1}{nh}\right), \label{eq:kernel_conv} \\
&\IE(K_h(X_i)V_iV_i^{\top})= f_X(0)\kappa(K)+f_X'(0)h\begin{pmatrix}
K^{(1)}   & K^{(1)}_+ & K^{(2)}   & K_+^{(2)} \\
K_+^{(1)} & K_+^{(1)} & K_+^{(2)} & K_+^{(2)} \\
K^{(2)}   & K_+^{(2)} & K^{(3)}   & K_+^{(3)} \\
K_+^{(2)} & K_+^{(2)} & K_+^{(3)} & K_+^{(3)}
\end{pmatrix} \nonumber \\
&\quad+\frac{h^2}{2}f_X''(0)\begin{pmatrix}
K^{(2)}   & K_+^{(2)} & K^{(3)}   & K_+^{(3)} \\
K_+^{(2)} & K_+^{(2)} & K_+^{(3)} & K_+^{(3)} \\
K^{(3)}   & K_+^{(3)} & K^{(4)}   & K_+^{(4)} \\
K_+^{(3)} & K_+^{(3)} & K_+^{(4)} & K_+^{(4)}
\end{pmatrix}+o(h^2) \label{eq:kernel_exp}
\end{align}
\end{lemma}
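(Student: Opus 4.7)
The plan is to establish the two displayed equations separately, both via direct calculation.

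For the deterministic expansion \eqref{eq:kernel_exp}, I would apply the general formula \eqref{eq:kernel} entry-wise to the $4\times 4$ matrix $\IE(K_h(X_i)V_iV_i^{\top})$. Since $V_i=(1,T_i,X_i/h,T_iX_i/h)^{\top}$ and $T_i^2=T_i$, every entry of $V_iV_i^{\top}$ has the form $T_i^{a}(X_i/h)^{b}$ with $a\in\{0,1\}$ and $b\in\{0,1,2\}$. After the substitution $u=x/h$ the corresponding expectation becomes either $\int_{\IR}K(u)u^{b}f_X(uh)\,du$ (if $a=0$) or $\int_{0}^{\infty}K(u)u^{b}f_X(uh)\,du$ (if $a=1$), which is exactly what produces the $K^{(b)}$ versus $K_+^{(b)}$ coefficients in each $(j,k)$ position of the three matrices. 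A second-order Taylor expansion $f_X(uh)=f_X(0)+f_X'(0)uh+\tfrac{1}{2}f_X''(0)(uh)^2+o(h^2)$, justified by twice continuous differentiability of $f_X$ and the bounded support of $K$ from Assumption (K), then yields the three leading terms and the uniform $o(h^2)$ remainder. Bookkeeping the sixteen entries and collecting them into the three displayed matrices completes this half.

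For the stochastic bound \eqref{eq:kernel_conv}, I would note first that on the effective support $\{|X_i|\leq h\}$ of $K_h$ we have $|X_i/h|\leq 1$ by Assumption (K) and $T_i\in\{0,1\}$, so each entry of $V_iV_i^{\top}$ is almost surely bounded there. Consequently each entry of the i.i.d.\ random matrix $K_h(X_i)V_iV_i^{\top}$ is bounded in absolute value by $\|K\|_{\infty}/h$, and a direct computation using $\IE[K_h(X_i)^2 V_{i,j}^2 V_{i,k}^2]=O(h^{-1})$ gives variance of order $1/h$ for each entry. Chebyshev's inequality applied entry-wise then furnishes the stated $O_P$ rate (standard concentration giving the component deviation from its mean); equivalence of matrix norms in fixed dimension four passes from the entries to the full matrix.

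The main obstacle is, realistically, none: the lemma is a routine computation whose main content is organizational rather than conceptual. The only care needed is to keep consistent track, across the sixteen positions of $V_iV_i^{\top}$, of which entries carry a factor of $T_i$ (forcing integration over $[0,\infty)$ and the $K_+^{(\alpha)}$ constants) versus which do not (integration over $\IR$ and the $K^{(\alpha)}$ constants), so that the symmetric block structure of the three matrices in \eqref{eq:kernel_exp} emerges correctly and so that the invertibility result of Lemma~\ref{lem:kappa} applies cleanly to the leading term.
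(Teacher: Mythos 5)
Your proposal matches the paper's (one-line) proof exactly: expand $\IE(K_h(X_i)V_iV_i^{\top})$ entry-wise via \eqref{eq:kernel}, keeping track of which entries carry the factor $T_i$ and hence the one-sided constants $K_+^{(\alpha)}$, and control the stochastic fluctuation by an entry-wise variance calculation. One caveat you should make explicit: the variance bound $\IE(K_h(X_i)^2V_{i,j}^2V_{i,k}^2)=O(h^{-1})$ combined with Chebyshev yields a deviation of order $O_P\bigl((nh)^{-1/2}\bigr)$, not the $O_P\bigl((nh)^{-1}\bigr)$ displayed in \eqref{eq:kernel_conv} --- the displayed rate appears to be a slip in the statement (only $o_P(1)$ is used where the lemma is invoked), so your argument proves what is actually needed but you should not claim that Chebyshev delivers the literal stated rate.
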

\begin{proof}
For the proof we have to compute the expectation and the variance of the average by means of \eqref{eq:kernel}.
\end{proof}

Lastly, we consider in this preliminary discussion a local version of the Bernstein Inequality. For the convenience of the reader, we state the regular Bernstein Inequality as it can be found e.g. in \citet{GN16}.
\begin{proposition}
\label{prop:bernstein}
Let $A_i$, $i=1,...,n$ be a sequence of independent, centered random variables such that there are numbers $c$ and $\sigma_i$ such that for all $m\in\IN$ $\IE(|A_i|^m|)\leq\frac{m!}{2}\sigma_i^2c^{m-2}$. Set $\sigma^2=\sum_{i=1}^n\sigma_i^2$, $S_n=\sum_{i=1}^nA_i$. Then, for all $t\geq0$, $\IP(S_n\geq t)\leq\exp\left(-\frac{t^2}{2(\sigma^2+ct)}\right)$.
\end{proposition}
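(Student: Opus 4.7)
The plan is to prove the stated Bernstein inequality by the standard Cram\'er--Chernoff method, combining Markov's inequality for the exponential moment of $S_n$ with independence and the hypothesized bound on $\IE(|A_i|^m)$. For any $\lambda\in(0,1/c)$, Markov gives $\IP(S_n\geq t)\leq e^{-\lambda t}\,\IE(e^{\lambda S_n})$, and independence of the $A_i$ yields $\IE(e^{\lambda S_n})=\prod_{i=1}^{n}\IE(e^{\lambda A_i})$, so everything reduces to controlling a single moment-generating function.

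The core step is the per-summand bound. Since the $A_i$ are centered, Taylor expansion of $e^{\lambda A_i}$ gives
\begin{align*}
\IE(e^{\lambda A_i}) = 1+\sum_{m\geq 2}\frac{\lambda^m\IE(A_i^m)}{m!}\leq 1+\sum_{m\geq 2}\frac{\lambda^m\IE(|A_i|^m)}{m!}.
\end{align*}
Inserting the hypothesis $\IE(|A_i|^m)\leq \tfrac{m!}{2}\sigma_i^2 c^{m-2}$ collapses the factorials and produces the geometric series $\tfrac{\sigma_i^2\lambda^2}{2}\sum_{m\geq 2}(\lambda c)^{m-2}=\tfrac{\sigma_i^2\lambda^2}{2(1-\lambda c)}$, which converges precisely because $\lambda c<1$. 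The elementary inequality $1+x\leq e^x$ then yields $\IE(e^{\lambda A_i})\leq\exp\!\bigl(\tfrac{\sigma_i^2\lambda^2}{2(1-\lambda c)}\bigr)$, and multiplying over $i$ gives $\IE(e^{\lambda S_n})\leq\exp\!\bigl(\tfrac{\sigma^2\lambda^2}{2(1-\lambda c)}\bigr)$.

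Combining with Markov, $\IP(S_n\geq t)\leq\exp\!\bigl(-\lambda t+\tfrac{\sigma^2\lambda^2}{2(1-\lambda c)}\bigr)$ for every $\lambda\in(0,1/c)$. The last step is to optimize the right-hand side in $\lambda$. A direct calculation (or the standard trick of choosing $\lambda^\star=t/(\sigma^2+ct)$, which lies in $(0,1/c)$) plugs back to give exponent $-\tfrac{t^2}{2(\sigma^2+ct)}$, completing the proof.

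The only place where care is needed is the geometric-series step: one must keep $\lambda c<1$ throughout so that the sum converges, and then verify \emph{a posteriori} that the optimizer $\lambda^\star=t/(\sigma^2+ct)$ indeed satisfies $\lambda^\star c<1$ for every $t\geq 0$ — which it does, since $\lambda^\star c = ct/(\sigma^2+ct)<1$. Everything else is routine manipulation; this is essentially the textbook proof (as in \citet{GN16}) and there is no deeper obstacle to overcome.
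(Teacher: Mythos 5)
Your proof is correct. Note that the paper does not actually prove Proposition \ref{prop:bernstein}: it states it as a standard result and refers to \citet{GN16} for a proof, so there is no in-paper argument to compare against. Your Cram\'er--Chernoff derivation is exactly the textbook proof of this form of Bernstein's inequality: the moment hypothesis collapses the Taylor series of the moment generating function into a geometric series convergent for $\lambda c<1$, the bound $1+x\leq e^x$ tensorizes over the independent summands, and the choice $\lambda^\star=t/(\sigma^2+ct)$ (which indeed satisfies $\lambda^\star c<1$ whenever $\sigma^2>0$, the case $\sigma^2=0$ being degenerate and trivially covered) yields precisely the exponent $-t^2/(2(\sigma^2+ct))$. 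The only detail you pass over silently is the justification for exchanging expectation and summation in the Taylor expansion, but this is immediate from Tonelli's theorem since the hypothesized moment bounds make $\sum_{m\geq 1}\lambda^m\IE(|A_i|^m)/m!$ finite for $\lambda\in(0,1/c)$.
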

In our setting, the following local version will be relevant.
\begin{lemma}
\label{lem:expbound}
Let $B_1,...,B_n$ be iid and denote $\mu_m(x)=\IE\left(\left|B_i\right|^m\big|X_i=x\right)$ (which is independent of $i=1,...n$). Suppose that for all $m\geq2$ and $n\in\IN$
\begin{align*}
&\int_{\IR}K(u)^m\mu_{m}(uh)f_X(uh)du\leq \frac{m!}{2}\sigma_0^2c^{m-2}, \\
&\left|\int_{\IR}K(u)\IE(B_i|X_i=uh)f_X(uh)du\right|\leq c^*
\end{align*}
for some constants $\sigma_0^2,c,c^*>0$. Let furthermore $p_n$ and $h$ be such that (after possibly increasing $\sigma_0$ and $c$) for all $m\geq2$ and all $n\in\IN$
$$\frac{m!}{2}\sigma_0^2c^{m-2}\left(c^*\right)^{-m}\geq h^{m-1}\textrm{ and }\sqrt{\frac{\log p_n}{nh}}\leq\frac{1}{8c}.$$
It holds for all $x\geq16\sigma_0^2$ that
$$\IP\left(\frac{1}{n}\sum_{i=1}^n\left(K_h(X_i)B_i-\IE\left(K_h(X_i)B_i\right)\right)>x\sqrt{\frac{\log p_n}{nh}}\right)\leq\left(\frac{1}{p_n}\right)^x.$$
\end{lemma}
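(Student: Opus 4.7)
The natural plan is to reduce the statement to the classical Bernstein inequality of Proposition \ref{prop:bernstein} applied to the centered random variables $A_i = K_h(X_i) B_i - \mathbb{E}(K_h(X_i) B_i)$. This requires two ingredients: Bernstein-type bounds $\mathbb{E}(|A_i|^m) \le \frac{m!}{2}\tilde\sigma_i^2 \tilde c^{m-2}$ with explicit constants $\tilde\sigma_i^2, \tilde c$ that make the assumptions of the lemma match the Bernstein exponent, and then a routine calculation that turns the deviation threshold $x\sqrt{\log p_n/(nh)}$ into the desired tail.

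For the first step, I would start from $|A_i|^m \le 2^{m-1}\bigl(|K_h(X_i) B_i|^m + |\mathbb{E}(K_h(X_i) B_i)|^m\bigr)$. The substitution $u = x/h$ in $\mathbb{E}(|K_h(X_i) B_i|^m) = h^{-m}\int K(x/h)^m \mu_m(x) f_X(x)\,dx$ combined with the first moment assumption of the lemma gives
\begin{equation*}
\mathbb{E}\bigl(|K_h(X_i) B_i|^m\bigr) \;\le\; h^{1-m}\,\tfrac{m!}{2}\,\sigma_0^2\,c^{m-2}.
\end{equation*}
For the centering term, the second assumption yields $|\mathbb{E}(K_h(X_i) B_i)| \le c^*$, and the structural condition $\tfrac{m!}{2}\sigma_0^2 c^{m-2}(c^*)^{-m} \ge h^{m-1}$ is designed precisely so that $(c^*)^m \le h^{1-m}\tfrac{m!}{2}\sigma_0^2 c^{m-2}$. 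Adding the two contributions gives $\mathbb{E}(|A_i|^m) \le 2^m h^{1-m}\tfrac{m!}{2}\sigma_0^2 c^{m-2}$, which I rewrite as $\tfrac{m!}{2}\tilde\sigma_i^2 \tilde c^{m-2}$ with the clean choice $\tilde\sigma_i^2 = 4\sigma_0^2/h$ and $\tilde c = 2c/h$.

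With $\sigma^2 = n\tilde\sigma_i^2 = 4n\sigma_0^2/h$ and target threshold $t = xn\sqrt{\log p_n/(nh)} = x\sqrt{n\log p_n/h}$, Proposition \ref{prop:bernstein} gives the bound $\exp\bigl(-t^2/(2(\sigma^2 + \tilde c t))\bigr)$. Dividing numerator and denominator of the exponent by $n/h$, one finds
\begin{equation*}
\frac{t^2}{2(\sigma^2+\tilde c t)} \;=\; \frac{x^2\log p_n}{2\bigl(4\sigma_0^2 + 2cx\sqrt{\log p_n/(nh)}\bigr)}.
\end{equation*}
To conclude, I need this quantity to be at least $x\log p_n$, i.e.\ $x \ge 8\sigma_0^2 + 4cx\sqrt{\log p_n/(nh)}$. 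Using the lemma's second standing inequality $\sqrt{\log p_n/(nh)} \le 1/(8c)$, the last term is bounded by $x/2$, so the requirement reduces to $x \ge 16\sigma_0^2$, which is precisely the hypothesis. This yields the tail bound $\exp(-x\log p_n) = p_n^{-x}$.

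The calculation is essentially mechanical; the only delicate point, and the one I would treat carefully, is bookkeeping of constants. In particular, the factor $2^m$ arising from centering must be absorbed into $\tilde c$ and $\tilde \sigma_i^2$ in a way compatible with the Bernstein form $\frac{m!}{2}\tilde\sigma_i^2\tilde c^{m-2}$ (which is what forces the choice $\tilde c = 2c/h$, $\tilde\sigma_i^2 = 4\sigma_0^2/h$), and the linkage between the two structural hypotheses and the final threshold $16\sigma_0^2$ must be traced exactly. The assumption $\tfrac{m!}{2}\sigma_0^2 c^{m-2}(c^*)^{-m} \ge h^{m-1}$ is non-obvious but is used only at this single point; I would highlight in the write-up that its sole purpose is to dominate the centering correction by the raw $m$-th moment.
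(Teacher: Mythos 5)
Your proposal is correct and follows essentially the same route as the paper's proof: split off the centering term via $(a+b)^m\le 2^{m-1}(a^m+b^m)$, use the two integral hypotheses together with the condition $\tfrac{m!}{2}\sigma_0^2c^{m-2}(c^*)^{-m}\ge h^{m-1}$ to obtain Bernstein-type moment bounds with constants $4\sigma_0^2$ and $2c$, and then verify the exponent exceeds $x\log p_n$ using $x\ge 16\sigma_0^2$ and $\sqrt{\log p_n/(nh)}\le 1/(8c)$. The only difference is cosmetic: you keep the factor $1/h$ inside the summands and absorb it into $\tilde\sigma_i^2=4\sigma_0^2/h$ and $\tilde c=2c/h$, whereas the paper rescales to work with $K(X_i/h)B_i$ and threshold $nh\epsilon_n$; the resulting exponent is identical.
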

\begin{proof}
We begin by rewriting the term of interest as follows. Let $\epsilon_n=x\sqrt{\frac{\log p_n}{nh}}$. Then,
\begin{align*}
&\IP\left(\frac{1}{n}\sum_{i=1}^n\left(K_h(X_i)B_i-\IE\left(K_h(X_i)B_i\right)\right)>\epsilon_n\right)=\IP\left(\sum_{i=1}^nA_{n,i}>nh\epsilon_n\right),
\end{align*}
where
$$A_{n,i}=K\left(\frac{X_i}{h}\right)B_i-\IE\left(K\left(\frac{X_i}{h}\right)B_i\right).$$
We apply now Bernstein's Inequality (cf. Proposition \ref{prop:bernstein}) to $A_{n,i}$. Since for $a,b\geq0$, we have that $(a+b)^m\leq2^{m-1}(a^m+b^m)$, we have for any $i=1,...,n$ and any $m\in\IN$ by assumption
\begin{align*}
&\IE\left(\left|A_{i,n}\right|^m\right)\leq2^{m-1}\left(\IE\left(K\left(\frac{X_i}{h}\right)^m\left|B_i\right|^m\right)+\left|\IE\left(K\left(\frac{X_i}{h}\right)B_i\right)\right|^m\right) \\
\leq&2^{m-1}\left(h\int_{\IR}K(u)^m\mu_{m}(uh)f_X(uh)du+\left|h\int_{\IR}K(u)\IE(B_i|X_i=uh)f_X(uh)du\right|^m\right) \\
\leq&2^{m-1}\left(\frac{m!}{2}h\sigma_0^2c^{m-2}+h^m\left(c^*\right)^m\right)\leq\frac{m!}{2}\cdot4h\sigma_0^2\cdot (2c)^{m-2}.
\end{align*}
We may thus apply Proposition \ref{prop:bernstein} with $\sigma^2=4nh\sigma_0^2$ and "$c=2c$". We conclude
\begin{align*}
&\IP\left(\frac{1}{n}\sum_{i=1}^n\left(K_h(X_i)B_i-\IE\left(K_h(X_i)B_i\right)\right)>\epsilon_n\right) \\
\leq&\exp\left(-\frac{n^2h^2\epsilon_n^2}{2\left(4nh\sigma_0^2+2cnh\epsilon_n\right)}\right)=\exp\left(-\frac{x^2nh\log p_n}{2\left(4nh\sigma_0^2+2cx\sqrt{nh\log p_n}\right)}\right) \\
\leq&\exp\left(-\frac{x^2\log p_n}{2\left(4\sigma_0^2+2cx\sqrt{\frac{\log p_n}{nh}}\right)}\right)\leq \exp\left(-x\log p_n\right)
\end{align*}
by the assumptions on $x$ and $\log p_n/nh$.
\end{proof}

\subsection{The Result for a General Model Selection Algorithm}
\subsubsection{Statement of the Result and Proof Structure}
\begin{theorem}
\label{thm:GeneralAsymptoticNormality}
Let $p_n\to\infty$, $h\to0$, $nh\to\infty$ and $\log p_n/nh\to0$ and let $K$ be symmetric, compactly supported with $K^{(4)},(K^2)^{(2)}<\infty$. Suppose that (CMS) holds and let $f_X$ and $\mu_{Z^{(k)}}$ be twice differentiable in a neighbourhood around zero with $f_X(0)>0$ and $f_X''$ being continuous at zero and $\mu_{Z^{(k)}}(0)=\mu_{Z^{(k)}}'(0)=0$ as well as
\begin{equation}
\label{eq:sd_bound}
\sup_{n\in\IN}\sup_{k\in\{1,...,p_n\}}\sup_{u\in[0,1]}\left|\mu_{Z^{(k)}}''(uh)\right|+\left|\mu_{Z^{(k)}}''(-uh)\right|<\infty.
\end{equation}
Moreover, suppose that for $\mu_{k,m}(x)=\IE\left(\left|Z_i^{(k)}\right|^m\Big|X_i=x\right)$ there are finite numbers $\sigma_0^2,c,c^*$ such that for all natural $m\geq2$ and all $k$
\begin{align}
\int_{\IR}\left(1+|u|^m\right)K(u)^m\mu_{k,m}(uh)f_X(uh)du\leq&\frac{m!}{2}\sigma_0^2c^{m-2}, \label{eq:mombound1} \\
\int_{\IR}\left(1+|u|\right)K(u)^m\mu_{k,1}(uh)f_X(uh)du\leq& c^*. \label{eq:mombound2}
\end{align}
Suppose that for the target set $J_n$, there are $\delta>0$ and finite numbers $\sigma_l,\sigma_r,C>0$ such that
\begin{align}
&\lim_{n\to\infty}\sup_{u\in[0,1]}\left|\IE(r_i(J_n,h)^2|X_i=uh)-\sigma_r^2\right|=0, \nonumber \\
&\quad\quad\quad\quad\lim_{n\to\infty}\sup_{u\in[0,1]}\left|\IE(r_i(J_n,h)^2|X_i=-uh)-\sigma_l^2\right|=0, \label{eq:equi-cont} \\
&\sup_{n\in\IN}\sup_{x\in[-h,h]}\left|\IE(|r_i(J_n,h)|^{2+\delta}|X_i=x)\right|<C. \label{eq:uboundr}
\end{align}
Set $w=\left(\left(f_X(0)\kappa(K)\right)^{-1}\right)_{2\cdot}^\top$ to be the scaled second row of the inverse of $\kappa(K)$. Define
\begin{align*}
\mathcal{S}_n^2=\frac{1}{h}\IE\left(K\left(\frac{X_i}{h}\right)^2\left(w^\top V_i\right)^2r_i(J_n,h)^2\right).
\end{align*}
Then,
\begin{align*}
&\sqrt{\frac{nh}{\mathcal{S}^2_n}}\left(\hat{\tau}_n(\hat{J}_n)-\theta_{0,n}^{(2)}\right)\overset{d}{\to}\mathcal{N}\left(0,1\right).
\end{align*}
\end{theorem}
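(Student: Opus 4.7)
The plan is to decompose the post-selection estimator via the closed-form partitioned-regression expression given in Section~\ref{app:overview}, show that its second component differs from a sum of i.i.d.\ terms involving the oracle residuals $r_i(J_n,h)$ by a term of order $o_P((nh)^{-1/2})$, and apply a Lyapunov CLT to that sum. To implement this, I would first analyze the design matrix $A_n := n^{-1}\tilde V(\hat J_n)^\top \mathbf{K}_h \tilde V(\hat J_n)$, with $\tilde V(J) = \mathbf{V} - \mathbf{Z}(J)\hat\gamma_{V,n}(J)$. The assumptions $\mu_{Z^{(k)}}(0) = \mu_{Z^{(k)}}'(0) = 0$ and \eqref{eq:sd_bound} force each entry of $\mathbb{E}(K_h(X_i) Z_i V_i^\top)$ to be $O(h^2)$ uniformly in $k$; combined with $\mathrm{RSE}(|\hat J_n|, J_n, h)$ and a maximal inequality from Lemma~\ref{lem:expbound} under \eqref{eq:mombound1}--\eqref{eq:mombound2}, this shows that $\hat\gamma_{V,n}(\hat J_n)$ is negligible in an appropriate matrix norm, so $A_n \to f_X(0)\kappa(K)$ by Lemma~\ref{lem:SigmaV}. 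Invertibility (Lemma~\ref{lem:kappa}) produces the weight vector $w$ as the second row of the inverse, and the second component of $\hat\theta_n(\hat J_n) - \theta_{0,n}$ becomes $w^\top$ applied to a score vector.

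Next, I would control that score vector. Writing $r_i(J_n,h) = Y_i - V_i^\top \theta_0(J_n,h) - Z_i(J_n)^\top \gamma_0(J_n,h)$, the score decomposes as $n^{-1}\mathbf{V}^\top \mathbf{K}_h \mathbf{r}_n(J_n,h) + R_n$, where $R_n$ captures the fitted-versus-oracle covariate-correction difference $n^{-1}\mathbf{V}^\top \mathbf{K}_h[\mathbf{Z}(J_n)\gamma_0(J_n,h) - \mathbf{Z}(\hat J_n)\hat\gamma_n^{*}(\hat J_n)]$ together with cross-terms involving $\hat\gamma_{V,n}(\hat J_n)$ already handled above. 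By definition, $\zeta_n^2$ bounds the kernel-weighted in-sample squared error of $\hat\gamma_n^{*}(\hat J_n)$ relative to $\gamma_0(J_n,h)$; invoking RSE gives $\|\hat\gamma_n^{*}(\hat J_n) - \gamma_0(J_n,h)\|_2 = O_P(\zeta_n)$, and Cauchy--Schwarz sharpens this to the $\ell_1$ bound $O_P(\sqrt{|J_n|+|\hat J_n|}\,\zeta_n)$. Applying Lemma~\ref{lem:expbound} and \eqref{eq:mombound1} yields the uniform maximal inequality $\max_{j,k}|n^{-1}\sum_i K_h(X_i) V_i^{(j)} Z_i^{(k)}| = O_P(\sqrt{\log p_n/(nh)} + h^2)$, the $h^2$ term coming from the smoothness-induced expectation. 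Combining via H\"older, together with the orthogonality rate $\delta_n(h)$ from \eqref{eq:cc} that controls the oracle piece $n^{-1}\mathbf{V}^\top \mathbf{K}_h \mathbf{r}_n(J_n,h)$, I would show $R_n = O_P((nh)^{-1/2}(\alpha_n + \beta_{3,n})) = o_P((nh)^{-1/2})$ under (CMS).

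The concluding CLT step shows that $(\mathcal{S}_n\sqrt{nh})^{-1}\sum_{i=1}^n K(X_i/h) w^\top V_i r_i(J_n,h) \overset{d}{\to} \mathcal{N}(0,1)$: the summands are i.i.d., mean zero by the first-order conditions defining $r_i(J_n,h)$, and have variance exactly $\mathcal{S}_n^2$; the $(2+\delta)$-moment bound \eqref{eq:uboundr}, boundedness of $K$ and of $w$ verify the Lyapunov condition, while equi-continuity \eqref{eq:equi-cont} together with $f_X(0)>0$ and Lemma~\ref{lem:SigmaV} guarantees $\mathcal{S}_n^2 \to C_{\mathcal{S}}(\sigma_l^2+\sigma_r^2)/f_X(0) > 0$. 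The main obstacle is the second step. The quantity $\zeta_n$ only provides in-sample control of the fit in a kernel-weighted $\ell_2$-norm of fitted values, whereas what is required is a bound on a specific \emph{linear functional} of $\hat\gamma_n^{*}(\hat J_n) - \gamma_0(J_n,h)$ appearing in the RD score. Handling this requires coordinating (i) the restricted sparse eigenvalue condition to pass from fitted-value norms to coefficient $\ell_2$-norms, (ii) the sparsity bound $|\hat J_n| = O_P(|J_n|)$ to pass from $\ell_2$ to $\ell_1$ without losing much, and (iii) a uniform maximal inequality over all $p_n$ indices via Lemma~\ref{lem:expbound}. The composite rates $\alpha_n$, $\beta_{3,n}$ and the pre-factor $\sqrt{\log p_n} + \sqrt{nh^5}$ in (CMS) are calibrated precisely to balance the stochastic cross-product error $\sqrt{\log p_n/(nh)}$ against the smoothness-induced $h^2$ bias.
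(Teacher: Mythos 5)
Your proposal follows essentially the same route as the paper's proof: a Frisch--Waugh--Lovell/partitioned-regression decomposition of $\hat\theta_n(\hat J_n)$, convergence of the projected design matrix to $f_X(0)\kappa(K)$ via the $O(h^2)$ bound on $\IE(K_h(X_i)Z_iV_i^\top)$ plus a Bernstein-type maximal inequality and RSE, reduction of the score to the oracle term $n^{-1}\mathbf{V}^\top\mathbf{K}_h\mathbf{r}_n$ by exploiting that the projection annihilates the fitted covariate combination and bounding the remainder through the excess-risk quantity $\zeta_n$, the RSE-based coefficient rate, and the composite rates $\alpha_n,\beta_{3,n}$ in (CMS), followed by a Lyapunov CLT using \eqref{eq:equi-cont} and \eqref{eq:uboundr}. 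The only slight looseness is that the prediction-norm bound is $O_P(\rho_n(|\hat J_n|)+\zeta_n)$ rather than $O_P(\zeta_n)$ alone (the cross-term from the quadratic expansion contributes $\rho_n$, as in Lemma~\ref{lem:Q}), but since you account for exactly that contribution through $\alpha_n$, the argument is correct.
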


\begin{proof}[Proof of Theorem \ref{thm:GeneralAsymptoticNormality}]
Note that the conditions of this theorem contain all conditions of the supporting results (from the following section) or imply them (e.g. \eqref{eq:equi-cont} is stronger than \eqref{eq:rsqbounded}). Therefore we can use all results from Section \ref{subsubsec:sup_res_gc}. Let
\begin{equation}
\label{eq:defM}
\mathbf{M}_n(\hat{J}_n)=\mathbf{I}_n-\mathbf{K}_h^{\frac{1}{2}}\mathbf{Z}(\hat{J}_n)\left(\mathbf{Z}(\hat{J}_n)^{\top}\mathbf{K}_h\mathbf{Z}(\hat{J}_n)\right)^{-1}\mathbf{Z}(\hat{J}_n)^{\top}\mathbf{K}_h^{\frac{1}{2}}
\end{equation}
denote the projection matrix on the chosen covariates ($\mathbf{I}_n$ denotes $n$-dimensional identity matrix). We write here $\mathbf{r}_n=\mathbf{r}(J_n,h)$ and $\gamma_{0,n}=\gamma_{0,n}(J_n,h)$ because $J_n$ and $h$ will be the same sequences throughout the proof. Moreover, $\gamma_{0,n}$ will be understood as element of $\IR^{p_n}$ with $\gamma_{0,n}^{(k)}=0$ for $k\notin J_n$. Note that $\mathbf{M}_n\mathbf{K}_h^{\frac{1}{2}}\mathbf{Z}(\hat{J}_n)=0$. We thus obtain by calculation (or the Frisch-Waugh-Lovell Theorem for weighted regression) the following representation of our estimator
\begin{align*}
\hat{\theta}_n=&\left(\mathbf{V}^{\top}\mathbf{K}_h^{\frac{1}{2}}\mathbf{M}_n(\hat{J}_n)\mathbf{K}_h^{\frac{1}{2}}\mathbf{V}\right)^{-1}\mathbf{V}^{\top}\mathbf{K}_h^{\frac{1}{2}}\mathbf{M}_n(\hat{J}_n)\mathbf{K}_h^{\frac{1}{2}}\mathbf{Y} \\
=&\theta_{0,n}+\left(\mathbf{V}^{\top}\mathbf{K}_h^{\frac{1}{2}}\mathbf{M}_n(\hat{J}_n)\mathbf{K}_h^{\frac{1}{2}}\mathbf{V}\right)^{-1}\mathbf{V}^{\top}\mathbf{K}_h^{\frac{1}{2}}\mathbf{M}_n(\hat{J}_n)\mathbf{K}_h^{\frac{1}{2}}\left(\mathbf{Z}\gamma_{0,n}+\mathbf{r}_n\right).
\end{align*}
Suppose for the moment that we know that
\begin{equation}
\label{eq:last_step}
\frac{1}{n}\mathbf{V}^{\top}\mathbf{K}^{\frac{1}{2}}M_n(\hat{J}_n)\mathbf{K}_h^{\frac{1}{2}}\mathbf{Z}\gamma_{0,n}=o_P\left(\frac{1}{\sqrt{nh}}\right).
\end{equation}
If the above is true we obtain together with Proposition \ref{prop:VV}, Lemma \ref{lem:VR}, the assumptions on $\delta_n$ and the fact that $\kappa(K)$ is invertible by Lemma \ref{lem:kappa} that
\begin{align*}
&\left(\frac{1}{n}\mathbf{V}^{\top}\mathbf{K}_h^{\frac{1}{2}}\mathbf{M}_n(\hat{J}_n)\mathbf{K}_h^{\frac{1}{2}}\mathbf{V}\right)^{-1}=\left(f_X(0)\kappa(K)\right)^{-1}+o_P(1), \\
&\frac{1}{n}\mathbf{V}^{\top}\mathbf{K}_h^{\frac{1}{2}}\mathbf{M}_n(\hat{J}_n)\mathbf{K}_h^{\frac{1}{2}}\left(\mathbf{Z}\gamma_{0,n}+\mathbf{r}_n\right)=\frac{1}{n}\mathbf{V}^{\top}\mathbf{K}_h\mathbf{r}_n+o_P\left(\frac{1}{\sqrt{nh}}\right), \\
&\frac{1}{n}\mathbf{V}^{\top}\mathbf{K}_h\mathbf{r}_n=O_P\left(\frac{1}{\sqrt{nh}}\right).
\end{align*}
Hence,
$$\sqrt{nh}\left(\hat{\theta}_n-\theta_{0,n}\right)=\left(f_X(0)\kappa(K)\right)^{-1}\sqrt{nh}\frac{1}{n}\mathbf{V}^{\top}\mathbf{K}_h\mathbf{r}_n+o_P(1).$$
Thus, to find the asymptotics of the estimator for the treatment effect we study the second entry of the above vector. Recall to this end that $w$ denotes the second row of the kernel matrix written as a column. Thus we have
$$\sqrt{nh}\left(\hat{\tau}_n-\theta_{0,n}^{(2)}\right)=\sqrt{nh}\frac{1}{n}w^{\top}\mathbf{V}^{\top}\mathbf{K}_h\mathbf{r}_n+o_P(1)=\frac{1}{\sqrt{nh}}\sum_{i=1}^nK\left(\frac{X_i}{h}\right)w^{\top}V_ir_i(J_n,h)+o_P(1).$$
For simplicity of notation we write $\nu(X_i/h)=w^{\top}V_i$. Now we can employ Lyapunov's central limit theorem (cf. Lemma 15.41 and Theorem 15.43 in \citet{K08}). We have by definition
\begin{align*}
&\textrm{Var}\left(\frac{1}{\sqrt{nh}}\sum_{i=1}^nK\left(\frac{X_i}{h}\right)\nu\left(\frac{X_i}{h}\right)r_i(J_n,h)\right) \\
=&\frac{1}{nh}\sum_{i=1}^n\IE\left(K\left(\frac{X_i}{h}\right)^2\nu\left(\frac{X_i}{h}\right)^2r_i(J_n,h)^2\right) \\
=&\frac{1}{h}\IE\left(K\left(\frac{X_i}{h}\right)^2\nu\left(\frac{X_i}{h}\right)^2r_i(J_n,h)^2\right)=\mathcal{S}_n^2.
\end{align*}
By the continuity and boundedness assumptions \eqref{eq:equi-cont} and \eqref{eq:uboundr} on $\IE(r_i(J_n,h)^2|X_i)$ and $\IE(|r_{n,ij}|^{2+\delta}|X_i)$, we conclude that there are constants $\alpha_0,\alpha_1>0$ such that for some $\delta>0$ and $n\to\infty$
\begin{align*}
&\frac{1}{h}\IE\left(K\left(\frac{X_i}{h}\right)^{2}\nu\left(\frac{X_i}{h}\right)^2r_i(J_n,h)^{2}\right)\to\alpha_0, \\
&\frac{1}{h}\IE\left(K\left(\frac{X_i}{h}\right)^{2+\delta}\left|\nu\left(\frac{X_i}{h}\right)\right|^{2+\delta}|r_i(J_n,h)|^{2+\delta}\right)\leq\alpha_1.
\end{align*}
Thus the Lyapunov condition is fulfilled: For $n\to\infty$
\begin{align*}
\sum_{i=1}^n\frac{n^{-\frac{2+\delta}{2}}h^{-\frac{2+\delta}{2}}\IE\left(K\left(\frac{X_i}{h}\right)^{2+\delta}\left|\nu\left(\frac{X_i}{h}\right)\right|^{2+\delta}|r_i(J_n,h)|^{2+\delta}\right)}{\mathcal{S}_n^{2+\delta}}\leq\frac{\alpha_1(nh)^{-\frac{\delta}{2}}}{\left(\frac{\alpha_0}{2}\right)^{\frac{2+\delta}{2}}}\to0
\end{align*}
and we conclude
$$\frac{1}{\sqrt{nh\mathcal{S}_n^2}}\sum_{i=1}^nK\left(\frac{X_i}{h}\right)\nu\left(\frac{X_i}{h}\right)r_i(J_n,h)\to\mathcal{N}(0,1).$$
In order to finish the proof, we have to show \eqref{eq:last_step} which we will do next. Recall to this end the post selection estimators 
$$\left(\hat{\theta}_n,\hat{\gamma}_n\right)=\argmin{\theta,\gamma:\gamma_{\hat{J}_n^c}=0}\left\|\mathbf{K}_h^{\frac{1}{2}}\left(\mathbf{Y}-\mathbf{V}\theta-\mathbf{Z}\gamma\right)\right\|_2^2.$$
Note that $\hat{\gamma}_n$ is constrained to be zero for covariates not included in $\hat{J}_n$. Therefore, we have $M_n(\hat{J}_n)\mathbf{K}^{\frac{1}{2}}_h\mathbf{Z}\hat{\gamma}_n=0$. We can hence write
\begin{align}
&\frac{1}{n}\mathbf{V}^{\top}\mathbf{K}_h^{\frac{1}{2}}M_n(\hat{J}_n)\mathbf{K}_h^{\frac{1}{2}}\mathbf{Z}\gamma_{0,n}=\frac{1}{n}\mathbf{V}^{\top}\mathbf{K}_h^{\frac{1}{2}}M_n(\hat{J}_n)\mathbf{K}_h^{\frac{1}{2}}\mathbf{Z}\left(\gamma_{0,n}-\hat{\gamma}_n\right) \nonumber \\
=&\frac{1}{n}\mathbf{V}^{\top}\mathbf{K}_h\mathbf{Z}\left(\gamma_{0,n}-\hat{\gamma}_n\right)+\frac{1}{n}\mathbf{V}^{\top}\mathbf{K}_h\mathbf{Z}(\hat{J}_n)\left(\mathbf{Z}(\hat{J}_n)^{\top}\mathbf{K}_h\mathbf{Z}(\hat{J}_n)\right)^{-1}\mathbf{Z}(\hat{J}_n)^{\top}\mathbf{K}_h\left(\mathbf{Z}\gamma_{0,n}-\mathbf{Z}\hat{\gamma}_n\right) \nonumber \\
=&\frac{1}{n}\mathbf{V}^{\top}\mathbf{K}_h\mathbf{Z}\left(\gamma_{0,n}-\hat{\gamma}_n\right) \label{eq:l1} \\
&+\frac{1}{n}\mathbf{V}^{\top}\mathbf{K}_h\mathbf{Z}(\hat{J}_n)\left(\mathbf{Z}(\hat{J}_n)^{\top}\mathbf{K}_h\mathbf{Z}(\hat{J}_n)\right)^{-1}\mathbf{Z}(\hat{J}_n)^{\top}\mathbf{K}_h\mathbf{V}\left(\hat{\theta}_n-\theta_{0,n}\right) \label{eq:l2} \\
&-\frac{1}{n}\mathbf{V}^{\top}\mathbf{K}_h\mathbf{Z}(\hat{J}_n)\left(\mathbf{Z}(\hat{J}_n)^{\top}\mathbf{K}_h\mathbf{Z}(\hat{J}_n)\right)^{-1}\mathbf{Z}(\hat{J}_n)^{\top}\mathbf{K}_h\mathbf{r}_n \label{eq:l3} \\
&+\frac{1}{n}\mathbf{V}^{\top}\mathbf{K}_h\mathbf{Z}(\hat{J}_n)\left(\mathbf{Z}(\hat{J}_n)^{\top}\mathbf{K}_h\mathbf{Z}(\hat{J}_n)\right)^{-1}\mathbf{Z}(\hat{J}_n)^{\top}\mathbf{K}_h\left(\mathbf{Y}-\mathbf{Z}\hat{\gamma}_n-\mathbf{V}\hat{\theta}_n\right). \label{eq:l4}
\end{align}
Note that \eqref{eq:l4} equals zero because it contains the empirical correlation of a covariate  with the empirical residuals (which is zero). From the definition of $M_n(\hat{J}_n)$ it follows from \eqref{eq:mat2} of Proposition \ref{prop:VV} that $\eqref{eq:l3}=o_P(1/\sqrt{nh})$ by the conditions on $\delta_n$. Hence, in order to prove \eqref{eq:last_step}, we have to prove that \eqref{eq:l1} and \eqref{eq:l2} are both of order $o_P(1/\sqrt{nh})$. We do this by studying the rate of convergence of $\hat{\theta}_n$ and $\hat{\gamma}_n$. We note firstly that
$$\frac{1}{n}\begin{pmatrix}
\theta_{0,n}-\hat{\theta}_n \\ \gamma_{0,n}-\hat{\gamma}_n
\end{pmatrix}^{\top}\begin{pmatrix}
\mathbf{V}^{\top} \\ \mathbf{Z}^{\top}
\end{pmatrix}\mathbf{K}_h\begin{pmatrix}
\mathbf{V} & \mathbf{Z}
\end{pmatrix}\begin{pmatrix}
\theta_{0,n}-\hat{\theta}_n \\ \gamma_{0,n}-\hat{\gamma}_n
\end{pmatrix}\geq\phi(|\hat{J}_n|,J_n)\left\|\begin{pmatrix}
\theta_{0,n}-\hat{\theta}_n \\ \gamma_{0,n}-\hat{\gamma}_n
\end{pmatrix}\right\|_2^2,$$
where $\phi$ is defined as in Definition \ref{def:RSE}. Since we assume the restricted sparse eigenvalue condition $\textrm{RSE}(|\hat{J}_n|,J_n,h)$ we have that $\phi(|\hat{J}_n|,J_n)^{-1}=O_P(1)$ and we conclude that
\begin{equation}
\label{eq:rate1}
\left\|\begin{pmatrix}
\theta_{0,n}-\hat{\theta}_n \\ \gamma_{0,n}-\hat{\gamma}_n
\end{pmatrix}\right\|_2=O_P\left(\frac{1}{\sqrt{n}}\left\|\mathbf{K}_h^{\frac{1}{2}}\begin{pmatrix}
\mathbf{V} & \mathbf{Z}
\end{pmatrix}\begin{pmatrix}
\theta_{0,n}-\hat{\theta}_n \\ \gamma_{0,n}-\hat{\gamma}_n
\end{pmatrix}\right\|_2\right).
\end{equation}
The proof strategy is now similar to \citet{BC13}. At first we note that, by definition for any index set $I_n$
\begin{align}
&\frac{1}{n}\left\|\mathbf{K}^{\frac{1}{2}}\left(\mathbf{Y}-\mathbf{V}\hat{\theta}_n-\mathbf{Z}\hat{\gamma}_n\right)\right\|_2^2-\frac{1}{n}\left\|\mathbf{K}^{\frac{1}{2}}\left(\mathbf{Y}-\mathbf{V}\theta_{0,n}-\mathbf{Z}\gamma_{0,n}\right)\right\|_2^2 \nonumber \\
\leq&\left\{\begin{array}{l}
\frac{1}{n}\left\|\mathbf{K}^{\frac{1}{2}}\left(\mathbf{Y}-\mathbf{V}\theta_{0,n}-\mathbf{Z}\tilde{\gamma}_n\right)\right\|_2^2-\frac{1}{n}\left\|\mathbf{K}^{\frac{1}{2}}\left(\mathbf{Y}-\mathbf{V}\theta_{0,n}-\mathbf{Z}\gamma_{0,n}\right)\right\|_2^2=:B_n \\
\frac{1}{n}\left\|\mathbf{K}^{\frac{1}{2}}\left(\mathbf{Y}-\mathbf{V}\theta_{0,n}-\mathbf{Z}\left(\gamma_{0,n}\right)_{\hat{J}_n\cap I_n}\right)\right\|_2^2 \\
\quad\quad\quad\quad\quad\quad\quad-\frac{1}{n}\left\|\mathbf{K}^{\frac{1}{2}}\left(\mathbf{Y}-\mathbf{V}\theta_{0,n}-\mathbf{Z}\gamma_{0,n}\right)\right\|_2^2=C_n(I_n)
\end{array}\right.. \label{eq:BC}
\end{align}
Since the index set $I_n$ will not change during this proof, we write in the following $C_n=C_n(I_n)$. Let now, $\hat{\alpha}_n=(\hat{\theta}_n^{\top} , \hat{\gamma}_n^{\top})^{\top}-(\theta_{0,n}^{\top} , \gamma_{0,n}^{\top})^{\top}$. Then, $\left\|\left(\hat{\theta}_n-\theta_{0,n}\right)_{J_n^c}\right\|_0\leq|\hat{J}_n|$ and hence we can apply Lemma \ref{lem:Q} with $m_n=|\hat{J}_n|$ to obtain that
\begin{align*}
&\frac{1}{n}\Bigg|\left\|\mathbf{K}_h^{\frac{1}{2}}\left(\mathbf{Y}-\begin{pmatrix} \mathbf{V} & \mathbf{Z}\end{pmatrix}\begin{pmatrix}
\hat{\theta}_n \\ \hat{\gamma}_n
\end{pmatrix}\right)\right\|_2^2-\left\|\mathbf{K}_h^{\frac{1}{2}}\left(\mathbf{Y}-\begin{pmatrix} \mathbf{V} & \mathbf{Z}\end{pmatrix}\begin{pmatrix}
\theta_{0,n} \\ \gamma_{0,n}
\end{pmatrix}\right)\right\|_2^2 \\
&\quad\quad\quad\quad\quad-\left\|\mathbf{K}_h^{\frac{1}{2}}\begin{pmatrix} \mathbf{V} & \mathbf{Z}\end{pmatrix}\hat{\alpha}_n\right\|_2^2\Bigg| \\
\leq&\rho_n(|\hat{J}_n|)\frac{1}{\sqrt{n}}\left\|\mathbf{K}_h^{\frac{1}{2}}\begin{pmatrix} \mathbf{V} & \mathbf{Z}\end{pmatrix}\hat{\alpha}_n\right\|_2
\end{align*}
which in turn implies together with \eqref{eq:BC}
\begin{align*}
&\frac{1}{n}\left\|\mathbf{K}_h^{\frac{1}{2}}\begin{pmatrix} \mathbf{V} & \mathbf{Z}\end{pmatrix}\hat{\alpha}_n\right\|_2^2-\rho_n(|\hat{J}_n|)\frac{1}{\sqrt{n}}\left\|\mathbf{K}_h^{\frac{1}{2}}\begin{pmatrix} \mathbf{V} & \mathbf{Z}\end{pmatrix}\hat{\alpha}_n\right\|_2 \\
\leq&\frac{1}{n}\left\|\mathbf{K}_h^{\frac{1}{2}}\left(\mathbf{Y}-\begin{pmatrix} \mathbf{V} & \mathbf{Z}\end{pmatrix}\begin{pmatrix}
\hat{\theta}_n \\ \hat{\gamma}_n
\end{pmatrix}\right)\right\|_2^2-\frac{1}{n}\left\|\mathbf{K}_h^{\frac{1}{2}}\left(\mathbf{Y}-\begin{pmatrix} \mathbf{V} & \mathbf{Z}\end{pmatrix}\begin{pmatrix}
\theta_{0,n} \\ \gamma_{0,n}
\end{pmatrix}\right)\right\|_2^2\leq\min(B_n,C_n).
\end{align*}
It is elementary to prove that $x_n^2-p_nx_n\leq q_n$ for non-negative sequences $x_n,p_n$ and an arbitrary sequence $q_n$ implies $x_n\leq p_n+\sqrt{\max(q_n,0)}$. Hence, the above gives us
$$\frac{1}{\sqrt{n}}\left\|\mathbf{K}_h^{\frac{1}{2}}\begin{pmatrix} \mathbf{V} & \mathbf{Z}\end{pmatrix}\hat{\alpha}_n\right\|_2\leq \rho_n(|\hat{J}_n|)+\sqrt{\max(0,\min(B_n,C_n))}.$$
Thus, we obtain from \eqref{eq:rate1} that
\begin{equation}
\label{eq:rtg}
\left\|\begin{pmatrix}
\theta_{0,n}-\hat{\theta}_n \\ \gamma_{0,n}-\hat{\gamma}_n
\end{pmatrix}\right\|_2=O_P\left(\rho_n(|\hat{J}_n|)+\sqrt{\max(0,\min(B_n,C_n))}\right).
\end{equation}
We use this to show that \eqref{eq:l1} and \eqref{eq:l2} are both $o_P(1/\sqrt{nh})$. Note that \eqref{eq:l1} and \eqref{eq:l2} are both vectors of length $4$ and hence we may use any norm to study the asymptotics. Moreover, for any matrix $M$, we denote by $|M|_{\infty}$ the supremum over the absolute values of the entries of $M$. Set moreover $\zeta_n=\sqrt{\max(0,\min(B_n,C_n))}$. We begin with \eqref{eq:l1}. Note therefore that $\left\|\gamma_{0,n}-\hat{\gamma}_n\right\|_1\leq(|J_n|+|\hat{J}_n|)^{\frac{1}{2}}\left\|\gamma_{0,n}-\hat{\gamma}_n\right\|_2$. Then we obtain for some constant $C>0$ from Lemma \ref{lem:ZV} and by using \eqref{eq:rtg} in which we substitute the rate of $\rho(|\hat{J}_n|)$ from Lemma \ref{lem:Q}:
\begin{align*}
&\sqrt{nh}\|\eqref{eq:l1}\|_1\leq\sqrt{nh}\left|\frac{1}{n}\mathbf{V}^{\top}\mathbf{K}_h\mathbf{Z}\right|_{\infty}\left\|\gamma_{0,n}-\hat{\gamma}_n\right\|_1 \\
\leq&C\sqrt{nh}\left(\sqrt{\frac{\log p_n}{nh}}+h^2\right)\left(|J_n|+|\hat{J}_n|\right)^{\frac{1}{2}}\left(\left(|J_n|+|\hat{J}_n|\right)^{\frac{1}{2}}\left(\frac{1}{\sqrt{nh}}+\delta_n\right)+\zeta_n\right) \\
=&C\left(\sqrt{\log p_n}+\sqrt{nh^5}\right)\left(\left(|J_n|+|\hat{J}_n|\right)\left(\frac{1}{\sqrt{nh}}+\delta_n\right)+\left(|J_n|+|\hat{J}_n|\right)^{\frac{1}{2}}\zeta_n\right).
\end{align*}
The above converges to zero by the assumptions on the rates in (CMS). Hence, $\eqref{eq:l1}=o_P(1/\sqrt{nh})$. For \eqref{eq:l2} we obtain, for a different constant $C>0$, by applying the bound \eqref{eq:stand21} together with Lemma \ref{lem:ZV} and \eqref{eq:rtg} where we again replace $\rho(|\hat{J}_n|)$ as in Lemma \ref{lem:Q})
\begin{align*}
&\sqrt{nh}\|\eqref{eq:l2}\|_1\leq\sqrt{nh}\left|\frac{1}{n}\mathbf{V}^{\top}\mathbf{K}_h\mathbf{Z}(\hat{J}_n)\left(\mathbf{Z}(\hat{J}_n)^{\top}\mathbf{K}_h\mathbf{Z}(\hat{J}_n)\right)^{-1}\mathbf{Z}(\hat{J}_n)^{\top}\mathbf{K}_h\mathbf{V}\right|_{\infty}\left\|\hat{\theta}_n-\theta_{0,n}\right\|_1 \\
\leq&C|\hat{J}_n|\sqrt{nh}\sqrt{\Phi\left(\left|\hat{J}_n\right|,J_n\right)}\left(\sqrt{\frac{\log p_n}{nh}}+h^2\right)^2 \\
&\quad\quad\quad\quad\quad\times\left(\left(|J_n|+|\hat{J}_n|\right)\left(\frac{1}{\sqrt{nh}}+\delta_n\right)+\left(|J_n|+|\hat{J}_n|\right)^{\frac{1}{2}}\zeta_n\right) \\
\leq&C|\hat{J}_n|\sqrt{\Phi\left(\left|\hat{J}_n\right|,J_n\right)}\left(\sqrt{\frac{\log p_n}{nh}}+h^2\right) \\
&\quad\quad\quad\times \left(\sqrt{\log p_n}+\sqrt{nh^5}\right)\left(\left(|J_n|+|\hat{J}_n|\right)\left(\frac{1}{\sqrt{nh}}+\delta_n\right)+\left(|\hat{J}_n|+|J_n|\right)^{\frac{1}{2}}\zeta_n\right).
\end{align*}
Since we assume $\textrm{RSE}(|\hat{J}_n|,J_n,h)$ and the rates in (CMS), we have that the above converges to zero and the proof is complete.
\end{proof}

\subsubsection{Supporting Results}
\label{subsubsec:sup_res_gc}
In the following Lemma the difference between regression discontinuity design and and regular treatment effects becomes visible: We assume only asymptotically that the covariates and the treatment indicator are uncorrelated, moreover we do not model explicitly the relation between them. Thus we can only obtain that the correlation converges to zero.
\begin{lemma}
\label{lem:ZV}
Let $K$ be second order and compactly supported and let $h\to0$, $p_n\to\infty$ and $\log p_n/nh\to0$. Suppose that $f_X$ and $\mu_{Z^{(k)}}(x)$ are twice differentiable on a neighborhood around $0$ with $\mu_{Z^{(k)}}(0)=\mu_{Z^{(k)}}'(0)=0$ and $f_X''$ continuous and $\mu_{Z^{(k)}}''$ fulfill \eqref{eq:sd_bound}. Let furthermore \eqref{eq:mombound1} and \eqref{eq:mombound2} hold. Then, we have
\begin{align}
\sup_{a\in\{1,...,4\}}\left\|\IE\left(\frac{1}{n}\mathbf{Z}^{\top}\mathbf{K}_h\mathbf{V}_{\cdot a}\right)\right\|_{\infty}=O(h^2), \label{eq:ZV1} \\
\sup_{a\in\{1,...,4\}}\left\|\frac{1}{n}\mathbf{Z}^{\top}\mathbf{K}_h\mathbf{V}_{\cdot a}\right\|_{\infty}=O_P\left(\sqrt{\frac{\log p_n}{nh}}\right)+O(h^2). \label{eq:ZV2}
\end{align}
\end{lemma}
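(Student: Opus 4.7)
The plan is to prove the two bounds separately: a Taylor-expansion bias estimate for the expected value, and a Bernstein-type concentration estimate for the deviation, combined with a union bound over the $p_n$ coordinates.

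For \eqref{eq:ZV1}, fix $k\in\{1,\ldots,p_n\}$ and $a\in\{1,\ldots,4\}$. Each column $V_i^{(a)}$ has the form $1$, $T_i=\Ind(X_i\geq0)$, $X_i/h$, or $T_iX_i/h$, so conditioning on $X_i$ shows that the $(k,a)$-entry of $\IE(n^{-1}\mathbf{Z}^\top\mathbf{K}_h\mathbf{V}_{\cdot a})$ equals
\begin{equation*}
\int_{I_a}K(u)\,u^{e_a}\,\mu_{Z^{(k)}}(uh)\, f_X(uh)\,du,
\end{equation*}
with $I_a\in\{\IR,(0,\infty)\}$ and $e_a\in\{0,1\}$. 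Since $\mu_{Z^{(k)}}(0)=\mu_{Z^{(k)}}'(0)=0$ and $\mu_{Z^{(k)}}''$ is uniformly bounded in $k$ on $[-h,h]$ by \eqref{eq:sd_bound}, a second-order Taylor expansion around $0$ gives $|\mu_{Z^{(k)}}(uh)|\leq \tfrac{C}{2}u^2h^2$. Combined with the compact support of $K$ and boundedness of $f_X$ near $0$, this yields the uniform $O(h^2)$ estimate in \eqref{eq:ZV1}.

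For \eqref{eq:ZV2} it suffices to bound the deviation from the mean. Setting $B_{n,i}^{(k,a)}=Z_i^{(k)}V_i^{(a)}$, we have $|V_i^{(a)}|\leq 1+|X_i/h|$, hence $\IE(|B_{n,i}^{(k,a)}|^m\mid X_i=x)\leq(1+|x/h|)^m\mu_{k,m}(x)$. Because $(1+|u|)^m\leq 2^{m-1}(1+|u|^m)$, the hypotheses \eqref{eq:mombound1} and \eqref{eq:mombound2} translate directly (up to factors of $2^{m-1}$ that can be absorbed into enlarged constants $\sigma_0^2,c,c^{\ast}$) into exactly the moment conditions required by Lemma \ref{lem:expbound} applied to $B_{n,i}^{(k,a)}$. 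That lemma then yields, for each $(k,a)$ and a sufficiently large constant $C>0$,
\begin{equation*}
\IP\!\left(\left|\frac{1}{n}\sum_{i=1}^n\left(K_h(X_i)B_{n,i}^{(k,a)}-\IE(K_h(X_i)B_{n,i}^{(k,a)})\right)\right|>C\sqrt{\frac{\log p_n}{nh}}\right)\leq 2p_n^{-C}.
\end{equation*}
A union bound over the $4p_n$ pairs $(k,a)$ gives an $O_P(\sqrt{\log p_n/(nh)})$ bound on the max deviation, and combining with the $O(h^2)$ bias from \eqref{eq:ZV1} produces \eqref{eq:ZV2}.

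The main point to verify is that the additional factor $V_i^{(a)}$ does not spoil the sub-exponential tail structure required by Lemma \ref{lem:expbound}. This is harmless because $K$ is supported on $[-1,1]$, so $|X_i/h|\leq 1$ whenever $K(X_i/h)\neq0$; the factor $(1+|u|^m)$ already built into \eqref{eq:mombound1} then absorbs the resulting constants. Apart from this book-keeping, the argument is a standard Bernstein-plus-union-bound combined with a Taylor expansion that exploits $\mu_{Z^{(k)}}(0)=\mu_{Z^{(k)}}'(0)=0$.
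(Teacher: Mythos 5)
Your proposal is correct and follows essentially the same route as the paper: the $O(h^2)$ bias bound comes from a second-order expansion exploiting $\mu_{Z^{(k)}}(0)=\mu_{Z^{(k)}}'(0)=0$ together with the uniform bound \eqref{eq:sd_bound}, and the stochastic term is handled by applying Lemma \ref{lem:expbound} to $B_i=Z_i^{(k)}V_i^{(a)}$ (using $\IE(|Z_i^{(k)}V_i^{(a)}|^m\mid X_i=x)\leq(1+|x/h|^m)\mu_{k,m}(x)$ to verify its hypotheses) followed by a union bound over $k$ and $a$. The only cosmetic difference is that the paper expands the product $\mu_{Z^{(k)}}f_X$ via its generic kernel expansion rather than expanding $\mu_{Z^{(k)}}$ and bounding $f_X$ separately; both are valid.
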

\begin{proof}
Since $a=1,...,4$ can only take finitely many values, it suffices to prove the statements \eqref{eq:ZV1} and \eqref{eq:ZV2} for an arbitrary $a\in\{1,...,4\}$. Let $a$ be thus fixed. Note that
$$\frac{1}{n}\mathbf{Z}^{\top}\mathbf{K}_h\mathbf{V}_{\cdot a}=\frac{1}{n}\sum_{i=1}^nK_h(X_i)Z_iV_i^{(a)}.$$
Note that the bounds on $\mu_{Z^{(k)}}''$ and the differentiability of $f_X$ imply a similar bound for $\left(\mu_{Z^{(k)}}f_X\right)''$, i.e.,
$$C=\sup_{n\in\IN}\sup_{k\in\{1,...,p_n\}}\sup_{u\in[0,1]}\left|\left(\mu_{Z^{(k)}}f_X\right)''(uh)\right|+\left|\left(\mu_{Z^{(k)}}f_X\right)''(-uh)\right|<\infty.$$
We begin by computing the expectation in \eqref{eq:ZV1}. By using that $\mu_{Z^{(k)}}(x)$ is differentiable and that $\mu_{Z^{(k)}}(0)=\mu_{Z^{(k)}}'(0)=0$ we obtain for all $k=1,...,p_n$ by an argument as in \eqref{eq:kernel} that:
\begin{align}
\left|\IE\left(K_h(X_i)Z_i^{(k)}V_i^{(a)}\right)\right|\leq&\frac{1}{2}h^2\cdot C. \label{eq:mom1} 
\end{align}
This implies \eqref{eq:ZV1} and that
\begin{equation}
\label{eq:step1}
\left\|\frac{1}{n}\mathbf{Z}^{\top}\mathbf{K}_h\mathbf{V}_{\cdot a}\right\|_{\infty}\leq\left\|\frac{1}{n}\mathbf{Z}^{\top}\mathbf{K}_h\mathbf{V}_{\cdot a}-\IE\left(\frac{1}{n}\mathbf{Z}^{\top}\mathbf{K}_h\mathbf{V}_{\cdot a}\right)\right\|_{\infty}+O(h^2).
\end{equation}
For the first part above we will apply Lemma \ref{lem:expbound} with $B_i=Z_i^{(k)}V_i^{(a)}$. The integral conditions hold true by assumption because $\IE\left(\left|Z_i^{(k)}V_i^{(a)}\right|^m\big|X_i=x\right)\leq (1+(|x|/h)^m)\mu_{k,m}(x)$. The conditions on $n,p_n,h$ follow because we assume that $h\to0$ and $\log p_n/nh\to0$. Note furthermore that all constants in the assumptions do not depend on $k$. We may thus apply Lemma \ref{lem:expbound} simultaneously for all $k=1,...,p_n$ and obtain for $x$ large enough as in Lemma \ref{lem:expbound} (but $x>1$) for $\epsilon_n=x\sqrt{\log p_n/nh}$ by the union bound that
\begin{align}
&\IP\left(\max_{k\in\{1,...,p_n\}}\frac{1}{n}\sum_{i=1}^n\left(K_h(X_i)Z_i^{(k)}V_i^{(a)}-\IE\left(K_h(X_i)Z_i^{(k)}V_i^{(a)}\right)\right)>\epsilon_n\right) \nonumber \\
\leq&p_n\max_{k\in\{1,...,p_n\}}\IP\left(\frac{1}{n}\sum_{i=1}^n\left(K_h(X_i)Z_i^{(k)}V_i^{(a)}-\IE\left(K_h(X_i)Z_i^{(k)}V_i^{(a)}\right)\right)>\epsilon_n\right) \nonumber \\
\leq& p_n^{1-x}\to0
\end{align}
which implies
$$\left\|\frac{1}{n}\mathbf{Z}^{\top}\mathbf{K}_h\mathbf{V}_{\cdot a}-\IE\left(\frac{1}{n}\mathbf{Z}^{\top}\mathbf{K}_h\mathbf{V}_{\cdot a}\right)\right\|_{\infty}=O_P\left(\sqrt{\frac{\log p_n}{nh}}\right).$$
This, together with \eqref{eq:step1}, completes the proof of \eqref{eq:ZV2}.
\end{proof}

The previous results are used in the following way to understand the behaviour of the projection matrix $\mathbf{M}_n(\hat{J}_n)$.
\begin{proposition}
\label{prop:VV}
Let all notation be as above and let the assumptions of Lemmas \ref{lem:SigmaV} and \ref{lem:ZV} be true. Suppose that $\textrm{RSE}(|\hat{J}_n|,J_n,h)$ holds and $\frac{|\hat{J}_n|\log p_n}{nh}=o_P(1)$ as well as $|\hat{J}_n|h^4=o_P(1)$. We have then that
\begin{align}
&\frac{1}{n}\mathbf{V}^{\top}\mathbf{K}_h^{\frac{1}{2}}\mathbf{M}_n(\hat{J}_n)\mathbf{K}_h^{\frac{1}{2}}\mathbf{V}\overset{\IP}{\to}f_X(0)\kappa(K) \label{eq:mat1} \\
&\frac{1}{n}\mathbf{V}^{\top}\mathbf{K}_h^{\frac{1}{2}}\mathbf{M}_n(\hat{J}_n)\mathbf{K}_h^{\frac{1}{2}}\mathbf{r}_n=\frac{1}{n}\mathbf{V}^{\top}\mathbf{K}_h\mathbf{r}_n+O_P\left(\left\|\frac{1}{n}\mathbf{Z}^{\top}\mathbf{K}_h\mathbf{r}_n\right\|_{\infty}\cdot|\hat{J}_n|\left(\sqrt{\frac{\log p_n}{nh}}+h^2\right)\right). \label{eq:mat2}
\end{align}
\end{proposition}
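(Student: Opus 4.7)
The plan is to decompose $\mathbf{M}_n(\hat J_n)=\mathbf{I}_n-\mathbf{K}_h^{1/2}\mathbf{Z}(\hat J_n)(\mathbf{Z}(\hat J_n)^\top\mathbf{K}_h\mathbf{Z}(\hat J_n))^{-1}\mathbf{Z}(\hat J_n)^\top\mathbf{K}_h^{1/2}$ into the identity plus the orthogonal projector onto the column span of $\mathbf{K}_h^{1/2}\mathbf{Z}(\hat J_n)$. Substituted into \eqref{eq:mat1} and \eqref{eq:mat2}, this produces in each case a ``main term'' (from $\mathbf{I}_n$) plus a ``correction term''. The main term of \eqref{eq:mat1} is $n^{-1}\mathbf{V}^\top\mathbf{K}_h\mathbf{V}$, which converges to $f_X(0)\kappa(K)$ by Lemma \ref{lem:SigmaV}; the main term of \eqref{eq:mat2} is $n^{-1}\mathbf{V}^\top\mathbf{K}_h\mathbf{r}_n$, which is already the desired leading quantity. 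The whole content of the proposition is therefore that the correction terms are negligible.

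Introduce the rescaled objects $\tilde A=n^{-1}\mathbf{V}^\top\mathbf{K}_h\mathbf{Z}(\hat J_n)\in\IR^{4\times|\hat J_n|}$, $\tilde B=n^{-1}\mathbf{Z}(\hat J_n)^\top\mathbf{K}_h\mathbf{Z}(\hat J_n)$ and $\tilde C=n^{-1}\mathbf{Z}(\hat J_n)^\top\mathbf{K}_h\mathbf{r}_n$. A direct rescaling shows that the correction in \eqref{eq:mat1} is exactly $\tilde A\tilde B^{-1}\tilde A^\top$ and the correction in \eqref{eq:mat2} is $\tilde A\tilde B^{-1}\tilde C$. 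The restricted sparse-eigenvalue hypothesis RSE$(|\hat J_n|,J_n,h)$ applied to vectors of the form $\alpha=(0,\gamma)$ with $\gamma$ supported in $\hat J_n$ (which are admissible since $|\{i\in J_n^c:\gamma_i\ne 0\}|\le |\hat J_n|$) gives $\lambda_{\min}(\tilde B)\ge \phi(|\hat J_n|,J_n)$, hence $\|\tilde B^{-1}\|_2=O_P(1)$.

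The main work is converting the entrywise estimates from Lemma \ref{lem:ZV} into operator-norm bounds. Since $\tilde A$ has $4|\hat J_n|$ entries, $\|\tilde A\|_2\le\|\tilde A\|_F\le 2\sqrt{|\hat J_n|}\,\|\tilde A\|_\infty$, and Lemma \ref{lem:ZV} gives $\|\tilde A\|_\infty=O_P(\sqrt{\log p_n/(nh)}+h^2)$. Submultiplicativity then yields
$$\|\tilde A\tilde B^{-1}\tilde A^\top\|_2\le\|\tilde B^{-1}\|_2\|\tilde A\|_2^2=O_P\!\left(|\hat J_n|\Big(\tfrac{\log p_n}{nh}+h^4\Big)\right),$$
which is $o_P(1)$ by the hypotheses $|\hat J_n|\log p_n/(nh)\to 0$ and $|\hat J_n|h^4\to 0$, proving \eqref{eq:mat1}. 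For \eqref{eq:mat2}, $\tilde C\in\IR^{|\hat J_n|}$ gives $\|\tilde C\|_2\le\sqrt{|\hat J_n|}\,\|\tilde C\|_\infty\le\sqrt{|\hat J_n|}\,\|n^{-1}\mathbf{Z}^\top\mathbf{K}_h\mathbf{r}_n\|_\infty$, and combining all three factor bounds gives exactly
$$\|\tilde A\tilde B^{-1}\tilde C\|_2=O_P\!\left(|\hat J_n|\big(\tfrac{\sqrt{\log p_n/(nh)}}+h^2\big)\cdot\big\|n^{-1}\mathbf{Z}^\top\mathbf{K}_h\mathbf{r}_n\big\|_\infty\right),$$
which is the stated remainder.

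The main obstacle is simply to avoid overcounting dimensions when converting $\ell_\infty$ entrywise rates into $\ell_2$ operator-norm rates: the naive bound $\|\tilde A\|_2\le n\|\tilde A\|_\infty\cdot\sqrt{|\hat J_n|}$ would be off by a factor of $n$. The correct factorisation $\frac1n AB^{-1}A^\top=(A/n)(B/n)^{-1}(A/n)^\top$ ensures that the rate-carrying factors are the rescaled matrices $\tilde A$, $\tilde B^{-1}$, $\tilde C$, for which Lemma \ref{lem:ZV} and the RSE assumption are tight enough to give $o_P(1)$ under the stated side conditions. All other steps are routine matrix norm manipulations.
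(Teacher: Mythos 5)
Your proof is correct and follows essentially the same route as the paper's: the same decomposition of $\mathbf{M}_n(\hat J_n)$ into the identity minus the projection onto $\mathbf{K}_h^{1/2}\mathbf{Z}(\hat J_n)$, Lemma \ref{lem:SigmaV} for the leading term, and the RSE condition combined with the entrywise rates of Lemma \ref{lem:ZV} to show the correction terms are negligible, with the factor $|\hat J_n|$ arising in both arguments from converting $\ell_\infty$ bounds into $\ell_2$ bounds. The only difference is cosmetic --- you control the correction via Frobenius/operator norms and $\|\tilde B^{-1}\|_2\le\phi(|\hat J_n|,J_n)^{-1}$, whereas the paper argues entry by entry through an $\ell_1$--$\ell_\infty$ H\"older bound --- and the resulting rates coincide.
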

\begin{proof}
By definition \eqref{eq:defM} we have
$$\frac{1}{n}\mathbf{V}^{\top}\mathbf{K}_h^{\frac{1}{2}}\mathbf{M}_n(\hat{J}_n)\mathbf{K}_h^{\frac{1}{2}}\mathbf{V}=\frac{1}{n}\mathbf{V}^{\top}\mathbf{K}_h\mathbf{V}-\frac{1}{n}\mathbf{V}^{\top}\mathbf{K}_h\mathbf{Z}(\hat{J}_n)\left(\mathbf{Z}(\hat{J}_n)^{\top}\mathbf{K}_h\mathbf{Z}(\hat{J}_n)\right)^{-1}\mathbf{Z}(\hat{J}_n)^{\top}\mathbf{K}_h\mathbf{V}.$$
The first term above converges by Lemma \ref{lem:SigmaV} to the quantity we claim in the lemma. It remains to show that the second part converges to zero in probability. Recall to this end the definition of $\Phi(m_n,J_n)$ in Definition \ref{def:RSE}. We note that for all $a,b\in\{1,...,4\}$ (below $\|.\|$ denotes the Euclidean norm for vectors and the spectral norm for matrices)
\begin{align}
&\left[\frac{1}{n}\mathbf{V}^{\top}\mathbf{K}_h\mathbf{Z}(\hat{J}_n)\left(\mathbf{Z}(\hat{J}_n)^{\top}\mathbf{K}_h\mathbf{Z}(\hat{J}_n)\right)^{-1}\mathbf{Z}(\hat{J}_n)^{\top}\mathbf{K}_h\mathbf{V}\right]_{a,b} \nonumber \\
\leq&\left\|\left(\frac{1}{n}\mathbf{Z}(\hat{J}_n)^{\top}\mathbf{K}_h\mathbf{Z}(\hat{J}_n)\right)^{-1}\frac{1}{n}\mathbf{Z}(\hat{J}_n)^{\top}\mathbf{K}_h\mathbf{V}_{\cdot a}\right\|_1\cdot\left\|\frac{1}{n}\mathbf{Z}^{\top}\mathbf{K}_h\mathbf{V}_{\cdot b}\right\|_{\infty} \nonumber\\
\leq&|\hat{J}_n|^{\frac{1}{2}}\sqrt{\Phi(|\hat{J}_n|,J_n)}\cdot\left\|\frac{1}{n}\mathbf{Z}(\hat{J}_n)^{\top}\mathbf{K}_h\mathbf{V}_{\cdot a}\right\|\cdot\left\|\frac{1}{n}\mathbf{Z}(\hat{J}_n)^{\top}\mathbf{K}_h\mathbf{V}_{\cdot b}\right\|_{\infty} \nonumber \\
\leq&|\hat{J}_n|\sqrt{\Phi(|\hat{J}_n|,J_n)}\cdot\left\|\frac{1}{n}\mathbf{Z}^{\top}\mathbf{K}_h\mathbf{V}_{\cdot a}\right\|_{\infty}\cdot\left\|\frac{1}{n}\mathbf{Z}^{\top}\mathbf{K}_h\mathbf{V}_{\cdot b}\right\|_{\infty}. \label{eq:stand21}
\end{align}
By assumption we have $\Phi(|\hat{J}_n|,J_n)=O_P(1)$ and we just have to deal with the two infinity-norms. From Lemma \ref{lem:ZV} we find that
\begin{align*}
&|\hat{J}_n|\left(\sup_{a=1,...,4}\left\|\frac{1}{n}\mathbf{Z}^{\top}\mathbf{K}_h\mathbf{V}_{\cdot a}\right\|_{\infty}\right)^2=O_P\left(\frac{|\hat{J}_n|\log p_n}{nh}\right)+O_P(|\hat{J}_n|h^4)=o_P(1)
\end{align*}
by the assumptions on $|\hat{J}_n|$ and $h$. Thus, we have shown \eqref{eq:mat1}. In order to show \eqref{eq:mat2} we use very similar arguments: Firstly,
$$\frac{1}{n}\mathbf{V}^{\top}\mathbf{K}_h^{\frac{1}{2}}\mathbf{M}_n(\hat{J}_n)\mathbf{K}_h^{\frac{1}{2}}\mathbf{r}_n=\frac{1}{n}\mathbf{V}^{\top}\mathbf{K}_h\mathbf{r}_n-\frac{1}{n}\mathbf{V}^{\top}\mathbf{K}_h\mathbf{Z}(\hat{J}_n)\left(\mathbf{Z}(\hat{J}_n)^{\top}\mathbf{K}_h\mathbf{Z}(\hat{J}_n)\right)^{-1}\mathbf{Z}(\hat{J}_n)^{\top}\mathbf{K}_h\mathbf{r}_n.$$
The first part equals exactly the first part of \eqref{eq:mat2}. Thus we have left to prove that the second part above has the rate which appears in \eqref{eq:mat2}. This can be seen by the same reasoning as in \eqref{eq:stand21} and statement \eqref{eq:ZV2} from Lemma \ref{lem:ZV}. More precisely, let $a\in\{1,...,4\}$ be arbitrary, then we can finish the proof of the proposition as follows:
\begin{align*}
&\left[\frac{1}{n}\mathbf{V}^{\top}\mathbf{K}_h\mathbf{Z}(\hat{J}_n)\left(\mathbf{Z}(\hat{J}_n)^{\top}\mathbf{K}_h\mathbf{Z}(\hat{J}_n)\right)^{-1}\mathbf{Z}(\hat{J}_n)^{\top}\mathbf{K}_h\mathbf{r}_n\right]_a \\
\leq&|\hat{J}_n|\sqrt{\Phi(|\hat{J}_n|,J_n)}\cdot\left\|\frac{1}{n}\mathbf{Z}^{\top}\mathbf{K}_h\mathbf{V}_{\cdot a}\right\|_{\infty}\cdot\left\|\frac{1}{n}\mathbf{Z}^{\top}\mathbf{K}_h\mathbf{r}_n\right\|_{\infty} \\
=&O_P\left(\left\|\frac{1}{n}\mathbf{Z}^{\top}\mathbf{K}_h\mathbf{r}_n\right\|_{\infty}\cdot\left(|\hat{J}_n|\sqrt{\frac{\log p_n}{nh}}+h^2|\hat{J}_n|\right)\right)
\end{align*}
\end{proof}

\begin{lemma}
\label{lem:VR}
Suppose that $f_X$ is continuous and that for some $C\in(0,\infty)$,
\begin{equation}
\sup_{n\in\IN}\sup_{u\in[0,1]}\left|\IE(r_i(J_n,h)^2|X_i=uh)\right|+\left|\IE(r_i(J_n,h)^2|X_i=-uh)\right|\leq C. \label{eq:rsqbounded}
\end{equation}
Then,
\begin{equation}
\label{eq:VR}
\frac{1}{n}\mathbf{V}^{\top}\mathbf{K}_h\mathbf{r}_n=O_P\left(\frac{1}{\sqrt{nh}}\right).
\end{equation}
\end{lemma}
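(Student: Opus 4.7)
The plan is to exploit the first-order condition of the population least squares problem that defines the residuals $r_i(J_n,h)$. By the definition in \eqref{eq:optimal_regression}, $(\theta_0(J_n,h),\gamma_0(J_n,h))$ is the minimizer of the quadratic $\IE(K_h(X_i)(Y_i - V_i^\top\theta - Z_i(J_n)^\top\gamma)^2)$, so the partial derivative with respect to $\theta$ vanishes, giving $\IE(K_h(X_i) V_i\, r_i(J_n,h)) = 0$. Therefore each of the four components of $\frac{1}{n}\mathbf{V}^\top \mathbf{K}_h \mathbf{r}_n$ is an average of i.i.d.\ mean-zero random variables, and it suffices to bound their variance and apply Chebyshev's inequality (or Markov's inequality to the second moment) to conclude $O_P(1/\sqrt{nh})$ componentwise, which gives the same rate for the vector.

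For the variance bound, fix a component $a\in\{1,\ldots,4\}$. Since $K$ is supported on $[-1,1]$ (by Assumption (K)) and the entries of $V_i$ are either indicators or of the form $X_i/h$ (or their product), one has $(V_i^{(a)})^2 \leq 1$ on the event $\{K(X_i/h)\neq 0\}$. Thus
\begin{equation*}
\operatorname{Var}\!\left(\frac{1}{n}\sum_{i=1}^n K_h(X_i) V_i^{(a)} r_i(J_n,h)\right) \leq \frac{1}{n}\IE\!\left(K_h(X_i)^2 (V_i^{(a)})^2 r_i(J_n,h)^2\right) \leq \frac{1}{nh^2}\IE\!\left(K(X_i/h)^2 r_i(J_n,h)^2\right).
\end{equation*}
Conditioning on $X_i$ and performing the change of variables $u = x/h$ turns this into
\begin{equation*}
\frac{1}{nh}\int_{-1}^{1} K(u)^2\, \IE\!\left(r_i(J_n,h)^2\mid X_i = uh\right) f_X(uh)\, du,
\end{equation*}
and the uniform bound \eqref{eq:rsqbounded} together with the continuity (hence local boundedness) of $f_X$ yields an $O(1/(nh))$ variance bound.

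This is essentially the complete argument; there is no real obstacle, since the orthogonality condition is free from the definition of $r_i(J_n,h)$ and the kernel's compact support neatly restricts the integral to the region where \eqref{eq:rsqbounded} applies. The only small point to verify is that the entries of $V_i$ are genuinely bounded on the kernel's support, which is immediate. No tail bounds, Lasso compatibility arguments, or uniform-in-$k$ considerations are needed here, unlike in Lemma \ref{lem:ZV}, because $V_i$ has fixed dimension four and does not grow with $n$.
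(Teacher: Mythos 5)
Your proof is correct and follows essentially the same route as the paper's: both use the first-order condition of the population regression to get that each component of $\frac{1}{n}\mathbf{V}^{\top}\mathbf{K}_h\mathbf{r}_n$ is a mean-zero i.i.d.\ average, bound its second moment by $O(1/(nh))$ via the kernel change of variables together with \eqref{eq:rsqbounded} and the boundedness of $V_i$ on the kernel's support, and conclude by Chebyshev/Markov. No substantive differences.
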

\begin{proof}
We prove \eqref{eq:VR} by an application of Markov's Inequality. Note that since $\mathbf{r}_n$ is a residual we have that $\IE\left(\mathbf{V}^{\top}\mathbf{K}_h\mathbf{r}_n\right)=0$. Since $\mathbf{V}$ has only four rows, we may just work for each row individually. We hence keep $a\in\{1,...,4\}$ arbitrary but fixed in the following. By Assumption \eqref{eq:rsqbounded} we find that
$$\IE\left(\frac{1}{h}K\left(\frac{X_i}{h}\right)^2V_{i,a}^2r_i(J_n,h)^2\right)=O(1)$$
Thus, for any $\epsilon>0$ we have by Markov's Inequality and independence
\begin{align*}
&\IP\left(\left|\frac{1}{n}\mathbf{V}_{\cdot a}^{\top}\mathbf{K}_h\mathbf{r}_n\right|>\epsilon\right)\leq\frac{1}{\epsilon^2}\IE\left(\left(\frac{1}{n}\sum_{i=1}^n\frac{1}{h}K\left(\frac{X_i}{h}\right)V_{i,a}r_i(J_n,h)\right)^2\right) \\
=&\frac{1}{nh\epsilon^2}\IE\left(\frac{1}{h}K\left(\frac{X_i}{h}\right)^2V_{i,a}^2r_i(J_n,h)^2\right)=O\left((nh\epsilon^2)^{-1}\right).
\end{align*}
From the above we conclude \eqref{eq:VR} and the proof of the lemma is complete.
\end{proof}

The following result is our version of Lemma 4 in \citet{BC13}. The proof is similar but we give it here for completeness.
\begin{lemma}
\label{lem:Q}
Suppose that $\textrm{RSE}(m_n,J_n,h)$ holds for some (possibly random) sequence $m_n$. Let furthermore the conditions of Lemma \ref{lem:VR} hold. Denote $\alpha=(\theta^{\top} , \gamma^{\top})^{\top}$ for $\theta\in\IR^4$ and $\gamma\in\IR^{p_n}$. Then we have for all $\alpha$ with $\|\gamma_{J_n^c}\|_0\leq m_n$,
\begin{align}
&\frac{1}{n}\Bigg|\left\|\mathbf{K}_h^{\frac{1}{2}}\left(\mathbf{Y}-\begin{pmatrix} \mathbf{V} & \mathbf{Z}\end{pmatrix}\left(\begin{pmatrix}
\theta_{0,n} \\ \gamma_{0,n}
\end{pmatrix}+\alpha\right)\right)\right\|_2^2-\left\|\mathbf{K}_h^{\frac{1}{2}}\left(\mathbf{Y}-\begin{pmatrix} \mathbf{V} & \mathbf{Z}\end{pmatrix}\begin{pmatrix}
\theta_{0,n} \\ \gamma_{0,n}
\end{pmatrix}\right)\right\|_2^2 \nonumber \\
&\quad\quad\quad\quad-\left\|\mathbf{K}_h^{\frac{1}{2}}\begin{pmatrix} \mathbf{V} & \mathbf{Z}\end{pmatrix}\alpha\right\|_2^2\Bigg|\leq\rho_n(m_n)\frac{1}{\sqrt{n}}\left\|\mathbf{K}_h^{\frac{1}{2}}\begin{pmatrix} \mathbf{V} & \mathbf{Z}\end{pmatrix}\alpha\right\|_2, \nonumber
\end{align}
where
$$\rho_n(m_n)=O_P\left(\sqrt{|J_n|+m_n}\left(\frac{1}{\sqrt{nh}}+\frac{1}{n}\left\|\mathbf{Z}^{\top}\mathbf{K}_h\mathbf{r}_n\right\|_{\infty}\right)\right).$$
\end{lemma}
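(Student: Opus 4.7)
The plan is to perform an exact quadratic expansion of the weighted sum-of-squares around $\alpha_0 = (\theta_{0,n}^{\top},\gamma_{0,n}^{\top})^{\top}$, reducing the claim to a bound on a single cross-term, and then handle that cross-term by Hölder's inequality together with the sparsity hypothesis and the RSE condition. Writing $\mathbf{W} = \begin{pmatrix}\mathbf{V} & \mathbf{Z}\end{pmatrix}$ and using $\mathbf{Y} - \mathbf{W}\alpha_0 = \mathbf{r}_n$ (valid because $\gamma_{0,n}$ is extended by zeros outside $J_n$), expansion of the weighted square gives
$$\|\mathbf{K}_h^{1/2}(\mathbf{Y}-\mathbf{W}(\alpha_0+\alpha))\|_2^2 - \|\mathbf{K}_h^{1/2}\mathbf{r}_n\|_2^2 - \|\mathbf{K}_h^{1/2}\mathbf{W}\alpha\|_2^2 = -2\,\mathbf{r}_n^{\top}\mathbf{K}_h\mathbf{W}\alpha.$$
Hence it suffices to show that $\tfrac{2}{n}|\mathbf{r}_n^{\top}\mathbf{K}_h(\mathbf{V}\theta+\mathbf{Z}\gamma)|$ is bounded by $\rho_n(m_n)\,n^{-1/2}\|\mathbf{K}_h^{1/2}\mathbf{W}\alpha\|_2$.

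Splitting the inner product and applying Hölder's inequality $|u^{\top}v|\leq\|u\|_\infty\|v\|_1$ separately, I obtain
$$\tfrac{1}{n}|\mathbf{r}_n^{\top}\mathbf{K}_h\mathbf{V}\theta|\leq\|\tfrac{1}{n}\mathbf{V}^{\top}\mathbf{K}_h\mathbf{r}_n\|_\infty\|\theta\|_1,\qquad\tfrac{1}{n}|\mathbf{r}_n^{\top}\mathbf{K}_h\mathbf{Z}\gamma|\leq\|\tfrac{1}{n}\mathbf{Z}^{\top}\mathbf{K}_h\mathbf{r}_n\|_\infty\|\gamma\|_1.$$
Since $\mathbf{V}$ has only four columns, Lemma \ref{lem:VR} immediately yields $\|\tfrac{1}{n}\mathbf{V}^{\top}\mathbf{K}_h\mathbf{r}_n\|_\infty\leq\|\tfrac{1}{n}\mathbf{V}^{\top}\mathbf{K}_h\mathbf{r}_n\|_2 = O_P(1/\sqrt{nh})$; the $\mathbf{Z}$-piece is kept as $\|\tfrac{1}{n}\mathbf{Z}^{\top}\mathbf{K}_h\mathbf{r}_n\|_\infty$, precisely the quantity that appears in the definition of $\rho_n(m_n)$.

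To convert $\ell_1$-norms back into prediction norm, I exploit sparsity. Under the assumption $\|\gamma_{J_n^c}\|_0\leq m_n$, the vector $\gamma$ has support of size at most $|J_n|+m_n$, so Cauchy--Schwarz gives $\|\gamma\|_1\leq\sqrt{|J_n|+m_n}\,\|\gamma\|_2$, and trivially $\|\theta\|_1\leq 2\|\theta\|_2$; together this gives $\|\theta\|_1+\|\gamma\|_1\lesssim\sqrt{|J_n|+m_n}\,\|\alpha\|_2$. Since $\alpha$ satisfies the sparsity restriction built into Definition \ref{def:RSE}, the hypothesis $\textrm{RSE}(m_n,J_n,h)$ delivers $\|\alpha\|_2\leq\phi(m_n,J_n)^{-1/2}\,n^{-1/2}\|\mathbf{K}_h^{1/2}\mathbf{W}\alpha\|_2$ with $\phi(m_n,J_n)^{-1}=O_P(1)$. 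Chaining the four displays produces the claim with
$$\rho_n(m_n) = O_P\!\left(\sqrt{|J_n|+m_n}\left(\tfrac{1}{\sqrt{nh}}+\|\tfrac{1}{n}\mathbf{Z}^{\top}\mathbf{K}_h\mathbf{r}_n\|_\infty\right)\right).$$

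No step is genuinely difficult: the result is the standard Basic Inequality for weighted penalized regression in the style of \citet{BC13,BRT09}, adapted to the localization by the kernel weights. The only bookkeeping care needed is to treat the $\mathbf{V}$-block (with fixed dimension four) and the $\mathbf{Z}$-block (with large dimension $p_n$) on different footings in the $\ell_\infty$ step, while treating them uniformly in the $\ell_1$-to-$\ell_2$ conversion, and to apply RSE only to the class of vectors supported on $J_n$ together with at most $m_n$ extra coordinates outside $J_n$ --- which is exactly the class for which the hypothesis $\textrm{RSE}(m_n,J_n,h)$ has been posited.
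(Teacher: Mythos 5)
Your proposal is correct and follows essentially the same route as the paper's proof: the exact quadratic expansion reducing everything to the cross-term $2n^{-1}|\alpha^{\top}\mathbf{W}^{\top}\mathbf{K}_h\mathbf{r}_n|$, H\"older's inequality with Lemma \ref{lem:VR} for the $\mathbf{V}$-block and the retained $\ell_\infty$-norm for the $\mathbf{Z}$-block, the $\ell_1$-to-$\ell_2$ conversion via the support bound $|J_n|+m_n$, and the RSE condition to pass to the prediction norm. The only (immaterial) difference is that you split the H\"older step across the two blocks while the paper bounds $\|\alpha\|_1$ times the sum of the two $\ell_\infty$-norms in one step.
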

\begin{proof}
For ease of notation we write $\mathbf{D}=\begin{pmatrix}
\mathbf{V} & \mathbf Z
\end{pmatrix}$. Then, we have
\begin{align*}
&\frac{1}{n}\left|\left\|\mathbf{K}_h^{\frac{1}{2}}\left(\mathbf{Y}-\mathbf{D}\left(\begin{pmatrix}
\theta_{0,n} \\ \gamma_{0,n}
\end{pmatrix}+\alpha\right)\right)\right\|_2^2-\left\|\mathbf{K}_h^{\frac{1}{2}}\left(\mathbf{Y}-\mathbf{D}\begin{pmatrix}
\theta_{0,n} \\ \gamma_{0,n}
\end{pmatrix}\right)\right\|_2^2-\left\|\mathbf{K}_h^{\frac{1}{2}}\mathbf{D}\alpha\right\|_2^2\right| \\
=&\frac{2}{n}\left|\alpha^{\top}\mathbf{D}^{\top}\mathbf{K}_h\left(\mathbf{Y}-\mathbf{D}\begin{pmatrix}
\theta_{0,n} \\ \gamma_{0,n}
\end{pmatrix}\right)\right|=\frac{2}{n}\left|\alpha^{\top}\begin{pmatrix}
\mathbf{V}^{\top} \\ \mathbf{Z}^{\top} \end{pmatrix}\mathbf{K}_h\mathbf{r_n}\right| \\
\leq&2\|\alpha\|_1\left(\frac{1}{n}\left\|\mathbf{V}^{\top}\mathbf{K}_h\mathbf{r}_n\right\|_{\infty}+\frac{1}{n}\left\|\mathbf{Z}^{\top}\mathbf{K}_h\mathbf{r}_n\right\|_{\infty}\right).
\end{align*}
Since $\|\alpha_{J_n^c}\|_0\leq m_n$, we have
$$\|\alpha\|_1\leq\sqrt{|J_n|+m_n}\cdot\|\alpha\|_2\leq\frac{\sqrt{|J_n|+m_n}}{\sqrt{\phi(m_n,J_n)}}\cdot\frac{1}{\sqrt{n}}\left\|\mathbf{K}_h^{\frac{1}{2}}\mathbf{D}\alpha\right\|_2,$$
where $\phi(m_n,J_n)$ is defined in the restricted sparse eigenvalue condition. Since we assume $\textrm{RSE}(m_n,J_n,h)$ and since we can apply Lemma \ref{lem:VR} the proof is complete.
\end{proof}

\subsection{Computing the Bias}
\subsubsection{The Result and Proof Structure}
The following lemma shows the form of the bias.
\begin{lemma}
\label{lem:bias}
Let $h\to0, nh\to\infty$, $|J_n|^{1/2}h^2\to0$ and let $K$ be symmetric with $\kappa(K)$ invertible and $K^{(4)},(K^2)^{(2)}<\infty$. Suppose that $f_X$ is three times differentiable with $f_X(0)>0$. Suppose furthermore that $\|\IE(K_h(X_i)Z_i(J_n)Z_i(J_n)^{\top}\|_2=O(1)$ and that the CEFs $\mu_{Z^{(k)}}(x)=\IE(Z_i^{(k)}|X_i=x)$ are differentiable with $\mu_{Z^{(k)}}(0)=\mu_{Z_i^{(k)}}'(0)=0$ and three times one-sided differentiable such that the third derivatives extend continuously to zero with
\begin{align}
\sup_{n\in\IN}\sup_{k\in J_n}\sup_{u\in[0,1]}\left|\left(\mu_{Z^{(k)}}f_X\right)''(uh)\right|+\left|\left(\mu_{Z^{(k)}}f_X\right)''(-uh)\right|<\infty, \label{eq:zsd} \\
\sup_{n\in\IN}\sup_{k\in J_n}\sup_{u\in[0,1]}\left|\left(\mu_{Z^{(k)}}f_X\right)'''(uh)\right|+\left|\left(\mu_{Z^{(k)}}f_X\right)'''(-uh)\right|<\infty, \label{eq:ztd} \\
\sup_{n\in\IN}\sup_{k\in J_n}\sup_{u\in[0,1]}\left|\IE\left(Z_i^{(k)}Y_i\big|X_i=uh\right)\right|+\left|\IE\left(Z_i^{(k)}Y_i\big|X_i=-uh\right)\right|<\infty. \label{eq:yzcv}
\end{align}
Suppose that $\mu_Y(x)=\IE(Y_i|X_i=x)$ is three times one-sided differentiable. Denote
\begin{align*}
\mathcal{B}_n&=\frac{1}{2}\frac{K_+^{(3)}-2K_+^{(1)}K_+^{(2)}}{K_+^{(2)}-2\left(K_+^{(1)}\right)^2}\Bigg[\mu_{Y+}''-\mu_{Y-}'' \\
&\quad\quad-\sum_{k\in J_n}\left(\mu_{Z^{(k)}+}''-\mu_{Z^{(k)}-}''\right)\left[\IE(K_h(X_i)Z_iZ_i^{\top})^{-1}\IE(K_h(X_i)Z_iY_i)\right]_k\Bigg] \\
&\quad\quad\quad\quad+o(1)+O(|J_n|h^2)+O(|J_n|^{\frac{1}{2}}h).
\end{align*}
 Then,
$$\theta_{0,n}^{(2)}=\tau+h^2\mathcal{B}_n.$$
\end{lemma}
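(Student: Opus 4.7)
The plan is to exploit the Frisch--Waugh--Lovell (FWL) theorem to rewrite $\theta_{0,n}$ as a standard local-linear pseudo-estimand for a transformed outcome. With $A = \IE(K_h(X_i)V_iV_i^\top)$, $B = \IE(K_h(X_i)V_iZ_i(J_n)^\top)$, $D = \IE(K_h(X_i)Z_i(J_n)Z_i(J_n)^\top)$, and the population regression coefficient $\gamma_Y^* = D^{-1}\IE(K_h(X_i)Z_i(J_n)Y_i)$, introduce the residualized outcome $\tilde Y_i = Y_i - Z_i(J_n)^\top\gamma_Y^*$. Direct block inversion yields $\theta_{0,n} = (A-BD^{-1}B^\top)^{-1}\IE(K_h(X_i)V_i\tilde Y_i)$. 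The key structural observation is that $\mu_Z(0) = \mu_Z'(0) = 0$, so entrywise Taylor expansion of $B$ against the uniformly bounded second derivatives in \eqref{eq:zsd} gives $|B_{ak}| = O(h^2)$ uniformly in $k \in J_n$, whence $\|B\|_F = O(|J_n|^{1/2}h^2)$. Together with the hypothesis $\|D^{-1}\|_2 = O(1)$, this bounds the correction $\|BD^{-1}B^\top\|_2 = O(|J_n|h^4)$; combining with Lemmas~\ref{lem:kappa} and~\ref{lem:SigmaV} yields $(A-BD^{-1}B^\top)^{-1} = A^{-1} + O(|J_n|h^4)$ in operator norm, with $A^{-1} = (f_X(0)\kappa(K))^{-1} + O(h)$.

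The central step is the second-order expansion of $\IE(K_h V\tilde Y)$ via the kernel-moment identity \eqref{eq:kernel} applied to $f(x) = \mu_{\tilde Y}(x) = \mu_Y(x) - \mu_{Z(J_n)}(x)^\top\gamma_Y^*$. The condition $\mu_Z(0) = \mu_Z'(0) = 0$ propagates to $\mu_{\tilde Y\pm} = \mu_{Y\pm}$ and $\mu_{\tilde Y\pm}' = \mu_{Y\pm}'$, while the jump of the second derivative picks up the covariate correction
\begin{equation*}
\mu_{\tilde Y+}''-\mu_{\tilde Y-}'' = (\mu_{Y+}''-\mu_{Y-}'') - \sum_{k\in J_n}(\mu_{Z^{(k)}+}''-\mu_{Z^{(k)}-}'')\,\gamma_Y^{*(k)},
\end{equation*}
which is precisely the quantity inside $\mathcal{B}_n$. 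Taylor remainders in \eqref{eq:kernel} are controlled using the uniform third-derivative bound \eqref{eq:ztd} and the a priori estimate $\|\gamma_Y^*\|_2 = O(|J_n|^{1/2})$, which follows from \eqref{eq:yzcv} together with $\|D^{-1}\|_2 = O(1)$.

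With $A^{-1}\IE(K_h V\tilde Y)$ expanded to order $h^2$, the standard local-linear bias algebra carries through: the $h^0$ coefficient of the second component collapses to $\mu_{\tilde Y+}-\mu_{\tilde Y-} = \tau_Y$, the $h^1$ terms cancel through symmetry of $K$ combined with the slope and interaction regressors $X_i/h$ and $T_iX_i/h$ in $V_i$, and the $h^2$ coefficient evaluates to $\tfrac{1}{2}\tfrac{K_+^{(3)}-2K_+^{(1)}K_+^{(2)}}{K_+^{(2)}-2(K_+^{(1)})^2}\cdot(\mu_{\tilde Y+}''-\mu_{\tilde Y-}'')$, matching the claimed leading term. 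The main obstacle, and the source of the stated $o(1) + O(|J_n|h^2) + O(|J_n|^{1/2}h)$ corrections in $\mathcal{B}_n$, is the bookkeeping of how $|J_n|$ enters through the error analysis: the $O(|J_n|h^4)$ correction to $(A-BD^{-1}B^\top)^{-1}$ multiplied by the up-to-$O(|J_n|h^2)$ size of $\IE(K_h V\tilde Y)$ supplies one piece, while the $O(|J_n|^{1/2})$ factor from $\|\gamma_Y^*\|_2$ amplifies Taylor remainders of $\mu_{\tilde Y}$ to produce the other. Verifying that these aggregates collapse (after dividing out the prefactor $h^2$) to exactly the quoted corrections is the most delicate portion of the argument.
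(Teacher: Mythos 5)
Your overall route is the same as the paper's: block inversion (FWL) to isolate $\theta_{0,n}$, the observation that $\mu_{Z^{(k)}}(0)=\mu_{Z^{(k)}}'(0)=0$ forces $\IE(K_h(X_i)V_iZ_i^{(k)})=O(h^2)$ uniformly in $k$ so that $(A-BD^{-1}B^\top)^{-1}=A^{-1}+O(|J_n|h^4)$, and a second-order kernel expansion of $\IE(K_hV_i\tilde Y_i)$ whose $h^2$ coefficient produces $\mu_{\tilde Y+}''-\mu_{\tilde Y-}''$. This matches the paper's Lemmas \ref{lem:help1}--\ref{lem:help3} and the main computation.

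There is, however, one concrete gap: your a priori bound $\|\gamma_Y^*\|_2=O(|J_n|^{1/2})$ is too weak to deliver the stated remainder. The Taylor remainders you need to control are of the form $\sum_{k\in J_n}O(h^3)\,\gamma_Y^{*(k)}$ with the $O(h^3)$ uniform in $k$ (from \eqref{eq:ztd}); by Cauchy--Schwarz this is bounded by $O(h^3)\,|J_n|^{1/2}\,\|\gamma_Y^*\|_2$. With your bound this gives $O(|J_n|h^3)$, i.e.\ $O(|J_n|h)$ after dividing out the $h^2$ prefactor — not the claimed $O(|J_n|^{1/2}h)$. Since the lemma only assumes $|J_n|^{1/2}h^2\to0$, the quantity $|J_n|h$ need not be $o(1)$ nor $o(|J_n|^{1/2})$, so this extra factor of $|J_n|^{1/2}$ genuinely breaks both the lemma's conclusion and its downstream use in Theorem \ref{thm:main}, where the corrections must be $o(|J_n|^{1/2})$. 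The paper closes this gap with Lemma \ref{lem:bounded_gamma}, which proves the sharper bound $\|\gamma_Y^*\|_2=O(1)$: writing $Y_i=Z_i(J_n)^\top\gamma_Y^*+\epsilon_i$ with $\IE(K_h(X_i)Z_i(J_n)\epsilon_i)=0$, one gets $\IE(K_h(X_i)Y_i^2)\geq (\gamma_Y^*)^\top D\,\gamma_Y^*\geq c\|\gamma_Y^*\|_2^2$ for some $c>0$, because $\|D^{-1}\|_2=O(1)$ bounds the smallest eigenvalue of $D$ away from zero. You would need this (or an equivalent) argument; the crude bound via \eqref{eq:yzcv} does not suffice.
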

In order to reduce the notation in the proof we omit, in this subsection only, the subscript $J_n$ on the covariates $Z_i(J_n)$. Thus, in the following $Z_i\in\IR^{n\times |J_n|}$.
\begin{proof}[Proof of Lemma \ref{lem:bias}]
Let $a_1,a_2,b_1,b_2$ be as in Lemma \ref{lem:kernel} and define
\begin{align*}\kappa_{h,b}(K)=&\left[\left(I+\frac{f_X'(0)}{f_X(0)}h\begin{pmatrix}
0 & 0 & a_1 & 0 \\
0 & 0 & 0   & a_1 \\
1 & 0 & -a_2 & 0 \\
0 & 1 & 2a_2 & a_2
\end{pmatrix}+h^2\frac{f_X''(0)}{2f_X(0)}\begin{pmatrix}
a_1  & 0   & -b_1 & 0   \\
0    & a_1 & 2b_1 & b_1 \\
-a_2 & 0   &  b_2 & 0   \\
2a_2 & a_2 &    0 & b_2
\end{pmatrix}\right)^{-1}\right]_{2\cdot} \\
&\quad\quad\quad\quad-\begin{pmatrix}
0 & 1 & 0 & 0
\end{pmatrix}.
\end{align*}
By using least squares algebra, we obtain
\begin{align}
&\theta_{0,n}^{(2)} \nonumber \\
=&\left[\left(\IE(K_h(X_i)V_iV_i^{\top})-\IE(K_h(X_i)V_iZ_i^{\top})\IE(K_h(X_i)Z_iZ_i^{\top})^{-1}\IE(K_h(X_i)Z_iV_i^{\top})\right)^{-1}\right]_{2\cdot} \nonumber \\
&\times\left(\IE(K_h(X_i)V_iY_i)-\IE(K_h(X_i)V_iZ_i^{\top})\IE(K_h(X_i)Z_iZ_i^{\top})^{-1}\IE(K_h(X_i)Z_iY_i)\right) \nonumber \\
=&\left[\left(I-\IE\left(K_h(X_i)V_iV_i^{\top}\right)^{-1}\IE(K_h(X_i)V_iZ_i^{\top})\IE(K_h(X_i)Z_iZ_i^{\top})^{-1}\IE(K_h(X_i)Z_iV_i^{\top})\right)^{-1}\right]_{2\cdot} \nonumber \\
&\quad\quad\times\left(\kappa(K)^{-1}\IE(K_h(X_i)V_iV_i^{\top})\right)^{-1} \nonumber \\
&\times\kappa(K)^{-1}\left(\IE(K_h(X_i)V_iY_i)-\IE(K_h(X_i)V_iZ_i^{\top})\IE(K_h(X_i)Z_iZ_i^{\top})^{-1}\IE(K_h(X_i)Z_iY_i)\right) \label{eq:begin_lem12}
\end{align}
By the approximations from Lemma \ref{lem:help2} and \ref{lem:help3} we obtain
\begin{align}
&\left[\left(I-\IE\left(K_h(X_i)V_iV_i^{\top}\right)^{-1}\IE(K_h(X_i)V_iZ_i^{\top})\IE(K_h(X_i)Z_iZ_i^{\top})^{-1}\IE(K_h(X_i)Z_iV_i^{\top})\right)^{-1}\right]_{2\cdot} \nonumber \\
&\quad\quad\times\left(\kappa(K)^{-1}\IE(K_h(X_i)V_iV_i^{\top})\right)^{-1} \nonumber \\
=&\left(\begin{pmatrix}
0 & 1 & 0 & 0
\end{pmatrix}+O(|J_n|h^4)\right) \nonumber \\
&\times\left(f_X(0)I+f_X'(0)h\begin{pmatrix}
0 & 0 & a_1 & 0 \\
0 & 0 & 0   & a_1 \\
1 & 0 & -a_2 & 0 \\
0 & 1 & 2a_2 & a_2
\end{pmatrix}+h^2\frac{f_X''(0)}{2}\begin{pmatrix}
a_1  & 0   & -b_1 & 0   \\
0    & a_1 & 2b_1 & b_1 \\
-a_2 & 0   &  b_2 & 0   \\
2a_2 & a_2 &    0 & b_2
\end{pmatrix}+o(h^2)\right)^{-1} \nonumber \\
=&\frac{1}{f_X(0)}\left[\left(I+\frac{f_X'(0)}{f_X(0)}h\begin{pmatrix}
0 & 0 & a_1 & 0 \\
0 & 0 & 0   & a_1 \\
1 & 0 & -a_2 & 0 \\
0 & 1 & 2a_2 & a_2
\end{pmatrix}+h^2\frac{f_X''(0)}{2f_X(0)}\begin{pmatrix}
a_1  & 0   & -b_1 & 0   \\
0    & a_1 & 2b_1 & b_1 \\
-a_2 & 0   &  b_2 & 0   \\
2a_2 & a_2 &    0 & b_2
\end{pmatrix}\right)^{-1}\right]_{2\cdot} \nonumber \\
&\quad\quad\quad\quad\quad\quad\quad\quad\quad\quad\quad\quad\quad\quad\quad\quad\quad\quad\quad\quad+o(h^2)+O\left(|J_n|h^4\right) \nonumber \\
=&\frac{1}{f_X(0)}\left(\begin{pmatrix}
0 & 1 & 0 & 0
\end{pmatrix}+\kappa_{h,b}(K)\right)+o(h^2)+O\left(|J_n|h^4\right). \label{eq:c1}
\end{align}
From Lemma \ref{lem:help1} we obtain with $A=Y_i$
\begin{align}
\kappa(K)^{-1}\IE(K_h(X_i)V_iY_i)=\begin{pmatrix}
f_X(0)\mu_{Y-} \\ f_X(0)\tau \\ h\left[\mu_Yf_X\right]_-' \\ h\left(\left[\mu_Yf_X\right]_+'-\left[\mu_Yf_X\right]_-'\right)
\end{pmatrix}+h^2B(K,Y)+O(h^3). \label{eq:C2}
\end{align}
We apply now Lemma \ref{lem:help1} with $A=Z_i^{(k)}$ for each $k\in J_n$. Recall for that case  that $\mu_{Z^{(k)}}(0)=\mu_{Z^{(k)}}'(0)=0$. Let furthermore $B_k$ be the vector $B(K,A)$ for $A=Z_i^{(k)}$ and denote by $B\in\IR^{4\times |J_n|}$ the matrix which has the vectors $B_k$ as columns. With these definitions, we obtain from Lemma \ref{lem:help1}
\begin{align}
&\kappa(K)^{-1}\IE(K_h(X_i)V_iZ_i^{\top})\IE(K_h(X_i)Z_iZ_i^{\top})^{-1}\IE(K_h(X_i)Z_iY_i) \nonumber \\
=&h^2B\IE(K_h(X_i)Z_iZ_i^{\top})^{-1}\IE(K_h(X_i)Z_iY_i)+O(h^3)\IE(K_h(X_i)Z_iZ_i^{\top})^{-1}\IE(K_h(X_i)Z_iY_i). \label{eq:c3}
\end{align}
where $O(h^3)$ denotes a $4\times |J_n|$ matrix of entries of order $O(h^3)$ where for all of them the same constant can be used because we assume boundedness of the corresponding derivatives. Similarly, the entries of $B$ are uniformly bounded. We know from Lemma~\ref{lem:bounded_gamma} that
$$\left\|\IE\left(K_h(X_i)Z_iZ_i^\top\right)^{-1}\IE\left(K_h(X_i)Z_iY_i\right)\right\|_2=O(1).$$
Hence, by using the Cauchy-Schwarz Inequality we obtain for each entry $a\in\{1,...,4\}$ that
$$\left|[\eqref{eq:c3}]^{(a)}\right|\leq h^2\|B_{a\cdot}\|_2O(1)+\left\|\left[O(h^3)\right]_{a\cdot}\right\|_2O(1)=O\left(|J_n|^{\frac{1}{2}}h^2\right)+O\left(|J_n|^{\frac{1}{2}}h^3\right).$$
Hence, by bringing together the considerations \eqref{eq:c1}-\eqref{eq:c3} and by using that $\kappa_{h,b}(K)=O(h)$ and $|J_n|^{\frac{1}{2}}h^2\to0$ we finally obtain that
\begin{align}
&\theta_{0,n}^{(2)}=\left[\frac{1}{f_X(0)}\left(\begin{pmatrix}
0 & 1 & 0 & 0
\end{pmatrix}+\kappa_{h,b}(K)\right)+o(h^2)+O\left(|J_n|h^4\right)\right] \nonumber \\
&\times\kappa(K)^{-1}\left(\IE(K_h(X_i)V_iY_i)-\IE(K_h(X_i)V_iZ_i^{\top})\IE(K_h(X_i)Z_iZ_i^{\top})^{-1}\IE(K_h(X_i)Z_iY_i)\right) \nonumber \\
=&\tau+\frac{h^2}{f_X(0)}B^{(2)}(K,Y)-\frac{h^2}{f_X(0)}B_{2\cdot}\IE(K_h(X_i)Z_iZ_i^{\top})^{-1}\IE(K_h(X_i)Z_iY_i) \nonumber \\
&\quad+\kappa_{h,b}(K)\begin{pmatrix}
\mu_{Y-} \\ \tau \\ \frac{h}{f_X(0)}\left[\mu_Yf_X\right]_-' \\ \frac{h}{f_X(0)}\left(\left[\mu_Yf_X\right]_+'-\left[\mu_Yf_X\right]_-'\right)
\end{pmatrix}+o(h^2)+O(|J_n|h^4)+O(|J_n|^{\frac{1}{2}}h^3). \label{eq:almostdone}
\end{align}
The completion of the proof is straight forward but computationally tedious. We firstly need to compute $\kappa_{h,b}(K)$ explicitly. This can be done by using a computer algebra system and we report here only the result (see Lemma \ref{lem:kernel} for a definition of $a_1$)
$$\mathcal{\kappa}_{h,b}(K)=\begin{pmatrix}
o(h^2) & h^2a_1\left(\frac{f_X'(0)^2}{f_X(0)^2}-\frac{f_X''(0)}{2f_X(0)}\right)+o(h^2) & O(h^2) & -ha_1\frac{f_X'(0)}{f_X(0)}+O(h^2)
\end{pmatrix}.$$
Moreover, we compare the vectors in the definition of $B(K,A)$ with the columns of the matrix in \eqref{eq:mateq1}. Thus, we obtain from Lemma \ref{lem:kernel} that
\begin{align*}
B^{(2)}(K,A)=&\frac{a_1}{2}\Big(\mu_{A+}''f_X(0)+2\mu_{A+}'f_X'(0)+\mu_{A+}f_X''(0) \\
&-\mu_{A-}''f_X(0)-2\mu_{A-}'f_X'(0)-\mu_{A-}f_X''(0)\Big).
\end{align*}
Hence, in particular
\begin{align}
B^{(2)}(K,Y)=&\frac{a_1}{2}\Big(\mu_{Y+}''f_X(0)+2\mu_{Y+}'f_X'(0)+\mu_{Y+}f_X''(0) \nonumber \\
&-\mu_{Y-}''f_X(0)-2\mu_{Y-}'f_X'(0)-\mu_{Y-}f_X''(0)\Big), \nonumber \\
B^{(2)}(K,Z^{(k)})=&\frac{a_1}{2}\Big(\mu_{Z^{(k)}+}''f_X(0)-\mu_{Z^{(k)}-}''f_X(0)\Big). \label{eq:BZk}
\end{align}
By combining the previous equations and replacing $\tau=\mu_{Y+}-\mu_{Y-}$, we finally obtain the desired expression for the bias: Firstly,
\begin{align*}
&\frac{h^2}{f_X(0)}B^{(2)}(K,Y)+\kappa_{h,b}(K)\begin{pmatrix}
\mu_{Y-} \\ \tau \\ \frac{h}{f_X(0)}\left[\mu_Yf_X\right]_-' \\ \frac{h}{f_X(0)}\left(\left[\mu_Yf_X\right]_+'-\left[\mu_Yf_X\right]_-'\right)
\end{pmatrix} \\
=&h^2a_1\Bigg(\frac{1}{2}\left(\mu_{Y+}''-\mu_{Y-}''\right)+\frac{f_X'(0)}{f_X(0)}\left(\mu_{Y+}'-\mu_{Y-}'\right)+\frac{1}{2}\frac{f_X''(0)}{f_X(0)}\left(\mu_{Y+}-\mu_{Y-}\right) \\
&-\frac{1}{2}\frac{f_X''(0)}{f_X(0)}\left(\mu_{Y+}-\mu_{Y-}\right)+\frac{f_X'(0)^2}{f_X(0)^2}\left(\mu_{Y+}-\mu_{Y-}\right) \\
&-\frac{f_X'(0)}{f_X(0)^2}\left(\mu_{Y+}'f_X(0)+\mu_{Y+}f_X(0)'-\mu_{Y-}'f_X(0)-\mu_{Y-}f_X(0)'\right)\Bigg)+o(h^2) \\
=&h^2\frac{a_1}{2}\left(\mu_{Y+}''-\mu_{Y-}''\right)+o(h^2).
\end{align*}
And, secondly, we see that $B_{2\cdot}$ has entries given by \eqref{eq:BZk} and hence,
\begin{align*}
&\frac{h^2}{f_X(0)}B_{2\cdot}\IE(K_h(X_i)Z_iZ_i^{\top})^{-1}\IE(K_h(X_i)Z_iY_i) \\
=&h^2\frac{a_1}{2}\sum_{k\in J_n}\left(\mu_{Z^{(k)}+}''-\mu_{Z^{(k)}-}''\right)\left[\IE(K_h(X_i)Z_iZ_i^{\top})^{-1}\IE(K_h(X_i)Z_iY_i)\right]_k.
\end{align*}
Replacing the above two expressions in \eqref{eq:almostdone} completes the proof.
\end{proof}

\subsubsection{Supporting Results}
\begin{lemma}
\label{lem:bounded_gamma}
Let $\left\|\IE\left(K_h(X_i)Z_iZ_i^\top\right)^{-1}\right\|_2=O(1)$. Then,
$$\left\|\IE\left(K_h(X_i)Z_iZ_i^{\top}\right)^{-1}\IE\left(K_h(X_i)Z_iY_i\right)\right\|_2=O(1).$$
If $\left|\left(\sigma_{Z-}^2+\sigma_{Z+}^2\right)^{-1}\right\|_2=O(1)$ we have that $\|\gamma_n\|_2=O(1)$, where $\gamma_n$ is defined below \eqref{eq:approx_bias_var}.
\end{lemma}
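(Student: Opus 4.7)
The plan is to exploit the variational characterization of the weighted least-squares solution together with a Cauchy--Schwarz argument. For the first claim, set $A=\IE(K_h(X_i)Z_iZ_i^\top)$, $b=\IE(K_h(X_i)Z_iY_i)$, and $c=A^{-1}b$. Because $K_h$ is non-negative, splitting $K_h=\sqrt{K_h}\cdot\sqrt{K_h}$ and applying Cauchy--Schwarz to $c^\top b=\IE(\sqrt{K_h(X_i)}\,c^\top Z_i\cdot\sqrt{K_h(X_i)}\,Y_i)$ yields
$$(c^\top b)^2\leq \IE\!\left(K_h(X_i)(c^\top Z_i)^2\right)\cdot\IE\!\left(K_h(X_i)Y_i^2\right) = (c^\top A c)\cdot\IE\!\left(K_h(X_i)Y_i^2\right).$$
Since $c^\top A c=c^\top b$, this simplifies to $c^\top A c\leq \IE(K_h(X_i)Y_i^2)$. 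The right-hand side is $O(1)$ under the standing assumptions of Theorem~\ref{thm:main} (boundedness of $\mu_{YY}$ and $f_X$ in a neighborhood of zero, which follow from (D) and (TCS)), and the hypothesis $\|A^{-1}\|_2=O(1)$ is equivalent to a uniform lower bound on $\lambda_{\min}(A)$. Combining $\|c\|_2^2\,\lambda_{\min}(A)\leq c^\top A c$ with these two facts delivers $\|c\|_2=O(1)$.

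For the second claim, I would repeat exactly the same argument but on the two sides of the cutoff separately. With $\tilde A=\sigma^2_{Z(J_n)-}+\sigma^2_{Z(J_n)+}$ and $\tilde b=\sigma^2_{YZ(J_n)-}+\sigma^2_{YZ(J_n)+}$, the vector $\gamma_n=\tilde A^{-1}\tilde b$ satisfies $\gamma_n^\top\tilde A\gamma_n=\gamma_n^\top\tilde b$. Applying the Cauchy--Schwarz inequality for covariances on each side gives
$$|\gamma_n^\top\tilde b|\leq\sum_{s\in\{-,+\}}\sqrt{\mathrm{Var}_s(\gamma_n^\top Z_i)\cdot\mathrm{Var}_s(Y_i)}\leq\sqrt{(\gamma_n^\top\tilde A\gamma_n)(\sigma^2_{Y-}+\sigma^2_{Y+})},$$
using $\sqrt a+\sqrt b\leq\sqrt{2(a+b)}$ in the last step (absorbing the constant). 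Squaring and cancelling one factor of $\gamma_n^\top\tilde A\gamma_n$ yields $\gamma_n^\top\tilde A\gamma_n\leq 2(\sigma^2_{Y-}+\sigma^2_{Y+})=O(1)$, and invoking $\|\tilde A^{-1}\|_2=O(1)$ finishes the argument exactly as before.

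I do not expect any real obstacle here; the only delicate point is justifying that $\IE(K_h(X_i)Y_i^2)$ and $\sigma^2_{Y\pm}$ are $O(1)$, which is immediate from the smoothness and boundedness assumptions (D) and (TCS) imposed throughout the paper. Everything else is a two-line variational/Cauchy--Schwarz computation.
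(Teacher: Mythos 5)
Your proposal is correct, and it rests on the same underlying fact as the paper's proof: the population weighted least-squares coefficient $\beta=A^{-1}b$ satisfies $\beta^\top A\beta\leq\IE(K_h(X_i)Y_i^2)$, which combined with $\lambda_{\min}(A)\geq\|A^{-1}\|_2^{-1}$ bounded away from zero gives $\|\beta\|_2=O(1)$. The difference is in how that key inequality is obtained. The paper uses the orthogonality of the residual $\epsilon=Y_i-Z_i^\top\beta$ to get the Pythagorean decomposition $\IE(K_hY_i^2)=\IE(K_h\epsilon^2)+\beta^\top A\beta$, and then, somewhat circuitously, verifies that the matrix $A^{-1}-cA^{-2}$ is positive semi-definite for $c\leq\|A^{-1}\|_2^{-1}$ via an eigenvalue argument; your Cauchy--Schwarz computation $(c^\top b)^2\leq(c^\top Ac)\,\IE(K_hY_i^2)$ together with $c^\top Ac=c^\top b$ reaches the same inequality in two lines and avoids the auxiliary matrix entirely (one should just note that $c^\top Ac>0$ unless $c=0$, which the invertibility hypothesis guarantees, before cancelling a factor). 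The same comparison applies to the second claim, where the paper again uses the orthogonal decomposition of $\sigma^2_{Y+}+\sigma^2_{Y-}$ while you apply the covariance Cauchy--Schwarz inequality on each side of the cutoff; your extra factor of $2$ from $\sqrt{a}+\sqrt{b}\leq\sqrt{2(a+b)}$ is harmless (and in fact avoidable by discrete Cauchy--Schwarz), and the resulting bound is the same up to constants. Both arguments share the one unstated ingredient, namely that $\IE(K_h(X_i)Y_i^2)$ and $\sigma^2_{Y\pm}$ are $O(1)$, which the paper also takes for granted from the standing smoothness and boundedness assumptions; you flag this explicitly, which is if anything more careful than the original.
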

\begin{proof}
Denote $\beta=\IE\left(K_h(X_i)Z_iZ_i^{\top}\right)^{-1}\IE\left(K_h(X_i)Z_iY_i\right)$ and let $\epsilon=Y_i-Z_i^\top\beta$. Then, $\IE\left(K_h(X_i)Z_i\epsilon\right)=0$. We want to proof that
$$\|\beta\|_2^2=\IE\left(K_h(X_i)Z_i^\top Y_i\right)\IE\left(K_h(X_i)Z_iZ_i^\top\right)^{-2}\IE\left(K_h(X_i)Z_iY_i\right)=O(1).$$
Using all these properties and notation, we get for any constant $c>0$
\begin{align*}
&\IE\left(K_h(X_i)Y_i^2\right)-c\|\beta\|_2^2=\IE\left(K_h(X_i)\epsilon^2\right)+\beta^\top\IE\left(K_h(X_i)Z_iZ_i^\top\right)\beta-c\|\beta\|_2^2 \\
\geq&\IE\left(K_h(X_i)Z_i^\top Y_i\right)\IE\left(K_h(X_i)Z_iZ_i^\top\right)^{-1}\IE\left(K_h(X_i)Z_iY_i\right)-c\|\beta\|_2^2 \\
=&\IE\left(K_h(X_i)Z_i^\top Y_i\right)\left(\IE\left(K_h(X_i)Z_iZ_i^\top\right)^{-1}-c\IE\left(K_h(X_i)Z_iZ_i^\top\right)^{-2}\right)\IE\left(K_h(X_i)Z_iY_i\right).
\end{align*}
Hence, the proof is complete if there is a constant $c>0$ such that
$$M_c=\IE\left(K_h(X_i)Z_iZ_i^\top\right)^{-1}-c\IE\left(K_h(X_i)Z_iZ_i^\top\right)^{-2}$$
is positive semi-definite. Recall that $\|A\|_2$ denote the largest eigenvalue (in absolute value) of $A$ if $A$ is a symmetric matrix. By assumption we can choose $0<c\leq\left\|\IE\left(K_h(X_i)Z_iZ_i^\top\right)^{-1}\right\|_2^{-1}$, that is, whenever $\mu$ is an eigenvalue of $\IE\left(K_h(X_i)Z_iZ_i^\top\right)^{-1}$ it holds that $c\leq\mu^{-1}$. Let now $(\mu,v)$ be an eigenvalue-eigenvector pair of $\IE\left(K_h(X_i)Z_iZ_i^\top\right)^{-1}$. We get
$$M_cv=\left(\IE\left(K_h(X_i)Z_iZ_i^\top\right)^{-1}-c\IE\left(K_h(X_i)Z_iZ_i^\top\right)^{-2}\right)v=(\mu-c\mu^2)v.$$
Since for each symmetric matrix an orthogonal basis of eigenvectors can be found, we see that all eigenvalues of $M_c$ are of the form $\mu-c\mu^2$ where $\mu$ is an eigenvalue of $\IE\left(K_h(X_i)Z_iZ_i^\top\right)^{-1}$. By the choice of $c$ we have $\mu-c\mu^2\geq0$ and hence we conclude that $M_c$ is positive semi-definite.

The proof for $\gamma_n$ can be carried out along the same lines. Here the starting point is $\epsilon_+=Y_i-\mu_{Y+}-(Z_i-\mu_{Z+})^\top\gamma_n$ and $\epsilon_-=Y_i-\mu_{Y-}-(Z_i-\mu_{Z-})^\top\gamma_n$. Then,
\begin{align*}
&\IE(\epsilon_+(Z_i-\mu_{Z+})|X_i=0+)+\IE(\epsilon_-(Z_i-\mu_{Z-})|X_i=0-) \\
=&\sigma_{YZ+}^2+\sigma_{ZY-}^2-\left(\sigma_{Z+}^2+\sigma_{Z-}^2\right)\gamma_n=0
\end{align*}
and consequently
\begin{align*}
&\IE(Y_i^2|X_i=0+)+\IE(Y_i^2|X_i=0-) \\
=&\IE(\epsilon_+^2|X_i=0+)+\IE(\epsilon_-^2|X_i=0-)+\gamma_n^\top\left(\sigma_{Z+}^2+\sigma_{Z-}^2\right)\gamma_n.
\end{align*}
Now we can proceed in the same way as before.
\end{proof}

\begin{lemma}
\label{lem:help1}
Let $f_X$ be three times continuously differentiable. $A$ be an arbitrary random variable such that $\mu_A(x)=\IE(A|X_i=x)$ is well defined and is three times one-sided differentiable at $0$. The derivatives extend continuously to $0$ and the third derivatives are bounded around $0$. Suppose moreover that the kernel $K$ is symmetric with $K^{(4)}<\infty$ (in particular: $\kappa(K)$ is invertible by Lemma \ref{lem:kappa}). Define $\tau_A=\mu_{A+}-\mu_{A-}$ and
$$B(K,A)=\frac{1}{2}\kappa(K)^{-1}\left[\begin{pmatrix}
K_+^{(2)} \\ K_+^{(2)} \\ K_+^{(3)} \\ K_+^{(3)}
\end{pmatrix}\left[\mu_Af_X\right]_+''+\begin{pmatrix}
K_-^{(2)} \\ 0 \\ K_-^{(3)} \\ 0 \end{pmatrix}\left[\mu_Af_X\right]_-''\right].$$
Then,
\begin{align*}
&\kappa(K)^{-1}\IE(K_h(X_i)V_iA)=\begin{pmatrix}
f_X(0)\mu_{A-} \\ f_X(0)\tau_A \\ h\left[\mu_Af_X\right]_-' \\ h\left(\left[\mu_Af_X\right]_+'-\left[\mu_Af_X\right]_-'\right)\end{pmatrix}+h^2B(K,A)+O(h^3).
\end{align*}
\end{lemma}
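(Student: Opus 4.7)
The plan is to Taylor expand $\IE(K_h(X_i)V_iA)$ componentwise through order $h^2$, and then verify the claim by left-multiplying by $\kappa(K)$ (which is invertible by Lemma~\ref{lem:kappa}) at the very end.

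First, by the tower property each component of $\IE(K_h(X_i)V_iA)$ equals $\IE\bigl(K_h(X_i)\,V_i^{(a)}\mu_A(X_i)\bigr)$, and after the substitution $u=X_i/h$ it becomes $\int_{\IR}u^a K(u)\mu_A(uh)f_X(uh)\,du$ or $\int_0^\infty u^a K(u)\mu_A(uh)f_X(uh)\,du$, for $a\in\{0,1\}$, depending on whether the entry involves $T_i$. One-sided Taylor expansions of $\mu_A(uh) f_X(uh)$ in $u$ up to order two, valid by the hypotheses on $\mu_A$ and $f_X$ and with a uniform $O(h^3)$ remainder since the third derivatives are bounded near zero and $K^{(4)}<\infty$ (the same calculation as in \eqref{eq:kernel}), yield
\begin{align*}
\IE(K_h(X_i)V_iA)=W_0+hW_1+\tfrac{h^2}{2}W_2+O(h^3),
\end{align*}
where, writing $[\mu_A f_X]^{(k)}_{\pm}$ for the $k$th one-sided derivative of $x\mapsto\mu_A(x)f_X(x)$ at $0$,
\begin{align*}
W_k=[\mu_A f_X]^{(k)}_+\bigl(K_+^{(k)},K_+^{(k)},K_+^{(k+1)},K_+^{(k+1)}\bigr)^\top+[\mu_A f_X]^{(k)}_-\bigl(K_-^{(k)},0,K_-^{(k+1)},0\bigr)^\top.
\end{align*}

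Second, using the symmetry of $K$ (so $K_-^{(j)}=(-1)^j K_+^{(j)}$, and in particular $K^{(0)}=1$, $K^{(1)}=0$, $K^{(2)}=2K_+^{(2)}$), a direct one-line computation from the explicit form of $\kappa(K)$ in Lemma~\ref{lem:kappa} shows that
\begin{align*}
\kappa(K)\bigl(f_X(0)\mu_{A-},\,f_X(0)\tau_A,\,0,\,0\bigr)^\top&=W_0,\\
\kappa(K)\bigl(0,\,0,\,[\mu_A f_X]'_-,\,[\mu_A f_X]'_+-[\mu_A f_X]'_-\bigr)^\top&=W_1.
\end{align*}
By the very definition of $B(K,A)$ in the statement, $\kappa(K)B(K,A)=\tfrac{1}{2}W_2$. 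Summing these three identities, left-multiplying by $\kappa(K)^{-1}$, and absorbing the $O(h^3)$ remainder under the constant matrix $\kappa(K)^{-1}$ gives the displayed formula.

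The only real obstacle is bookkeeping: tracking the six quantities $[\mu_A f_X]^{(k)}_\pm$, $k\in\{0,1,2\}$, through each of the four entries of $V_i$, and using the symmetries of $K$ carefully so that the first two components of the $h^0$-vector and the last two components of the $h^1$-vector collapse into the sparse forms $(f_X(0)\mu_{A-},f_X(0)\tau_A,0,0)^\top$ and $(0,0,[\mu_A f_X]'_-,[\mu_A f_X]'_+-[\mu_A f_X]'_-)^\top$. The rest is Taylor expansion and finite $4\times 4$ linear algebra.
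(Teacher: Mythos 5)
Your proposal is correct and follows essentially the same route as the paper's proof: a second-order one-sided Taylor expansion of $\mu_A f_X$ under the kernel integral yielding exactly the vectors $W_0$, $W_1$, $W_2$ (the paper's displays \eqref{eq:b1}--\eqref{eq:b3}), followed by the observation that the claimed $h^0$- and $h^1$-vectors are the coordinate representations of $W_0$ and $W_1$ in the column basis of $\kappa(K)$ (your check $\kappa(K)\cdot(\text{claimed vector})=W_k$ is the same finite linear algebra read in the opposite direction), with $\kappa(K)B(K,A)=\tfrac12 W_2$ holding by definition.
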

\begin{proof}
By using an argument as in \eqref{eq:kernel} we obtain
\begin{align}
\IE\left(K_h(X_i)V_iA\right)=&\begin{pmatrix}
K_+^{(0)} \\ K_+^{(0)} \\ K_+^{(1)} \\ K_+^{(1)}
\end{pmatrix}\mu_{A+}f_X(0)+\begin{pmatrix}
K_-^{(0)} \\ 0 \\ K_-^{(1)} \\ 0 \end{pmatrix}\mu_{A-}f_X(0) \label{eq:b1} \\
&+h\left[\begin{pmatrix}
K_+^{(1)} \\ K_+^{(1)} \\ K_+^{(2)} \\ K_+^{(2)}
\end{pmatrix}\left[\mu_Af_X\right]_+'+\begin{pmatrix}
K_-^{(1)} \\ 0 \\ K_-^{(2)} \\ 0 \end{pmatrix}\left[\mu_Af_X\right]_-'\right] \label{eq:b2} \\
&+\frac{h^2}{2}\left[\begin{pmatrix}
K_+^{(2)} \\ K_+^{(2)} \\ K_+^{(3)} \\ K_+^{(3)}
\end{pmatrix}\left[\mu_Af_X\right]_+''+\begin{pmatrix}
K_-^{(2)} \\ 0 \\ K_-^{(3)} \\ 0 \end{pmatrix}\left[\mu_Af_X\right]_-''\right]+O(h^3). \label{eq:b3}
\end{align}
We treat the expression above line by line. For \eqref{eq:b1} we note that $\mu_{A+}=\tau_A+\mu_{A-}$. Thus we obtain by comparing with the columns of $\kappa(K)$ and by using the kernel properties that
\begin{align*}
\kappa(K)^{-1}\eqref{eq:b1}=f_X(0)\kappa(K)^{-1}\left[\tau_A\begin{pmatrix}
K_+^{(0)} \\ K_+^{(0)} \\ K_+^{(1)} \\ K_+^{(1)}
\end{pmatrix}+\mu_{A-}\begin{pmatrix}
1 \\ K_+^{(0)} \\ 0 \\ K_+^{(1)}
\end{pmatrix}\right]=f_X(0)\begin{pmatrix}
\mu_{A-} \\ \tau_A \\ 0 \\ 0
\end{pmatrix}.
\end{align*}
For \eqref{eq:b2} we obtain by the same argument
\begin{align*}
\kappa(K)^{-1}\eqref{eq:b2}=&h\kappa(K)^{-1}\left[\begin{pmatrix}
K_+^{(1)} \\ K_+^{(1)} \\ K_+^{(2)} \\ K_+^{(2)}
\end{pmatrix}\left[\mu_Af_X\right]_++\left(\begin{pmatrix}
0 \\ K_+^{(1)} \\ K^{(2)} \\ K_+^{(2)} \end{pmatrix}-\begin{pmatrix}
K_+^{(1)} \\ K_+^{(1)} \\ K_+^{(2)} \\ K_+^{(2)} \end{pmatrix}\right)\left[\mu_Af_X\right]_-'\right] \\
=&h\begin{pmatrix}
0 \\ 0 \\ \left[\mu_Af_X\right]_-' \\ \left[\mu_Af_X\right]_+'-\left[\mu_Af_X\right]_-'
\end{pmatrix}.
\end{align*}
This proves the statement because $\kappa(K)^{-1}\eqref{eq:b3}=h^2B(K,A)+O(h^3)$.
\end{proof}

\begin{lemma}
\label{lem:help2}
Let $K$ be symmetric with $K^{(4)},(K^2)^{(2)}<\infty$, $f_X$ three times differentiable with $f_X(0)\neq0$, $\|\IE(K_h(X_i)Z_iZ_i^{\top})^{-1}\|_2=O(1)$ and $h\to0,nh\to\infty$. Suppose that for all $n\in\IN$ and all $k\in J_n$ the functions $\mu_{Z^{(k)}}(x)=\IE(Z_i^{(k)}|X_i=x)$ are differentiable with $\mu_{Z^{(k)}}(0)=\mu_{Z^{(k)}}'(0)=0$ and one-sided differentiable up to order three. The third derivatives extend continuously to zero and fulfil \eqref{eq:zsd} and \eqref{eq:ztd}. Then,
\begin{align*}
&\left[\left(I-\IE\left(K_h(X_i)V_iV_i^{\top}\right)^{-1}\IE(K_h(X_i)V_iZ_i^{\top})\IE(K_h(X_i)Z_iZ_i^{\top})^{-1}\IE(K_h(X_i)Z_iV_i^{\top})\right)^{-1}\right]_{2\cdot} \\
=&\begin{pmatrix}
0 & 1 & 0 & 0 \end{pmatrix}+O(|J_n|h^4).
\end{align*}
\end{lemma}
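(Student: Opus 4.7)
The plan is to show that the matrix being inverted is equal to $I$ plus a perturbation of entry-wise order $O(|J_n|h^4)$, and then apply a Neumann-series argument to conclude that the inverse is $I + O(|J_n|h^4)$ as well, so that its second row has the claimed form.

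First I would bound the cross-moments $\IE\left(K_h(X_i)V_i^{(a)} Z_i^{(k)}\right)$ for $a\in\{1,\ldots,4\}$ and $k\in J_n$. Because $\mu_{Z^{(k)}}(0)=\mu_{Z^{(k)}}'(0)=0$, the standard kernel expansion \eqref{eq:kernel} applied to $f(x)=x^{\alpha}\mu_{Z^{(k)}}(x)$ (or the analogous expression with $T_i$) makes both the zeroth- and first-order terms vanish, so each such moment is $O(h^2)$. Crucially, by the uniform bound \eqref{eq:zsd} on the second derivatives of $\mu_{Z^{(k)}}f_X$, the implied constant can be chosen uniformly in $k\in J_n$. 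Writing $u_a\in\IR^{|J_n|}$ for the vector with components $u_a^{(k)}=\IE(K_h(X_i)V_i^{(a)} Z_i^{(k)})$, I therefore obtain $\|u_a\|_\infty=O(h^2)$ and hence $\|u_a\|_2\le\sqrt{|J_n|}\|u_a\|_\infty=O(|J_n|^{1/2}h^2)$.

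Second, I would estimate the inner ``sandwich'' matrix $S_n:=\IE(K_h(X_i)V_iZ_i^\top)\IE(K_h(X_i)Z_iZ_i^\top)^{-1}\IE(K_h(X_i)Z_iV_i^\top)$ entry-wise. Its $(a,b)$-entry equals $u_a^\top M_n^{-1} u_b$ with $M_n=\IE(K_h(X_i)Z_iZ_i^\top)$, and by Cauchy--Schwarz together with the hypothesis $\|M_n^{-1}\|_2=O(1)$,
\[
|u_a^\top M_n^{-1} u_b|\le \|M_n^{-1}\|_2\,\|u_a\|_2\|u_b\|_2 = O(|J_n|h^4).
\]
Hence $S_n=O(|J_n|h^4)$ entry-wise. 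Since Lemma \ref{lem:SigmaV} implies $\IE(K_h(X_i)V_iV_i^\top)^{-1}=(f_X(0)\kappa(K))^{-1}+O(h)=O(1)$, the matrix product $A_n:=\IE(K_h(X_i)V_iV_i^\top)^{-1} S_n$ is also $O(|J_n|h^4)$ entry-wise, and equivalently in spectral norm since it has fixed dimension $4\times 4$.

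Finally, because $|J_n|h^4\le(|J_n|^{1/2}h^2)^2\to 0$ under the standing assumption $|J_n|^{1/2}h^2\to 0$, the Neumann series yields $(I-A_n)^{-1}=I+\sum_{k\ge 1}A_n^k=I+O(|J_n|h^4)$ entry-wise. Extracting the second row gives
\[
\left[(I-A_n)^{-1}\right]_{2\cdot}=\begin{pmatrix}0 & 1 & 0 & 0\end{pmatrix}+O(|J_n|h^4),
\]
which is exactly the claim. The main technical point is the uniformity in $k$ of the $O(h^2)$ bound on the cross-moments, but this is handed to us by \eqref{eq:zsd}; all remaining steps are routine matrix algebra.
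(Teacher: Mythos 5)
Your proposal is correct and follows essentially the same route as the paper's proof: a uniform $O(h^2)$ bound on the cross-moments $\IE(K_h(X_i)V_i^{(a)}Z_i^{(k)})$ via the kernel expansion and $\mu_{Z^{(k)}}(0)=\mu_{Z^{(k)}}'(0)=0$, a Cauchy--Schwarz bound giving the sandwich matrix order $O(|J_n|h^4)$ entry-wise, and then inversion of $I+O(|J_n|h^4)$. The only cosmetic difference is that you invoke a Neumann series for the final inversion (making the implicit requirement $|J_n|h^4\to0$ explicit) where the paper cites the block-matrix inverse formula; these are equivalent here.
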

\begin{proof}
Let $a,b\in\{1,...,4\}$ and $k\in J_n$ be arbitrary. We have by an expansion of the type \eqref{eq:kernel} for the choices $L(u)=K(u)$, $L(u)=K(u)\Ind(u\geq0)$, $L=K(u)u$ and $L(u)=K(u)u\Ind(u\geq0)$ because $\mu_{Z^{(k)}}(0)=\mu_{Z^{(k)}}'(0)=0$ that
\begin{align*}
&\left[\gamma_{V,n}\right]_{k,\cdot}=\IE\left(K_h(X_i)V_iZ_i^{(k)}\right)^{\top} \\
=&\frac{1}{2}h^2\begin{pmatrix}\left[\mu_{Z^{(k)}}f_X\right]_-'' & \left[\mu_{Z^{(k)}}f_X\right]_+''-\left[\mu_{Z^{(k)}}f_X\right]_-'' \end{pmatrix}\begin{pmatrix}
K^{(2)}   & K_+^{(2)} & 0         & K_+^{(3)} \\
K_+^{(2)} & K_+^{(2)} & K_+^{(3)} & K_+^{(3)}
\end{pmatrix}+O(h^3).
\end{align*}
Here $O(h^3)$ has to be understood as row-vector where all entries are $O(h^3)$. By the assumptions on the third derivatives, we have, in addition, that the $O(h^3)$ has the same constants for all choices of $k\in J_n$. Thus, we find a constant $C>0$ such that for all $a,b\in\{1,...,4\}$
\begin{align*}
&\left[\gamma_{V,n}^{\top}\IE(K_h(X_i)Z_iZ_i^{\top})^{-1}\gamma_{V,n}\right]_{a,b}^2\leq\left\|\left[\gamma_{V,n}\right]_{\cdot a}\right\|_2^2\left\|\IE(K_h(X_i)Z_iZ_i^{\top})^{-1}\right\|_2^2\left\|\left[\gamma_{V,n}\right]_{\cdot,b}\right\|_2^2 \\
\leq&|J_n|^2Ch^8\left\|\IE(K_h(X_i)Z_iZ_i^{\top})^{-1}\right\|_2^2=O(|J_n|^2h^8)
\end{align*}
since $\left\|\IE(K_h(X_i)Z_iZ_i^{\top})^{-1}\right\|_2^2=O(1)$ by assumption. Hence we obtain that
$$\kappa(K)^{-1}\gamma_{V,n}^{\top}\IE(K_h(X_i)Z_iZ_i^{\top})^{-1}\gamma_{V,n}=O(|J_n|h^4),$$
where $O(|J_n|h^4)$ means here a $4\times4$ matrix which entries are each of order $O(|J_n|h^4)$. We obtain from \eqref{eq:kernel_exp} in Lemma \ref{lem:SigmaV} that (use that matrix inversion is a continuous operation)
$$\IE\left(K_h(X_i)V_iV_i^{\top}\right)^{-1}=\frac{1}{f_X(0)}\kappa(K)^{-1}+O(h).$$
Bringing the previous two results together, we get
\begin{align*}
 &I-\IE\left(K_h(X_i)V_iV_i^{\top}\right)^{-1}\gamma_{V,n}^{\top}\IE(K_h(X_i)Z_iZ_i^{\top})^{-1}\gamma_{V,n} \\
=&I-\frac{1}{f_X(0)}\kappa(K)^{-1}\gamma_{V,n}^{\top}\IE(K_h(X_i)Z_iZ_i^{\top})^{-1}\gamma_{V,n} \\
&\quad\quad\quad\quad-\left(\IE\left(K_h(X_i)V_iV_i^{\top}\right)^{-1}-\frac{1}{f_X(0)}\kappa(K)^{-1}\right)\gamma_{V,n}^{\top}\IE(K_h(X_i)Z_iZ_i^{\top})^{-1}\gamma_{V,n} \\
=&I+O(|J_n|h^4).
\end{align*}
By using the formula for the inverse of block matrices we can read off for the second row
\begin{align*}
&\left[\left(I-\IE\left(K_h(X_i)V_iV_i^{\top}\right)^{-1}\gamma_{V,n}^{\top}\IE\left(K_h(X_i)Z_iZ_i^\top\right)^{-1}\gamma_{V,n}\right)^{-1}\right]_{2\cdot} \\
=&\begin{pmatrix}
0 & 1 & 0 & 0 \end{pmatrix}+O(|J_n|h^4).
\end{align*}
\end{proof}

\begin{lemma}
\label{lem:kernel}
Let $K$ be a symmetric kernel such that $\kappa(K)$ is invertible and $K^{(4)}<\infty$. Then,
$$\kappa(K)^{-1}=\frac{1}{\left(K_+^{(1)}\right)^2-\frac{1}{2}K_+^{(2)}}\begin{pmatrix}
-K_+^{(2)} &   K_+^{(2)} & -K_+^{(1)}   & K_+^{(1)}   \\
 K_+^{(2)} & -2K_+^{(2)} &  K_+^{(1)}   &           0   \\
-K_+^{(1)} &   K_+^{(1)} & -\frac{1}{2} & \frac{1}{2} \\
 K_+^{(1)} &           0 &  \frac{1}{2} &          -1 
\end{pmatrix}.$$
In particular, for
\begin{align*}
a_2=&\frac{K_+^{(3)}-2K_+^{(1)}K_+^{(2)}}{K_+^{(2)}-2\left(K_+^{(1)}\right)^2},\quad a_1=\frac{2\left(K_+^{(2)}\right)^2-2K_+^{(1)}K_+^{(3)}}{K_+^{(2)}-2\left(K_+^{(1)}\right)^2} \\
b_2=&\frac{K_+^{(4)}-2K_+^{(1)}K_+^{(3)}}{K_+^{(2)}-2\left(K_+^{(1)}\right)^2},\quad b_1=\frac{2K_+^{(2)}K_+^{(3)}-2K_+^{(1)}K_+^{(4)}}{K_+^{(2)}-2\left(K_+^{(1)}\right)^2}
\end{align*}
it holds that
\begin{equation}
\label{eq:mateq1}
\kappa(K)^{-1}\begin{pmatrix}
0         & K_+^{(1)} & 2K_+^{(2)}   & K_+^{(2)} \\
K_+^{(1)} & K_+^{(1)} & K_+^{(2)} & K_+^{(2)} \\
2K_+^{(2)}   & K_+^{(2)} & 0   & K_+^{(3)} \\
K_+^{(2)} & K_+^{(2)} & K_+^{(3)} & K_+^{(3)}
\end{pmatrix}=\begin{pmatrix}
0 & 0 & a_1 & 0 \\
0 & 0 & 0   & a_1 \\
1 & 0 & -a_2 & 0 \\
0 & 1 & 2a_2 & a_2
\end{pmatrix}.
\end{equation}
and
$$\kappa(K)^{-1}\begin{pmatrix}
2K_+^{(2)}   & K_+^{(2)} & 0   & K_+^{(3)} \\
K_+^{(2)} & K_+^{(2)} & K_+^{(3)} & K_+^{(3)} \\
0   & K_+^{(3)} & 2K_+^{(4)}   & K_+^{(4)} \\
K_+^{(3)} & K_+^{(3)} & K_+^{(4)} & K_+^{(4)}
\end{pmatrix}=\begin{pmatrix}
a_1  & 0   & -b_1 & 0   \\
0    & a_1 & 2b_1 & b_1 \\
-a_2 & 0   &  b_2 & 0   \\
2a_2 & a_2 &    0 & b_2
\end{pmatrix}.$$
\end{lemma}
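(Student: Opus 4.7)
The plan is to prove Lemma~\ref{lem:kernel} by straightforward matrix algebra, starting from the explicit form of $\kappa(K)$ given in Lemma~\ref{lem:kappa} and exploiting the symmetry of $K$ to simplify the entries before any computations. Because $K$ is symmetric and integrates to one, the elementary identities $K^{(0)}=1$, $K_+^{(0)}=\tfrac{1}{2}$, $K^{(1)}=0$, $K^{(2)}=2K_+^{(2)}$, $K^{(3)}=0$, $K^{(4)}=2K_+^{(4)}$ hold. Substituting these into $\kappa(K)$ reduces the matrix to one whose entries involve only $K_+^{(1)}$ and $K_+^{(2)}$, namely
$$\kappa(K)=\begin{pmatrix}1&1/2&0&K_+^{(1)}\\ 1/2&1/2&K_+^{(1)}&K_+^{(1)}\\ 0&K_+^{(1)}&2K_+^{(2)}&K_+^{(2)}\\ K_+^{(1)}&K_+^{(1)}&K_+^{(2)}&K_+^{(2)}\end{pmatrix}.$$

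For the first claim, I would verify the stated formula for $\kappa(K)^{-1}$ by direct right multiplication: compute $\kappa(K)\cdot M$, where $M$ is the candidate matrix on the right-hand side of the displayed formula, and check entry by entry that the product equals $\bigl((K_+^{(1)})^2-\tfrac{1}{2}K_+^{(2)}\bigr)\cdot I_4$. Each of the sixteen entries reduces to a short linear combination of $K_+^{(1)}$, $K_+^{(2)}$ and their products, and collapses via elementary algebra. By Jensen's inequality (as in the proof of Lemma~\ref{lem:kappa}), the scalar $(K_+^{(1)})^2-\tfrac{1}{2}K_+^{(2)}$ is nonzero, so division by this quantity is legitimate and yields the inverse.

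For the two matrix identities, I would apply the same direct approach, multiplying the now-known $\kappa(K)^{-1}$ against the matrices on the left-hand sides (which also involve $K_+^{(3)}$ and $K_+^{(4)}$). Each resulting entry is a rational expression in the $K_+^{(j)}$'s whose denominator is $(K_+^{(1)})^2-\tfrac{1}{2}K_+^{(2)}$; the definitions of $a_1,a_2,b_1,b_2$ are chosen exactly to match the nontrivial entries, while the remaining entries should simplify to $0$ or $1$ after cancellation. For instance, one checks that the $(3,1)$-entry of the first product is $1$ and that the $(3,3)$-entry equals $-a_2$ directly from the definition.

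The only obstacle here is bookkeeping: there are no delicate analytic steps, but the calculation involves many terms that must be tracked carefully. In practice, as the authors note earlier in the proof of Lemma~\ref{lem:bias} for an analogous computation, these identities are most cleanly confirmed using a computer algebra system, which makes the verification routine and entirely mechanical.
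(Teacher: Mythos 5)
Your proposal matches the paper's own proof, which consists precisely of the observation that $2\bigl(K_+^{(1)}\bigr)^2<K_+^{(2)}$ by Jensen's inequality (so the denominator is nonzero) followed by "direct calculation." Your additional step of first reducing $\kappa(K)$ via the symmetry identities $K^{(1)}=K^{(3)}=0$, $K^{(2)}=2K_+^{(2)}$, $K_+^{(0)}=\tfrac{1}{2}$ is exactly the simplification the paper relies on implicitly, and the entry-by-entry verification goes through as you describe.
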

\begin{proof}
Note that by Jensen's Inequality $2\left(K_+^{(1)}\right)^2<K_+^{(2)}$ and thus we do not divide by zero. The remainder of the proof is direct calculation.
\end{proof}

\begin{lemma}
\label{lem:help3}
Let all conditions of Lemma \ref{lem:SigmaV} hold and suppose that the kernel $K$ is symmetric such that $\kappa(K)$ is invertible and $K^{(4)}<\infty$. Then,
\begin{align*}
&\kappa(K)^{-1}\IE(K_h(X_i)V_iV_i^{\top}) \\
=&f_X(0)I+f_X'(0)h\begin{pmatrix}
0 & 0 & a_1 & 0 \\
0 & 0 & 0   & a_1 \\
1 & 0 & -a_2 & 0 \\
0 & 1 & 2a_2 & a_2
\end{pmatrix}+h^2\frac{f_X''(0)}{2}\begin{pmatrix}
a_1  & 0   & -b_1 & 0   \\
0    & a_1 & 2b_1 & b_1 \\
-a_2 & 0   &  b_2 & 0   \\
2a_2 & a_2 &    0 & b_2
\end{pmatrix}+o(h^2),
\end{align*}
where $a_1,a_2,b_1,b_2$ are defined in Lemma \ref{lem:kernel}.
\end{lemma}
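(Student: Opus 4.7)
The plan is to combine Lemma~\ref{lem:SigmaV}, which supplies the second-order expansion of $\IE(K_h(X_i) V_i V_i^\top)$ in powers of $h$, with Lemma~\ref{lem:kernel}, which evaluates $\kappa(K)^{-1}$ against two specific matrices. The bridge between them is the symmetry of $K$, which forces the two matrices appearing in the $O(h)$ and $O(h^2)$ corrections in Lemma~\ref{lem:SigmaV} to match exactly the matrices whose products with $\kappa(K)^{-1}$ are computed in Lemma~\ref{lem:kernel}.

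First I would apply Lemma~\ref{lem:SigmaV}, whose hypotheses are all contained in those of the present lemma, to obtain the expansion
$$\IE(K_h(X_i)V_iV_i^\top) = f_X(0)\kappa(K) + f_X'(0)\, h\, M_1 + \tfrac{h^2}{2} f_X''(0)\, M_2 + o(h^2),$$
where $M_1$ and $M_2$ are the two $4\times 4$ matrices displayed in \eqref{eq:kernel_exp}. Multiplying from the left by $\kappa(K)^{-1}$ yields the leading term $f_X(0)\, I$, plus the two correction terms $f_X'(0)\,h\,\kappa(K)^{-1}M_1$ and $\tfrac{h^2}{2} f_X''(0)\,\kappa(K)^{-1}M_2$, plus a $o(h^2)$ remainder (which is legitimate since $\kappa(K)^{-1}$ is a fixed finite matrix by Lemma~\ref{lem:kappa}).

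Next I would exploit symmetry of $K$ to simplify $M_1$ and $M_2$. Because $K$ is symmetric and compactly supported, $K^{(1)} = K^{(3)} = 0$, $K^{(2)} = 2K_+^{(2)}$, and $K^{(4)} = 2K_+^{(4)}$. Substituting these relations into the explicit forms of $M_1$ and $M_2$ transforms $M_1$ into precisely the matrix on the left-hand side of equation \eqref{eq:mateq1} of Lemma~\ref{lem:kernel}, and transforms $M_2$ into the matrix on the left-hand side of the second identity in that lemma. Lemma~\ref{lem:kernel} then reads off
$$\kappa(K)^{-1}M_1 = \begin{pmatrix} 0 & 0 & a_1 & 0 \\ 0 & 0 & 0 & a_1 \\ 1 & 0 & -a_2 & 0 \\ 0 & 1 & 2a_2 & a_2 \end{pmatrix}, \qquad \kappa(K)^{-1}M_2 = \begin{pmatrix} a_1 & 0 & -b_1 & 0 \\ 0 & a_1 & 2b_1 & b_1 \\ -a_2 & 0 & b_2 & 0 \\ 2a_2 & a_2 & 0 & b_2 \end{pmatrix},$$
which is exactly the claim of Lemma~\ref{lem:help3}.

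There is no real obstacle here: the proof is essentially two applications of previously established lemmas glued together by the symmetry identities on the kernel moments. The only point requiring care is verifying that the substitutions $K^{(2)} = 2K_+^{(2)}$ and $K^{(4)} = 2K_+^{(4)}$ convert each entry of $M_1$ and $M_2$ in \eqref{eq:kernel_exp} into the corresponding entry of the matrices displayed in Lemma~\ref{lem:kernel}, which is a matter of straightforward bookkeeping rather than substantive mathematics.
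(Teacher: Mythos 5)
Your proposal is correct and follows essentially the same route as the paper's own proof: apply the expansion \eqref{eq:kernel_exp} from Lemma \ref{lem:SigmaV}, left-multiply by $\kappa(K)^{-1}$, and use the symmetry relations $K^{(1)}=K^{(3)}=0$, $K^{(2)}=2K_+^{(2)}$, $K^{(4)}=2K_+^{(4)}$ to identify the two correction matrices with the left-hand-side matrices of Lemma \ref{lem:kernel}. Nothing is missing.
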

\begin{proof}
By \eqref{eq:kernel_exp} from Lemma \ref{lem:SigmaV} and Lemma \ref{lem:kernel} we obtain that (use symmetry of the kernel)
\begin{align*}
&\kappa(K)^{-1}\IE(K_h(X_i)V_iV_i^{\top}) \\
=&f_X(0)I+f_X'(0)h\kappa(K)^{-1}\begin{pmatrix}
0         & K_+^{(1)} & K^{(2)}   & K_+^{(2)} \\
K_+^{(1)} & K_+^{(1)} & K_+^{(2)} & K_+^{(2)} \\
K^{(2)}   & K_+^{(2)} & 0         & K_+^{(3)} \\
K_+^{(2)} & K_+^{(2)} & K_+^{(3)} & K_+^{(3)}
\end{pmatrix} \\
&+\frac{h^2}{2}f_X''(0)\kappa(K)^{-1}\begin{pmatrix}
K^{(2)}   & K_+^{(2)} &0          & K_+^{(3)} \\
K_+^{(2)} & K_+^{(2)} & K_+^{(3)} & K_+^{(3)} \\
0         & K_+^{(3)} & K^{(4)}   & K_+^{(4)} \\
K_+^{(3)} & K_+^{(3)} & K_+^{(4)} & K_+^{(4)}
\end{pmatrix}+o(h^2) \\
=&f_X(0)I+f_X'(0)h\begin{pmatrix}
0 & 0 & a_1 & 0 \\
0 & 0 & 0   & a_1 \\
1 & 0 & -a_2 & 0 \\
0 & 1 & 2a_2 & a_2
\end{pmatrix}+h^2\frac{f_X''(0)}{2}\begin{pmatrix}
a_1  & 0   & -b_1 & 0   \\
0    & a_1 & 2b_1 & b_1 \\
-a_2 & 0   &  b_2 & 0   \\
2a_2 & a_2 &    0 & b_2
\end{pmatrix}+o(h^2).
\end{align*}
\end{proof}

\subsection{The Lasso as Model Selector}
\label{subsec:LSM}
\subsubsection{The Result and Proof Structure}
The following two theorems show that the number of covariates selected by the Lasso is comparable to the size of the set $J_n$ and that $C_n$ as defined in \eqref{eq:defC} converges to zero quick enough. These properties of the Lasso estimator are relevant for showing that it can be used as a model selection procedure. Note that $\IE(K_b(X_i)Z_i)=0$ is not guaranteed. However, since we only care about $\tilde{\gamma}_n$ and since we argued in the proof of Theorem \ref{thm:main} that the value of $\tilde{\gamma}_n$ does not change if we change the centralization of the $Z_i$, we may assume in the following without loss of generality that $\IE(K_b(X_i)Z_i)=0$. We will also use the abbreviation $\mathbf{r}_n(b)=\mathbf{r}_n(J_n,b)$.

Define for a sequence $\lambda_n\geq0$ and numbers $0<w^{(l)}\leq1<w^{(u)}<\infty$ the event
\begin{align*}
\mathcal{T}(b)=&\Bigg\{2\left\|\frac{1}{n}\mathbf{V}^{\top}\mathbf{K}_b\mathbf{r}_n(b)\right\|_{\infty}\leq\frac{1}{2}\lambda_n\textrm{ and } \\
&\quad\quad\quad2\sup_{k=1,...,p_n}\left|\frac{1}{nb}\sum_{i=1}^n\hat{\omega}_{n,k}^{-1}Z_i^{(k)}K\left(\frac{X_i}{b}\right)r_i(J_n,b)\right|\leq\frac{1}{2}\lambda_n\Bigg\}.
\end{align*}
Moreover, we denote by $\tilde{\mathcal{T}}(b)$ the intersection of the events $\mathcal{T}(b)$ and
$$w^{(l)}\leq\min_{k\in J_n^c}\hat{\omega}_{n,k}\leq\max_{k=1,...,p_n}\hat{\omega}_{n,k}\leq w^{(u)}.$$
We will show in Corollary \ref{cor:T}, that we can choose $w^{(l)}\leq1<w^{(u)}$ and $C>0$ such that for $\lambda_n=C\sqrt{\frac{\log p_n}{nb}}$, $\IP(\tilde{\mathcal{T}}(b))\to1$.

\begin{theorem}
\label{thm:my_sparsity}
Let (CTB), (AS), (CV) and (TCS, \eqref{eq:equicont1}, \eqref{eq:equicont2} for $h=b$) as well as $\textrm{RSE}(|J_n|\log n,J_n,b)$ and $\textrm{CC}\left(\bar{w},J_n\right)$ hold and suppose that $f_X$ is continuous, $p_n\to\infty$, $b\to0$ and $\log p_n/nb\to0$. Then, $|\hat{J}_n|=O_P(|J_n|)$.
\end{theorem}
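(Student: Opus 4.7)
The plan is to apply the KKT (first-order optimality) conditions of the penalized problem together with the restricted sparse eigenvalue bound to control $|\hat{J}_n|$. Restricting to the high-probability event $\tilde{\mathcal{T}}(b)$ (Corollary~\ref{cor:T}), for any $k\in\hat{J}_n$ the subgradient optimality condition reads
\begin{equation*}
\frac{\lambda\,\hat\omega_{n,k}}{2}=\left|\frac{1}{n}\sum_{i=1}^{n}K_b(X_i)Z_i^{(k)}\bigl(Y_i-V_i^{\top}\tilde\theta_n-(Z_i-\hat\mu_{Z,n})^{\top}\tilde\gamma_n\bigr)\right|.
\end{equation*}
I would then decompose the Lasso residual as $r_i(J_n,b)$ plus an ``estimation gap'' $\Delta_i$ involving $V_i^{\top}(\tilde\theta_n-\theta_0(J_n,b))$ and $Z_i^{\top}(\tilde\gamma_n-\gamma_0(J_n,b))$. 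On $\tilde{\mathcal{T}}(b)$, the correlation of $Z_i^{(k)}$ with $r_i(J_n,b)$ is at most $\lambda\hat\omega_{n,k}/4$, so for every $k\in\hat{J}_n\setminus J_n$ the triangle inequality delivers the lower bound
\begin{equation*}
\frac{\lambda\,w^{(l)}}{4}\leq\frac{\lambda\,\hat\omega_{n,k}}{4}\leq\left|\frac{1}{n}\sum_{i=1}^{n}K_b(X_i)Z_i^{(k)}\Delta_i\right|.
\end{equation*}

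The next step is to square these inequalities and sum over $k\in\hat{J}_n\setminus J_n$. Working on the event $\{|\hat{J}_n|\leq|J_n|\log n\}$ so that $\mathrm{RSE}(|J_n|\log n,J_n,b)$ applies, the right-hand sum can be bounded, via a Cauchy-Schwarz argument analogous to~\eqref{eq:stand21} together with the definition of $\Phi(\cdot,\cdot)$ in Definition~\ref{def:RSE}, by a constant multiple of
\begin{equation*}
\Phi(|J_n|\log n,J_n)\cdot\frac{1}{n}\Bigl\|\mathbf{K}_b^{1/2}\bigl(\mathbf{V}(\tilde\theta_n-\theta_0(J_n,b))+\mathbf{Z}(\tilde\gamma_n-\gamma_0(J_n,b))\bigr)\Bigr\|_2^2.
\end{equation*}
This weighted prediction error is exactly the quantity controlled by the standard Lasso oracle inequality: under (AS), $\mathrm{CC}(\bar w,J_n)$ and $\lambda\asymp\sqrt{\log p_n/(nb)}$, Lemma~\ref{lem:reg_error} yields a bound of order $|J_n|\lambda^2$ (up to the compatibility constant, which is $O_P(1)$ by assumption). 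Combining these pieces gives
\begin{equation*}
|\hat{J}_n\setminus J_n|\cdot\frac{\lambda^2(w^{(l)})^2}{16}\leq C\cdot\Phi(|J_n|\log n,J_n)\cdot|J_n|\lambda^2,
\end{equation*}
so $|\hat{J}_n\setminus J_n|=O_P(|J_n|)$ and hence $|\hat{J}_n|=O_P(|J_n|)$.

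The main obstacle will be the circular dependence: $\mathrm{RSE}(|J_n|\log n,J_n,b)$ only controls $\Phi(m,J_n)$ up to sparsity level $m=|J_n|\log n$, yet the bound above is applied to the (unknown) cardinality $|\hat{J}_n|$. A standard self-bounding device handles this: one first proves the estimate on the event $\{|\hat{J}_n|\leq|J_n|\log n\}$, then observes that the derived bound $|\hat{J}_n|=O_P(|J_n|)$ is $o_P(|J_n|\log n)$, so the restricting event has probability tending to one and the argument is self-consistent. The weight bounds $w^{(l)}\leq\hat\omega_{n,k}\leq w^{(u)}$ from Lemma~\ref{lem:weights} enter both through the margin $w^{(l)}/4$ in the KKT lower bound and through the definition of $\tilde{\mathcal{T}}(b)$, so tracking the interplay of these constants and the penalty normalization will require the most delicate bookkeeping.
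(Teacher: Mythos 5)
Your KKT argument is, in substance, the paper's own: restricting to $\tilde{\mathcal{T}}(b)$, the stationarity condition plus the bound on the correlation of $Z_i^{(k)}$ with $r_i(J_n,b)$ gives $\bigl|n^{-1}\sum_i K_b(X_i)Z_i^{(k)}\Delta_i\bigr|\geq\lambda\hat\omega_{n,k}/4\geq\lambda w^{(l)}/4$ for $k\in\hat J_n$, and squaring, summing, applying the restricted sparse eigenvalue and the oracle inequality of Lemma~\ref{lem:reg_error} yields an inequality of the form $|\hat J_n\setminus J_n|\leq L_n|J_n|\Phi(|\hat J_n\setminus J_n|,J_n)$ with $L_n=\frac{4}{k(\bar w,J_n)}\bigl(4w^{(u)}/w^{(l)}\bigr)^2$ --- this is exactly the content of Lemmas~\ref{lem:BC1} and~\ref{lem:sparsity}. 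The genuine gap is in your resolution of the circular dependence on $|\hat J_n|$. What your argument establishes is a conditional implication: on the good event, \emph{if} $|\hat J_n|\leq|J_n|\log n$ \emph{then} $|\hat J_n|\leq C|J_n|$. That only shows $\IP\bigl(C|J_n|<|\hat J_n|\leq|J_n|\log n\bigr)\to0$; it says nothing about $\IP\bigl(|\hat J_n|>|J_n|\log n\bigr)$. An integer-valued statistic that always equals, say, $2|J_n|\log n$ satisfies your implication vacuously while violating the conclusion, and there is no continuity or intermediate-value structure in $|\hat J_n|$ that would make the argument ``self-consistent.''

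The paper closes this hole with the sublinearity of restricted sparse eigenvalues (Lemma~\ref{lem:sparsity}, following Theorem~3 of \citet{BC13}): the inequality $|\hat J_n\setminus J_n|\leq L_n|J_n|\Phi(|\hat J_n\setminus J_n|,J_n)$ is first derived \emph{unconditionally} on $\tilde{\mathcal{T}}(b)$, with $\Phi$ evaluated at the unknown random sparsity level, and then the property $\Phi(\lceil\ell\rceil m,J_n)\leq\lceil\ell\rceil\,\Phi(m,J_n)$ shows that if $|\hat J_n\setminus J_n|$ exceeded some $\bar m$ satisfying $\bar m>2L_n|J_n|\Phi(\bar m,J_n)$, one would be forced into the contradiction $\bar m\leq2L_n|J_n|\Phi(\bar m,J_n)$. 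Choosing such an $\bar m\leq|J_n|\log n$ (possible for large $n$, since RSE$(|J_n|\log n,J_n,b)$ bounds $\Phi$ at that level and $\log n\to\infty$) then delivers $|\hat J_n\setminus J_n|\leq L_n|J_n|\Phi(\bar m,J_n)=O_P(|J_n|)$ without ever conditioning on $|\hat J_n|$ being small. You need to add this sublinearity step, or an equivalent device that controls $\Phi$ at the \emph{actual} random level $|\hat J_n\setminus J_n|$; as written, your last paragraph does not close the argument.
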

\begin{proof}
Note firstly that we may restrict to the event $\tilde{\mathcal{T}}(b)$ because of Corollary \ref{cor:T}. Moreover $\Phi(|J_n|\log n,J_n)=O_P(1)$ by Assumption $\textrm{RSE}(|J_n|\log n,J_n,b)$. We may thus also restrict to the event $\Phi(|J_n|\log n,J_n)\leq\Phi_0$ for some (possibly large but fixed) $\Phi_0>0$. Similarly, since $\textrm{CC}\left(\bar{w},J_n\right)$ holds, we may assume that $k\left(\bar{w},J_n\right)^{-1}\leq k\left(\bar{w}\right)^{-1}<\infty$. On these events, we have for all $m\leq |J_n|\log n$ (see Lemma \ref{lem:sparsity} for a definition of $L_n$)
$$2L_n|J_n|\Phi(\min(m,n),J_n)\leq 2\left(\frac{4w^{(u)}}{w^{(l)}}\right)^2\frac{4}{k(\bar{w})}\Phi_0|J_n|.$$
Thus, for $n$ large enough, there are $m\in\IN$ which fulfill $m\leq |J_n|\log n$ and $m\in\mathcal{M}$. For each such $m$ we get from Lemma \ref{lem:sparsity} that
$$\left|\hat{J}_n\right|\leq\left|J_n\right|+\left|\hat{J}_n\setminus J_n\right|\leq |J_n|\left(1+L_n\Phi(\min(m,n),J_n)\right)\leq|J_n|\left(1+\left(\frac{4w^{(u)}}{w^{(l)}}\right)^2\frac{4}{k(\bar{w})}\Phi_0\right)$$
which finishes the proof.
\end{proof}

\begin{theorem}
\label{thm:LassoCn}
Let (AS), (CTB), (CV), (MS), (BW), (TCS \eqref{eq:equicont1}, \eqref{eq:equicont2} for $h$ and $b$), (D conditions on $\mu_Z$ and $\mu_Z'$), CC$(\bar{w},J_n)$, $\textrm{RSE}(|J_n|\log n,J_n,b)$ and RSE$(0,J_n,h)$ for $\tilde{Z}_i$ hold. Suppose that $f_X$ is continuous and that $p_n\to\infty$. Then, $\IP(\hat{J}_n\supseteq J_{0,n})\to1$ and
$$\left|C(J_{0,n})\right|=O_P\left(|J_n\setminus J_{0,n}|\cdot|J_n|\frac{\log p_n}{ng}\right).$$
\end{theorem}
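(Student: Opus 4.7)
The plan is to establish the two conclusions in sequence. First I would prove the inclusion $\IP(\hat J_n \supseteq J_{0,n}) \to 1$ via an $\ell_2$-consistency bound for the Lasso coefficients combined with the ``beta-min''-type lower bound in Assumption (MS). Conditional on that inclusion event, the definition \eqref{eq:defC} of $C(J_{0,n})$ simplifies because $\hat J_n \cap J_{0,n} = J_{0,n}$, and I would then expand it into a quadratic and a linear piece and control each separately, using the population first-order condition at bandwidth $h$ to turn the linear piece into a uniform concentration bound.

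For the inclusion, the key input is a Lasso oracle inequality at the preliminary bandwidth $b$. Theorem \ref{thm:my_sparsity} already gives $|\hat J_n| = O_P(|J_n|)$, and Lemma \ref{lem:ZR} gives $\delta_n(b)=O_P(\sqrt{\log p_n/(nb)})$, so the standard compatibility/restricted-eigenvalue argument applied on the high-probability event $\tilde{\mathcal T}(b)$ yields
$$\tfrac{1}{n}\bigl\|\mathbf{K}_b^{1/2}\tilde{\mathbf{Z}}(\tilde\gamma_n - \check\gamma_0(J_n,b))\bigr\|_2^2 = O_P\bigl(|J_n|\log p_n/(ng)\bigr),$$
which the lower bound in $\textrm{RSE}(|J_n|\log n, J_n, b)$ upgrades to $\|\tilde\gamma_n - \check\gamma_0(J_n,b)\|_2 = O_P(\sqrt{|J_n|\log p_n/(ng)})$. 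By Assumption (MS) applied at $\mathcal H = b$ — admissible under (BW) — every $k \in J_{0,n}$ satisfies $|\check\gamma_0^{(k)}(J_n,b)| \geq \eta_n\sqrt{|J_n|\log p_n/(ng)}$ with $\eta_n \to \infty$, while the componentwise bound $\max_k|\tilde\gamma_n^{(k)} - \check\gamma_0^{(k)}(J_n,b)| \leq \|\tilde\gamma_n - \check\gamma_0(J_n,b)\|_2$ is of strictly smaller order. The triangle inequality then forces $\tilde\gamma_n^{(k)} \neq 0$ simultaneously for every $k \in J_{0,n}$ with probability tending to one, i.e.\ $\hat J_n \supseteq J_{0,n}$.

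For the bound on $C(J_{0,n})$, set $\check\gamma := \check\gamma_0(J_n,h)$ and $\Delta := \check\gamma - (\check\gamma)_{J_{0,n}}$, which is supported on $J_n\setminus J_{0,n}$. On $\{\hat J_n \supseteq J_{0,n}\}$, direct expansion of the squared norms in \eqref{eq:defC} using $\mathbf{r}_n(J_n,h)=\mathbf{Y}-\mathbf{V}\check\theta_0(J_n,h)-\tilde{\mathbf{Z}}\check\gamma$ gives
$$C(J_{0,n}) = \tfrac{1}{n}\bigl\|\mathbf{K}_h^{1/2}\tilde{\mathbf{Z}}\Delta\bigr\|_2^2 + \tfrac{2}{n}\,\mathbf{r}_n(J_n,h)^\top \mathbf{K}_h \tilde{\mathbf{Z}}\Delta.$$
Assumption (MS) condition \eqref{eq:vuc} at $\mathcal H = h$ bounds every nonzero coordinate of $\Delta$ by $C\sqrt{|J_n|\log p_n/(ng)}$, yielding $\|\Delta\|_2^2 \leq C^2|J_n\setminus J_{0,n}|\cdot|J_n|\log p_n/(ng)$ and $\|\Delta\|_1 \leq C|J_n\setminus J_{0,n}|\sqrt{|J_n|\log p_n/(ng)}$. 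The quadratic piece is then $O_P(|J_n\setminus J_{0,n}|\cdot|J_n|\log p_n/(ng))$ by the upper bound in $\textrm{RSE}(0,J_n,h)$ for $\tilde Z_i$. For the linear piece, the population first-order condition defining $(\check\theta_0,\check\gamma_0)(J_n,h)$ gives $\IE(K_h(X_i)\tilde Z_i^{(k)} r_i(J_n,h)) = 0$ for every $k \in J_n$; a concentration argument analogous to Lemma \ref{lem:ZR}, applied to $\tilde Z_i^{(k)} = Z_i^{(k)} - (M_n)_{\cdot k}^\top V_i$ (admissible because $M_n$ is bounded) at bandwidth $h$, yields $\max_{k \in J_n}|n^{-1}\sum_i K_h(X_i)\tilde Z_i^{(k)} r_i(J_n,h)| = O_P(\sqrt{\log p_n/(ng)})$. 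A Hölder step then bounds the linear piece by $O_P(|J_n\setminus J_{0,n}|\sqrt{|J_n|}\log p_n/(ng))$, which is dominated by the quadratic contribution whenever $|J_n|\geq 1$.

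The main obstacle is maintaining consistent notation across bandwidths: Assumption (MS) must be invoked twice — at $\mathcal H = b$ for the inclusion and at $\mathcal H = h$ for the bound on $C(J_{0,n})$ — with the \emph{same} distinguished subset $J_{0,n}$, which is exactly what the bandwidth-stability embedded in (MS) delivers and what (BW) is used for. Reconciling the Lasso $\ell_2$-consistency rate with the ``beta-min'' lower bound is straightforward under (BW) since $b \asymp h \asymp g$, but the proof must carefully track the correct powers of $|J_n|$, $|J_n\setminus J_{0,n}|$, and $\log p_n$ in each term, and verify that the final bound $O_P(|J_n\setminus J_{0,n}|\cdot|J_n|\log p_n/(ng))$ is indeed the dominant one.
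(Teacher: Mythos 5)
Your proposal is correct and follows essentially the same route as the paper: the inclusion $\IP(\hat J_n\supseteq J_{0,n})\to1$ via the Lasso oracle inequality on $\tilde{\mathcal T}(b)$ upgraded to an $\ell_2$ bound by $\textrm{RSE}(|J_n|\log n,J_n,b)$ and contrasted with the $\eta_n$-lower bound in (MS), and then the expansion of $C(J_{0,n})$ into a quadratic piece controlled by the upper restricted eigenvalue in $\textrm{RSE}(0,J_n,h)$ and a linear piece controlled by the Lemma \ref{lem:ZR}-type concentration together with the $\ell_1$/$\ell_2$ bounds on $(\gamma_0(J_n,h))_{J_n\setminus J_{0,n}}$ from \eqref{eq:vuc}. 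Your bookkeeping of the rates, including the observation that the linear term $O_P(|J_n\setminus J_{0,n}|\sqrt{|J_n|}\log p_n/(ng))$ is dominated by the quadratic one, matches the paper's argument.
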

\begin{proof}
Since all assumptions of Theorem \ref{thm:my_sparsity} are assumed, we may use that $|\hat{J}_n|=O(|J_n|)$. Thus we have that $\IP(|\hat{J}_n|\leq\log n|J_n|)\to1$ and therefore we may restrict to the event $|\hat{J}_n|\leq\log n|J_n|$. By Corollary \ref{cor:T} we may also restrict to the event $\tilde{\mathcal{T}}(b)$. Hence, we obtain on $\tilde{\mathcal{T}}(b)\cap\{|\hat{J}_n|\leq|J_n|\log n\}$
\begin{align*}
\|\gamma_0(J_n,b)-\tilde{\gamma}_n\|_2^2\leq&\frac{1}{\phi(|J_n|\log n,J_n)}\frac{1}{n}\left\|\mathbf{K}_b^{\frac{1}{2}}\begin{pmatrix}\mathbf{V} & \mathbf{Z}\end{pmatrix}\begin{pmatrix}
\theta_0(J_n,b)-\tilde{\theta}_n \\ \gamma_0(J_n,b)-\tilde{\gamma}_n\end{pmatrix}\right\|_2^2 \\
\leq&\frac{4\lambda_n^2|J_n|(w^{(u)})^2}{k\left(\bar{w},J_n\right)\phi(|J_n|\log n,J_n)}.
\end{align*}
Thus we find that on on $\tilde{\mathcal{T}}(b)\cap\{|\hat{J}_n|\leq|J_n|\log n\}$, we have for any $k\in\{1,...,p_n\}$
$$\left|\gamma_0^{(k)}(J_n,b)-\tilde{\gamma}_n^{(k)}\right|\leq \frac{2Cw^{(u)}}{\sqrt{k(\bar{w},J_n)\phi(|J_n|\log n,J_n)}}\sqrt{\frac{|J_n|\log p_n}{c_{g,1}ng}},$$
where we used the properties of $b$ specified in (BW). Since we assume CC$(\bar{w},J_n)$ and $\textrm{RSE}(|J_n|\log n,J_n,b)$ and by Assumption (MS, \eqref{eq:vic}), and the definition of $J_{0,n}$, we find that for $k\in J_{0,n}$ necessarily $\tilde{\gamma}_n^{(k)}\neq0$ and hence we have that on $\tilde{\mathcal{T}}_n(b)\cap\{|\hat{J}_n|\leq|J_n|\log n\}$, $\hat{J}_n\supseteq J_{0,n}$ and in particular $\hat{J}_n\cap J_{0,n}=J_{0,n}$. This proves $\IP(\hat{J}_n\supseteq J_{0,n})\to1$. Moreover, we obtain on $\tilde{\mathcal{T}}(b)$
\begin{align*}
&\left|C_n(J_{0,n})\right| \\
=&\Bigg|\frac{1}{n}\left\|\mathbf{K}_h^{\frac{1}{2}}\left(\mathbf{Y}-\mathbf{V}\check{\theta}_0(J_n,h)-\tilde{\mathbf{Z}}\left(\check{\gamma}_0(J_n,h)\right)_{\hat{J}_n\cap J_{0,n}}\right)\right\|_2^2 \\
&\quad\quad\quad\quad\quad\quad\quad\quad-\frac{1}{n}\left\|\mathbf{K}_h^{\frac{1}{2}}\left(\mathbf{Y}-\mathbf{V}\check{\theta}_0(J_n,h)-\tilde{\mathbf{Z}}\check{\gamma}_0(J_n,h)\right)\right\|_2^2\Bigg| \\
=&\left|\frac{1}{n}\left\|\mathbf{K}_h^{\frac{1}{2}}\left(\mathbf{r}_n(h)+\tilde{\mathbf{Z}}\left(\gamma_0(J_n,h)\right)_{J_n\setminus J_{0,n}}\right)\right\|_2^2-\frac{1}{n}\left\|\mathbf{K}_h^{\frac{1}{2}}\mathbf{r}_n(h)\right\|_2^2\right| \\
=&\frac{1}{n}\left\|\mathbf{K}_h^{\frac{1}{2}}\tilde{\mathbf{Z}}\left(\gamma_0(J_n,h)\right)_{J_n\setminus J_{0,n}}\right\|_2^2+\frac{2}{n}\mathbf{r}_n(h)^{\top}\mathbf{K}_h\tilde{\mathbf{Z}}\left(\gamma_0(J_n,h)\right)_{J_n\setminus J_{0,n}} \\
\leq&\tilde{\Phi}(0,J_n)\left\|\left(\gamma_0(J_n,h)\right)_{J_n\setminus J_{0,n}}\right\|_2^2+2\left\|\frac{1}{n}\mathbf{r}_n(h)^{\top}\mathbf{K}_h\tilde{\mathbf{Z}}\right\|_{\infty}\left\|\left(\gamma_0(J_n,h)\right)_{J_n\setminus J_{0,n}}\right\|_1,
\end{align*}
where $\tilde{\Phi}$ denotes the restricted eigenvalue from Definition \ref{def:RSE} for $Z_i$ replaced by $\tilde{Z}_i$. From (MS), we obtain for some constant $C_0>0$
\begin{align*}
\left\|\left(\gamma_0(J_n,h)\right)_{J_n\setminus J_{0,n}}\right\|_2^2\leq&C_0|J_n\setminus J_{0,n}|\frac{|J_n|\log p_n}{ng}, \\
\left\|\left(\gamma_0(J_n,h)\right)_{J_n\setminus J_{0,n}}\right\|_1\leq& C_0|J_n\setminus J_{0,n}|\sqrt{\frac{|J_n|\log p_n}{ng}}.
\end{align*}
The statement of the lemma follows by invoking Lemma \ref{lem:ZR} and RSE$(0,J_n,h)$ for $\tilde{Z}_i$.
\end{proof}

\subsubsection{Supporting Results}
\label{subsec:suppresLasso}
In the proofs below we will work all the time with the same bandwidth sequence $b$ and the same target set $J_n$. To simplify notation, we write therefore $(\theta_{0,n},\gamma_{0,n})$ instead of $(\theta_0(J_n,b),\gamma_0(J_n,b))$ and $\mathbf{r}_n(b)$ instead of $\mathbf{r}_n(J_n,b)$.

\begin{lemma}
\label{lem:weights}
Let (CTB), (CV), $\IE(K_b(X_i)Z_i)=0$, $p_n\to\infty$, $b\to0$ and $\log p_n/nb\to0$ be true. Then, there are numbers $0<w^{(l)}\leq1<w^{(u)}<\infty$ such that with probability converging to one
$$w^{(l)}\leq\min_{k=1,...,p_n}\hat{\omega}_{n,k}\leq\max_{k=1,...,p_n}\hat{\omega}_{n,k}\leq w^{(u)}.$$
\end{lemma}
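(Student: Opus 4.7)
The plan is to decompose $\hat{w}_{n,k}^2$ into a dominant term whose expectation is bounded away from zero and infinity by (CV), plus a negligible correction. Expanding the square and using the identity $\frac{1}{n}\sum_i K_b(X_i)Z_i^{(k)} = \hat{\mu}_{Z,n}^{(k)}$, the cross term collapses and yields
$$\hat{w}_{n,k}^2 \;=\; \frac{1}{n}\sum_{i=1}^n \frac{1}{b}K\!\left(\frac{X_i}{b}\right)^{\!2}\!\bigl(Z_i^{(k)}\bigr)^2 \;-\; b\bigl(\hat{\mu}_{Z,n}^{(k)}\bigr)^2.$$
The expectation of the first term, $\mu_{n,k}^{\star}:=\IE\!\left[\frac{1}{b}K(X_i/b)^2(Z_i^{(k)})^2\right]$, lies in some fixed compact interval $[c_-,c_+]\subset(0,\infty)$ uniformly in $(n,k)$ by assumption (CV). It therefore suffices to show that both stochastic discrepancies vanish uniformly over $k\leq p_n$.

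The correction term is immediate: Lemma~\ref{lem:expbound} applied with $B_i=Z_i^{(k)}$ (whose Bernstein moment bound follows from \eqref{eq:B0} of (CTB), and whose mean vanishes by the standing hypothesis $\IE(K_b(X_i)Z_i)=0$), combined with a union bound over $k$ and the assumption $\log p_n/(nb)\to 0$, delivers
$$\max_{k\leq p_n}\bigl|\hat{\mu}_{Z,n}^{(k)}\bigr|=O_P\bigl(\sqrt{\log p_n/(nb)}\bigr),$$
so that $\max_k b\bigl(\hat{\mu}_{Z,n}^{(k)}\bigr)^2 = O_P(\log p_n/n) = o_P(1)$.

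For the dominant term I would invoke Lemma~\ref{lem:expbound} a second time, but now with the ``kernel'' $\tilde K(u)=K(u)^2$ and $B_i=(Z_i^{(k)})^2$. The moment hypothesis becomes
$$\int K(u)^{2m}\,\mu_{k,2m}(ub)\,f_X(ub)\,du\leq\frac{m!}{2}\sigma_1^2 c_1^{m-2},$$
which is precisely condition \eqref{eq:B2} of (CTB); the $m=1$ case simultaneously controls the mean. A union bound over $k\leq p_n$ then yields
$$\max_{k\leq p_n}\left|\frac{1}{n}\sum_i\frac{1}{b}K(X_i/b)^2(Z_i^{(k)})^2-\mu_{n,k}^{\star}\right|=O_P\bigl(\sqrt{\log p_n/(nb)}\bigr)=o_P(1).$$
Combining the two displays gives $\max_k|\hat{w}_{n,k}^2-\mu_{n,k}^{\star}|=o_P(1)$, and since $\mu_{n,k}^{\star}\in[c_-,c_+]$ uniformly, any $w^{(l)}\in(0,\min(1,\sqrt{c_-}))$ and $w^{(u)}>\max(1,\sqrt{c_+})$ bracket all $\hat{w}_{n,k}$ with probability tending to one. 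The only non-routine step is recognising that condition \eqref{eq:B2} of (CTB) is engineered exactly to supply the Bernstein moment hypothesis for the squared kernel; everything else is the standard pairing of a union bound with the exponential deviation inequality.
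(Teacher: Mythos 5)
Your proposal is correct and follows essentially the same route as the paper's proof: the identical decomposition of $\hat{w}_{n,k}^2$ into the $K^2$-weighted second moment minus $b$ times the squared local mean, Bernstein concentration via Lemma~\ref{lem:expbound} with a union bound over $k$ (using (CTB,~\eqref{eq:B2}) for the leading term and (CTB,~\eqref{eq:B0}) plus the centering for the correction), and assumption (CV) to bracket the limiting means away from zero and infinity. Your write-up is, if anything, slightly more explicit than the paper's about how the $K^2$-kernel variant of the concentration lemma is justified and how the final constants $w^{(l)},w^{(u)}$ are chosen.
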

\begin{proof}
We note that
\begin{equation}
\label{eq:weight_split}
\hat{\omega}_{n,k}^2=\frac{1}{nb}\sum_{i=1}^nK\left(\frac{X_i}{b}\right)^2\left(Z_i^{(k)}\right)^2-b\left(\frac{1}{nb}\sum_{i=1}^nK\left(\frac{X_i}{b}\right)Z_i^{(k)}\right)^2.
\end{equation}
We may apply Lemma \ref{lem:expbound} with $x>1$ since (CTB, \eqref{eq:B2}) holds. Hence,
\begin{align*}
&\IP\left(\max_{k\in\{1,...,p_n\}}\left|\frac{1}{nb}\sum_{i=1}^n\left(K\left(\frac{X_i}{b}\right)^2\left(Z_i^{(k)}\right)^2-\IE\left(K\left(\frac{X_i}{b}\right)^2\left(Z_i^{(k)}\right)^2\right)\right)\right|\geq x\sqrt{\frac{\log p_n}{nb}}\right) \\
\leq&p_n^{1-x}\to0.
\end{align*}
Since $\log p_n/nb\to0$, we conclude from assumption (CV) that the first part of \eqref{eq:weight_split} is uniformly bounded with probability converging to $1$. Similarly, (CTB, \eqref{eq:B0}) together with $\IE(K_b(X_i)Z_i^{(k)})=0$ implies the same for the second part of \eqref{eq:weight_split}. Combining these two yields the result.
\end{proof}

\begin{corollary}
\label{cor:T}
Suppose that all assumptions of Lemmas \ref{lem:ZR} and \ref{lem:weights} hold and let the conditions of Lemma \ref{lem:VR} for $h=b$ hold. Then, there are choices of $0<w^{(l)}\leq1<w^{(u)}<\infty$ and $C\in(0,\infty)$ such that for $\lambda_n=C\sqrt{\frac{\log p_n}{nb}}$, $\IP(\tilde{\mathcal{T}}(b))\to1$.
\end{corollary}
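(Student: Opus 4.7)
The plan is to handle the three components of $\tilde{\mathcal{T}}(b)$ separately, pick constants and rates to make each of the three controlling probabilities go to one, and then combine via a union bound. The key observation is that each ingredient is already packaged as a supporting result cited in the hypothesis.

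First, I would apply Lemma \ref{lem:weights} directly: the assumptions of Lemma \ref{lem:weights} are among those assumed here, so there exist constants $0<w^{(l)}\le 1<w^{(u)}<\infty$ for which
$$\mathbb{P}\left(w^{(l)}\leq \min_{k\in\{1,\dots,p_n\}}\hat{\omega}_{n,k}\leq \max_{k\in\{1,\dots,p_n\}}\hat{\omega}_{n,k}\leq w^{(u)}\right)\to 1,$$
which in particular delivers the bound on $\min_{k\in J_n^c}\hat\omega_{n,k}$ that appears in $\tilde{\mathcal{T}}(b)$.

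Second, I would bound the two infinity norms inside $\mathcal{T}(b)$. For the $\mathbf V$-term, since $\mathbf V$ has only four columns, Lemma \ref{lem:VR} applied to each column (with $h$ replaced by $b$, as the hypotheses allow) gives
$$\left\|\tfrac{1}{n}\mathbf{V}^{\top}\mathbf{K}_b\mathbf{r}_n(b)\right\|_{\infty}=O_P\!\left(\tfrac{1}{\sqrt{nb}}\right).$$
For the weighted covariate-residual term, the weighted version of Lemma \ref{lem:ZR} (whose hypotheses—full (CTB), $\mathbb{E}(K_b(X_i)Z_i)=0$, etc.—are all in force here) yields that there is some constant $C_0>0$ with
$$\mathbb{P}\!\left(\sup_{k}\left|\tfrac{1}{nb}\sum_{i=1}^n\hat{\omega}_{n,k}^{-1}Z_i^{(k)}K(X_i/b)r_i(J_n,b)\right|>C_0\sqrt{\tfrac{\log p_n}{nb}}\right)\to 0.$$
Because $p_n\to\infty$ and $\log p_n/(nb)\to 0$, we also have $1/\sqrt{nb}=o(\sqrt{\log p_n/(nb)})$, so the $\mathbf V$-term is of strictly smaller order than the weighted covariate-residual bound.

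Third, I would set $C=8C_0$ (say) and define $\lambda_n=C\sqrt{\log p_n/(nb)}$. With this choice, $\tfrac12\lambda_n=4C_0\sqrt{\log p_n/(nb)}$, which dominates the high-probability upper bounds on both $2\|\tfrac{1}{n}\mathbf V^\top\mathbf K_b\mathbf r_n(b)\|_\infty$ and $2\sup_k|\cdot|$ derived above, once $n$ is large enough; hence $\mathbb{P}(\mathcal{T}(b))\to 1$. A final union bound against the weight event from Lemma \ref{lem:weights} then gives $\mathbb{P}(\tilde{\mathcal{T}}(b))\to 1$.

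There is no real obstacle beyond careful bookkeeping: each probabilistic bound is already available as a cited lemma, and the role of the proof is simply to verify the compatibility of the rates and to choose the constants $w^{(l)}, w^{(u)}, C$ in the correct order (first $w^{(l)}, w^{(u)}$ via Lemma \ref{lem:weights}, then $C$ via the sup bounds). The only mildly delicate point is that the weighted sup bound in Lemma \ref{lem:ZR} uses the random normalizations $\hat\omega_{n,k}^{-1}$, which is why we need the weight event on the full index set $\{1,\dots,p_n\}$—but this is precisely what Lemma \ref{lem:weights} provides.
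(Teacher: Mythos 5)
Your proposal is correct and follows essentially the same route as the paper's proof: Lemma \ref{lem:VR} handles the $\mathbf{V}$-term (which is $o_P(\lambda_n)$ since $1/\sqrt{nb}=o(\sqrt{\log p_n/(nb)})$ when $p_n\to\infty$), Lemma \ref{lem:weights} supplies $w^{(l)},w^{(u)}$, and the weighted statement of Lemma \ref{lem:ZR} controls the covariate--residual term for $C$ large enough, with a union bound combining the events. Your version merely spells out the bookkeeping that the paper compresses into two sentences.
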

\begin{proof}
Note firstly that by \eqref{eq:VR} in Lemma \ref{lem:VR} the first condition in the definition of $\mathcal{T}$ is true with probability converging to one for any choice $C>0$. Lemmas \ref{lem:weights} and \ref{lem:ZR} show that also the other events hold with probability converging to one for $w^{(l)},w^{(u)}$ as in Lemma \ref{lem:weights} and $C>0$ large enough.
\end{proof}
In the following we will always assume that we have chosen $0<w^{(l)}\leq1<w^{(u)}<\infty$ and $C>0$ as in Corollary \ref{cor:T}. The following lemma is a version of a classical result about the performance of the Lasso estimator and the proof is along the lines of similar results like e.g. in Chapter 6.2.2 in \citet{vdGB11} or in Theorem 1 of \citet{BC13}. However, since we have here a localized, partially penalized model with variable weights, we provide the proof for completeness.
\begin{lemma}
\label{lem:reg_error}
On the event $\tilde{\mathcal{T}}(b)$ we have for $\bar{w}=3w^{(u)}/w^{(l)}$ that
$$\frac{1}{n}\left\|\mathbf{K}_b^{\frac{1}{2}}\begin{pmatrix}\mathbf{V} & \mathbf{Z}\end{pmatrix}\begin{pmatrix}
\theta_0(J_n,b)-\tilde{\theta}_n \\ \gamma_0(J_n,b)-\tilde{\gamma}_n\end{pmatrix}\right\|_2^2+w^{(l)}\lambda_n\left\|\begin{pmatrix}
\theta_0(J_n,b)-\tilde{\theta}_n \\ \gamma_0(J_n,b)-\tilde{\gamma}_n\end{pmatrix}\right\|_1\leq4\frac{\lambda_n^2|J_n|(w^{(u)})^2}{k\left(\bar{w},J_n\right)}.$$
\end{lemma}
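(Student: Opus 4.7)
My plan is to follow the classical Lasso ``basic inequality'' strategy, adapted to the localized, partially-penalized setup with data-dependent weights. Throughout, write $\mathbf{D}=(\mathbf{V}\;\mathbf{Z})$, $\hat\alpha=(\hat\alpha_\theta^\top,\hat\alpha_\gamma^\top)^\top$ with $\hat\alpha_\theta=\tilde\theta_n-\theta_0(J_n,b)$ and $\hat\alpha_\gamma=\tilde\gamma_n-\gamma_0(J_n,b)$, and let $Q=\frac{1}{n}\|\mathbf{K}_b^{1/2}\mathbf{D}\hat\alpha\|_2^2$.

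First, I use optimality of $(\tilde\theta_n,\tilde\gamma_n)$ versus the feasible alternative $(\theta_0(J_n,b),\gamma_0(J_n,b))$ and the identity $\mathbf{Y}=\mathbf{V}\theta_0(J_n,b)+\mathbf{Z}\gamma_0(J_n,b)+\mathbf{r}_n(b)$ to expand the quadratic term, obtaining
\[
Q\;\le\;\tfrac{2}{n}\mathbf{r}_n(b)^\top \mathbf{K}_b\mathbf{D}\hat\alpha\;+\;\lambda_n\sum_k\hat w_{n,k}\bigl(|\gamma_0^{(k)}(J_n,b)|-|\tilde\gamma_n^{(k)}|\bigr).
\]
On $\tilde{\mathcal T}(b)$ I bound the cross term componentwise: the $\mathbf{V}$-piece by Hölder with $\|\tfrac{1}{n}\mathbf{V}^\top\mathbf{K}_b\mathbf{r}_n(b)\|_\infty\le\lambda_n/4$, yielding $\tfrac{\lambda_n}{2}\|\hat\alpha_\theta\|_1$; and the $\mathbf{Z}$-piece by factoring in $\hat w_{n,k}^{-1}$ (so the event's weighted bound applies), yielding $\tfrac{\lambda_n}{2}\sum_k\hat w_{n,k}|\hat\alpha_\gamma^{(k)}|$.

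Second, I perform the standard $\ell_1$ manipulation. Since $\gamma_0^{(k)}(J_n,b)=0$ for $k\in J_n^c$ and $\tilde\gamma_n^{(k)}=\hat\alpha_\gamma^{(k)}$ there, while on $J_n$ the triangle inequality gives $|\gamma_0^{(k)}|-|\tilde\gamma_n^{(k)}|\le|\hat\alpha_\gamma^{(k)}|$, collecting terms produces
\[
Q+\tfrac{\lambda_n w^{(l)}}{2}\|\hat\alpha_{\gamma,J_n^c}\|_1\;\le\;\tfrac{\lambda_n}{2}\|\hat\alpha_\theta\|_1+\tfrac{3\lambda_n w^{(u)}}{2}\|\hat\alpha_{\gamma,J_n}\|_1\;\le\;\tfrac{3\lambda_n w^{(u)}}{2}\,A,
\]
where $A:=\|\hat\alpha_\theta\|_1+\|\hat\alpha_{\gamma,J_n}\|_1$ and I used $w^{(u)}\ge1$. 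Dropping $Q\ge0$ gives the cone inequality $\|\hat\alpha_{\gamma,J_n^c}\|_1\le\bar w\,A$, which is precisely the admissibility condition~\eqref{eq:min_range} with $c=\bar w$. Consequently $\textrm{CC}(\bar w,J_n)$ yields $A^2\le |J_n|\,Q/k(\bar w,J_n)$.

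Third, to reach the stated bound with constant $4$, I add $\tfrac{\lambda_n w^{(l)}}{2}A$ to both sides of the displayed inequality above (using $3w^{(u)}+w^{(l)}\le 4w^{(u)}$) to obtain
\[
Q+\tfrac{\lambda_n w^{(l)}}{2}\|\hat\alpha\|_1\;\le\;2\lambda_n w^{(u)}\,A\;\le\;2\lambda_n w^{(u)}\sqrt{\tfrac{|J_n|\,Q}{k(\bar w,J_n)}}.
\]
Applying Young's inequality $2xy\le x^2+y^2$ with $x=\sqrt Q$ absorbs $Q/2$ on the left, and multiplying by $2$ produces exactly the claimed bound.

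The main obstacle is bookkeeping rather than a single hard step: keeping the weighted norms $\sum\hat w_{n,k}|\cdot|$ consistent with the unweighted ones that appear in CC (which is where the bounds $w^{(l)},w^{(u)}$ and the choice $\bar w=3w^{(u)}/w^{(l)}$ come in), and choosing the Young's-inequality split so that the leading constant is $4$ and not something larger.
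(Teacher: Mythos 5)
Your proposal is correct and follows essentially the same route as the paper's proof: the basic inequality from optimality, H\"older bounds for the $\mathbf{V}$- and weighted $\mathbf{Z}$-cross terms on $\tilde{\mathcal{T}}(b)$, the penalty decomposition over $J_n$ and $J_n^c$ leading to the cone condition with constant $\bar w=3w^{(u)}/w^{(l)}$, the compatibility condition, and a split Young's inequality absorbing half the quadratic term (the paper runs the identical argument at twice the scale, writing $\tfrac{2}{n}\|\cdot\|_2^2$ on the left and subtracting $\tfrac{1}{n}\|\cdot\|_2^2$ at the end). Just be sure to use the weighted form $2xy\le\tfrac{1}{2}x^2+2y^2$ in the last step, as your own accounting requires, rather than the unweighted $2xy\le x^2+y^2$, which would absorb all of $Q$ and lose the quadratic term from the conclusion.
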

\begin{proof}[Proof of Lemma \ref{lem:reg_error}]
The proof is analogous to the results in Chapter 6.2.2 of \citet{vdGB11}. Define $\tilde{\alpha}_n=\begin{pmatrix}\theta_{0,n} & \gamma_{0,n}\end{pmatrix}^{\top}-\begin{pmatrix}\tilde{\theta}_n & \tilde{\gamma}_n\end{pmatrix}^{\top}$. When indexing $\tilde{\alpha}_n$ by a set $S$ we implicitly mean that $S=(S_1,S_2)$ comprises two index sets. The first index set $S_1\subseteq\{1,...,4\}$ indicates which indices of the first part $\theta_{0,n}-\tilde{\theta}_n$ are included. The second index set $S_2\subseteq\{1,...,p_n\}$ indicates which indices of the second part $\gamma_{0,n}-\tilde{\gamma}_n$ are included. From now on, we let $S_n=(\{1,...,4\},J_n)$ and $S_n^c=(\emptyset,J_n^c)$.

On $\tilde{\mathcal{T}}(b)$ and by using that $(\gamma_{0,n})_{J_n^c}=0$, we obtain the following inequality chain (the first inequality is frequently called basic inequality)
\begin{align}
&\frac{2}{n}\left\|\mathbf{K}_b^{\frac{1}{2}}\begin{pmatrix}\mathbf{V} & \mathbf{Z}\end{pmatrix}\tilde{\alpha}_n\right\|_2^2\leq-4\frac{1}{n}\mathbf{r}^{\top}_n\mathbf{K}_b\begin{pmatrix}\mathbf{V} & \mathbf{Z}\end{pmatrix}\tilde{\alpha}_n+2\lambda_n\sum_{k=1}^{p_n}\hat{\omega}_{n,k}\left(|\gamma_{0,n}^{(k)}|-|\tilde{\gamma}_n^{(k)}|\right) \nonumber \\
\leq&4\left\|\frac{1}{n}\mathbf{V}^{\top}\mathbf{K}_b\mathbf{r}_n(b)\right\|_{\infty}\cdot\left\|\theta_{0,n}-\tilde{\theta}_n\right\|_1 \nonumber \\
&\quad\quad\quad+4\sup_{k=1,...,p_n}\left|\frac{1}{nb}\sum_{i=1}^n\hat{\omega}_{n,k}^{-1}Z_i^{(k)}K\left(\frac{X_i}{b}\right)r_i(J_n,b)\right|\cdot\sum_{k=1}^{p_n}\hat{\omega}_{n,k}\left|\gamma_{0,n}^{(k)}-\tilde{\gamma}_n^{(k)}\right| \nonumber \\
&\quad\quad\quad+2\lambda_n\sum_{k\in J_n}\hat{\omega}_{n,k}|\gamma_{0,n}^{(k)}-\tilde{\gamma}_n^{(k)}|-2\lambda_n\sum_{k\in J_n^c}\hat{\omega}_{n,k}|\gamma_{0,n}^{(k)}-\tilde{\gamma}_n^{(k)}| \nonumber \\
\leq&\lambda_n\left\|\theta_{0,n}-\tilde{\theta}_n\right\|_1+\lambda_n\sum_{k=1}^{p_n}\hat{\omega}_{n,k}\left|\gamma_{0,n}^{(k)}-\tilde{\gamma}_n^{(k)}\right| \nonumber \\
&\quad\quad\quad+2\lambda_n\sum_{k\in J_n}\hat{\omega}_{n,k}|\gamma_{0,n}^{(k)}-\tilde{\gamma}_n^{(k)}|-2\lambda_n\sum_{k\in J_n^c}\hat{\omega}_{n,k}|\gamma_{0,n}^{(k)}-\tilde{\gamma}_n^{(k)}| \nonumber \\
\leq&3\lambda_n\left(\left\|\theta_{0,n}-\tilde{\theta}_n\right\|_1+\sum_{k\in J_n}\hat{\omega}_{n,k}|\gamma_{0,n}^{(k)}-\tilde{\gamma}_n^{(k)}|\right)-\lambda_n\sum_{k\in J_n^c}\hat{\omega}_{n,k}\left|\gamma_{0,n}^{(k)}-\tilde{\gamma}_n^{(k)}\right|. \label{eq:be}
\end{align}
Since the left hand side of the above inequality chain is non-negative as the square of a norm, we conclude from the above that (use that $w^{(u)}\geq1$)
\begin{align*}
\left\|\left(\tilde{\alpha}_n\right)_{S_n^c}\right\|_1\leq&\frac{1}{w^{(l)}}\sum_{k\in J_n^c}\hat{\omega}_{n,k}\left|\gamma_{0,n}^{(k)}-\tilde{\gamma}_n^{(k)}\right|\leq\frac{3}{w^{(l)}}\left(\left\|\theta_{0,n}-\tilde{\theta}_n\right\|_1+\sum_{k\in J_n}\hat{\omega}_{n,k}|\gamma_{0,n}^{(k)}-\tilde{\gamma}_n^{(k)}|\right) \\
\leq&\frac{3w^{(u)}}{w^{(l)}}\left\|\left(\tilde{\alpha}_n\right)_{S_n}\right\|_1.
\end{align*}
Thus, by definition of $k\left(\bar{w},J_n\right)$ in (CC), we conclude (use that $w^{(u)}\geq1$)
\begin{align*}
&\left\|\theta_{0,n}-\tilde{\theta}_n\right\|_1+\sum_{k\in J_n}\hat{\omega}_{n,k}|\gamma_{0,n}^{(k)}-\tilde{\gamma}_n^{(k)}|\leq w^{(u)}\left\|\left(\tilde{\alpha}_n\right)_{S_n}\right\|_1 \\
\leq&w^{(u)}\sqrt{\frac{|J_n|}{nk\left(\bar{w},J_n\right)}}\left\|\mathbf{K}_b^{\frac{1}{2}}\begin{pmatrix}\mathbf{V} & \mathbf{Z}\end{pmatrix}\tilde{\alpha}_n\right\|_2.
\end{align*}
Using the above in \eqref{eq:be}, we obtain (use $w^{(l)}\leq1$)
\begin{align*}
&\frac{2}{n}\left\|\mathbf{K}_b^{\frac{1}{2}}\begin{pmatrix}\mathbf{V} & \mathbf{Z}\end{pmatrix}\tilde{\alpha}_n\right\|_2^2+\lambda_nw^{(l)}\left\|\tilde{\alpha}_n\right\|_1 \\
\leq&\frac{2}{n}\left\|\mathbf{K}_b^{\frac{1}{2}}\begin{pmatrix}\mathbf{V} & \mathbf{Z}\end{pmatrix}\tilde{\alpha}_n\right\|_2^2+\lambda_n\left\|\theta_{0,n}-\tilde{\theta}_n\right\|_1+\lambda_n\sum_{k\in J_n}\hat{\omega}_{n,k}|\gamma_{0,n}^{(k)}-\tilde{\gamma}_n^{(k)}| \\
&\quad\quad\quad\quad\quad\quad\quad+\lambda_n\sum_{k\in J_n^c}\hat{\omega}_{n,k}\left|\gamma_{0,n}^{(k)}-\tilde{\gamma}_n^{(k)}\right| \\
\leq&4\lambda_n\left(\left\|\theta_{0,n}-\tilde{\theta}_n\right\|_1+\sum_{k\in J_n}\hat{\omega}_{n,k}|\gamma_{0,n}^{(k)}-\tilde{\gamma}_n^{(k)}|\right) \\
\leq&4w^{(u)}\lambda_n\sqrt{\frac{|J_n|}{nk\left(\bar{w},J_n\right)}}\left\|\mathbf{K}_b^{\frac{1}{2}}\begin{pmatrix}\mathbf{V} & \mathbf{Z}\end{pmatrix}\tilde{\alpha}_n\right\|_2 \\
\leq&\frac{1}{n}\left\|\mathbf{K}_b^{\frac{1}{2}}\begin{pmatrix}\mathbf{V} & \mathbf{Z}\end{pmatrix}\tilde{\alpha}_n\right\|_2^2+\frac{4\lambda_n^2|J_n|(w^{(u)})^2}{k\left(\bar{w},J_n\right)}.
\end{align*}
Subtracting the regression error on both sides yields the statement of the lemma.
\end{proof}

The next lemma is a version of Lemma 2 in \citet{BC13} tailored to our situation.
\begin{lemma}
\label{lem:BC1}
On the set $\tilde{\mathcal{T}}(b)$ we have
$$\sqrt{|\hat{J}_n|}\leq2\sqrt{\Phi(|\hat{J}_n|,J_n)}\frac{4w^{(u)}}{w^{(l)}}\frac{\sqrt{|J_n|}}{k\left(\bar{w},J_n\right)^{\frac{1}{2}}}.$$
\end{lemma}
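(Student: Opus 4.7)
The argument follows the Belloni--Chernozhukov sparsity template, adapted to our weighted and localized Lasso. The idea is to combine the Lasso's first-order conditions (which give an \emph{equality} for the score at each active index) with the score concentration inequalities encoded in $\tilde{\mathcal{T}}(b)$ and the prediction-error bound supplied by Lemma~\ref{lem:reg_error}.

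First, I would invoke the KKT conditions at the minimiser $(\tilde{\theta}_n,\tilde{\gamma}_n)$: for each $k\in\hat{J}_n$ they yield the identity
$\frac{2}{n}\bigl|[\mathbf{Z}^\top\mathbf{K}_b(\mathbf{Y}-\mathbf{V}\tilde{\theta}_n-\mathbf{Z}\tilde{\gamma}_n)]_k\bigr|=\lambda_n\hat{w}_{n,k}$.
Dividing by $\hat{w}_{n,k}$, squaring, and summing over $k\in\hat{J}_n$ gives
$$\sqrt{|\hat{J}_n|}\,\frac{\lambda_n}{2}=\left(\sum_{k\in\hat{J}_n}\left(\frac{1}{n\hat{w}_{n,k}}\bigl[\mathbf{Z}^\top\mathbf{K}_b(\mathbf{Y}-\mathbf{V}\tilde{\theta}_n-\mathbf{Z}\tilde{\gamma}_n)\bigr]_k\right)^{\!2}\right)^{\!1/2}.$$
The critical point here is that in this \emph{weighted} $\|\cdot\|_2$ norm the optimality identities are exact, while the event $\tilde{\mathcal{T}}(b)$ controls the true-residual score in the same normalisation: $\sup_k\bigl|\tfrac{1}{n\hat{w}_{n,k}}[\mathbf{Z}^\top\mathbf{K}_b\mathbf{r}_n(b)]_k\bigr|\leq\lambda_n/4$.

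Next, using the decomposition $\mathbf{Y}-\mathbf{V}\tilde{\theta}_n-\mathbf{Z}\tilde{\gamma}_n=\mathbf{r}_n(b)+\begin{pmatrix}\mathbf{V} & \mathbf{Z}\end{pmatrix}\tilde{\alpha}_n$ with $\tilde{\alpha}_n$ as in Lemma~\ref{lem:reg_error}, a triangle inequality splits the right-hand side above into a residual piece and a prediction-error piece. The residual piece is bounded by $\sqrt{|\hat{J}_n|}\,\lambda_n/4$ directly from the defining inequality of $\tilde{\mathcal{T}}(b)$. For the prediction-error piece I would first use $\hat{w}_{n,k}\geq w^{(l)}$ on $\tilde{\mathcal{T}}(b)$ to strip the weights, obtaining an upper bound of $(w^{(l)})^{-1}\bigl\|\tfrac{1}{n}\mathbf{Z}(\hat{J}_n)^\top\mathbf{K}_b\begin{pmatrix}\mathbf{V} & \mathbf{Z}\end{pmatrix}\tilde{\alpha}_n\bigr\|_2$, and then write this $\|\cdot\|_2$ as a supremum over unit vectors $u\in\mathbb{R}^{|\hat{J}_n|}$ supported on $\hat{J}_n$. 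Applying Cauchy--Schwarz to the bilinear form $u^\top\mathbf{Z}(\hat{J}_n)^\top\mathbf{K}_b\begin{pmatrix}\mathbf{V} & \mathbf{Z}\end{pmatrix}\tilde{\alpha}_n$ and invoking the definition of $\Phi(|\hat{J}_n|,J_n)$ (extending $u$ by zero yields an admissible vector with at most $|\hat{J}_n|$ non-zeros outside $J_n$) gives
$$\left\|\frac{1}{n}\mathbf{Z}(\hat{J}_n)^\top\mathbf{K}_b\begin{pmatrix}\mathbf{V} & \mathbf{Z}\end{pmatrix}\tilde{\alpha}_n\right\|_2\leq\sqrt{\Phi(|\hat{J}_n|,J_n)}\cdot\frac{1}{\sqrt{n}}\bigl\|\mathbf{K}_b^{1/2}\begin{pmatrix}\mathbf{V} & \mathbf{Z}\end{pmatrix}\tilde{\alpha}_n\bigr\|_2.$$

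Finally, substituting the prediction-error estimate $\frac{1}{\sqrt{n}}\bigl\|\mathbf{K}_b^{1/2}\begin{pmatrix}\mathbf{V} & \mathbf{Z}\end{pmatrix}\tilde{\alpha}_n\bigr\|_2\leq 2\lambda_n w^{(u)}\sqrt{|J_n|/k(\bar{w},J_n)}$ from Lemma~\ref{lem:reg_error} yields
$$\sqrt{|\hat{J}_n|}\,\frac{\lambda_n}{2}\leq\sqrt{|\hat{J}_n|}\,\frac{\lambda_n}{4}+\sqrt{\Phi(|\hat{J}_n|,J_n)}\,\frac{2\lambda_n w^{(u)}\sqrt{|J_n|}}{w^{(l)}\sqrt{k(\bar{w},J_n)}},$$
and rearranging (absorbing $\sqrt{|\hat{J}_n|}\,\lambda_n/4$ on the left and cancelling $\lambda_n/4$) gives exactly the claimed inequality. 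The only genuinely delicate step is the weighted-normalization bookkeeping that aligns the KKT identity on the left with the weighted score bound from $\tilde{\mathcal{T}}(b)$; this is what produces the ratio $w^{(u)}/w^{(l)}$ in the final bound instead of an unworkable requirement like $2w^{(l)}>w^{(u)}$. The remaining steps are routine manipulations of Cauchy--Schwarz and the RSE definition.
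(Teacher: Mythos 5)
Your proposal is correct and follows essentially the same route as the paper's proof of Lemma \ref{lem:BC1}: start from the KKT identities at the active indices, decompose the fitted residual as $\mathbf{r}_n(b)+\begin{pmatrix}\mathbf{V} & \mathbf{Z}\end{pmatrix}\tilde{\alpha}_n$, bound the score piece by the event $\tilde{\mathcal{T}}(b)$, and bound the prediction piece via Cauchy--Schwarz, the restricted sparse eigenvalue $\Phi(|\hat{J}_n|,J_n)$, and the prediction-error bound of Lemma \ref{lem:reg_error}. The only cosmetic difference is a global factor of two (the paper keeps the $2/n$ from the KKT conditions and stacks the $\mathbf{V}$-rows, which vanish anyway, into the $\ell_2$ norm), and both routes land on the identical constant.
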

\begin{proof}
The proof is based on the proof of Lemma 2 in \citet{BC13}. We know from the KKT conditions (cf. Lemma 2.1 in \cite{vdGB11} that
\begin{align*}
\frac{2}{n}\left|\sum_{i=1}^nK_b(X_i)Z_i^{(k)}\left(Y_i-V_i^{\top}\tilde{\theta}_n-Z_i^{\top}\tilde{\gamma}_n\right)\right|=&\lambda_n\hat{\omega}_{n,k}, \\
\frac{2}{n}\left|\sum_{i=1}^nK_b(X_i)V_i^{(l)}\left(Y_i-V_i^{\top}\tilde{\theta}_n-Z_i^{\top}\tilde{\gamma}_n\right)\right|=&0,
\end{align*}
for all $k\in \hat{J}_n$ and all $l=1,...,4$. From this observation, we conclude that
\begin{align}
&\lambda_n\sqrt{|\hat{J}_n|}=\left(\sum_{k\in\hat{J}_n}\left(\frac{2}{n}\sum_{i=1}^nK_b(X_i)\frac{Z_i^{(k)}}{\hat{w}_{n,k}}\left(Y_i-V_i^\top\tilde{\theta}_n-Z^\top_i\tilde{\gamma}_n\right)\right)^2\right)^{\frac{1}{2}} \nonumber \\
\leq& \left(\sum_{k\in\hat{J}_n}\left(\frac{2}{n}\sum_{i=1}^nK_b(X_i)\frac{Z_i^{(k)}}{\hat{w}_{n,k}}r_i(J_n,b)\right)^2\right)^{\frac{1}{2}} \nonumber \\
&+\left(\sum_{k\in\hat{J}_n}\left(\frac{2}{n}\sum_{i=1}^nK_b(X_i)\frac{Z_i^{(k)}}{\hat{w}_{n,k}}\left(V_i^\top\left(\theta_{0,n}(J_n,b)-\tilde{\theta}_n\right)+Z^\top_i\left(\gamma_{0,n}(J_n,b)-\tilde{\gamma}_n\right)\right)\right)^2\right)^{\frac{1}{2}} \nonumber \\
\leq& \left(\sum_{k\in\hat{J}_n}\left(\frac{2}{n}\sum_{i=1}^nK_b(X_i)\frac{Z_i^{(k)}}{\hat{w}_{n,k}}r_i(J_n,b)\right)^2\right)^{\frac{1}{2}} \nonumber \\
&\quad\quad+\frac{1}{w^{(l)}}\frac{2}{n}\underbrace{\left\|\begin{pmatrix}\mathbf{V}^\top \\ \mathbf{Z}(\hat{J}_n)^\top\end{pmatrix}\mathbf{K}_b\begin{pmatrix}\mathbf{V} & \mathbf{Z}\end{pmatrix}\begin{pmatrix}\theta_{0,n}(J_n,b)-\tilde{\theta}_n \\ \gamma_{0,n}(J_n,b)-\tilde{\gamma}_n \end{pmatrix}\right\|_2}_{=:c}  \label{eq:abc}
\end{align}
Let $\alpha=(\theta^{\top} \, \gamma^{\top})^{\top}\in\IR^{4+p_n}$ be defined via
\begin{align*}
\theta=&c^{-1}\mathbf{V}^{\top}\mathbf{K}_b\begin{pmatrix} \mathbf{V} &\mathbf{Z}\end{pmatrix}\begin{pmatrix}\theta_0(J_n,b)-\tilde{\theta}_n \\ \gamma_0(J_n,b)-\tilde{\gamma}_n\end{pmatrix}, \\
\gamma_{\hat{J}_n}=&c^{-1}\mathbf{Z}(\hat{J}_n)^{\top}\mathbf{K}_b\begin{pmatrix} \mathbf{V} &\mathbf{Z}\end{pmatrix}\begin{pmatrix}\theta_0(J_n,b)-\tilde{\theta}_n \\ \gamma_0(J_n,b)-\tilde{\gamma}_n\end{pmatrix}, \quad\gamma_{\hat{J}_n^c}=0.
\end{align*}
Thus, $\|\gamma\|_0\leq|\hat{J}_n|$ and $\|\alpha\|_2$=1. Moreover, by definition of $\alpha$ and of the restricted sparse eigenvalue, we get from Lemma \ref{lem:reg_error}
\begin{align*}
&\frac{1}{n}\left\|\begin{pmatrix} \mathbf{V}^{\top} \\ \mathbf{Z}(\hat{J}_n)^{\top}\end{pmatrix}\mathbf{K}_b\begin{pmatrix} \mathbf{V} &\mathbf{Z}\end{pmatrix}\begin{pmatrix}\theta_0(J_n,b)-\tilde{\theta}_n \\ \gamma_0(J_n,b)-\tilde{\gamma}_n\end{pmatrix}\right\|_2 \\
=&\frac{1}{n}\left|\alpha^{\top}\begin{pmatrix} \mathbf{V}^{\top} \\ \mathbf{Z}^{\top}\end{pmatrix}\mathbf{K}_b\begin{pmatrix} \mathbf{V} &\mathbf{Z}\end{pmatrix}\begin{pmatrix}\theta_0(J_n,b)-\tilde{\theta}_n \\ \gamma_0(J_n,b)-\tilde{\gamma}_n\end{pmatrix}\right| \\
\leq&\frac{1}{n}\left\|\alpha^{\top}\begin{pmatrix} \mathbf{V}^{\top} \\ \mathbf{Z}^{\top}\end{pmatrix}\mathbf{K}_b^{\frac{1}{2}}\right\|_2\left\|\mathbf{K}_b^{\frac{1}{2}}\begin{pmatrix} \mathbf{V} &\mathbf{Z}\end{pmatrix}\begin{pmatrix}\theta_0(J_n,b)-\tilde{\theta}_n \\ \gamma_0(J_n,b)-\tilde{\gamma}_n\end{pmatrix}\right\|_2 \\
\leq&\sqrt{\Phi(|\hat{J}_n|,J_n)}2w^{(u)}\frac{\lambda_n\sqrt{|J_n|}}{k\left(\bar{w},J_n\right)^{\frac{1}{2}}}.
\end{align*}
Using the above, we obtain on $\tilde{\mathcal{T}}(b)$
\begin{align*}
\eqref{eq:abc}\leq&\frac{1}{2}\lambda_n\sqrt{|\hat{J}_n|}+4\sqrt{\Phi(|\hat{J}_n|,J_n)}\frac{\lambda_n\sqrt{|J_n|}w^{(u)}}{k\left(\bar{w},J_n\right)^{\frac{1}{2}}w^{(l)}}.
\end{align*}
Thus, we obtain by rearranging
\begin{align*}
&\frac{1}{2}\lambda_n\sqrt{|\hat{J}_n|}\leq4\sqrt{\Phi(|\hat{J}_n|,J_n)}\frac{\lambda_n\sqrt{|J_n|}w^{(u)}}{k\left(\bar{w},J_n\right)^{\frac{1}{2}}w^{(l)}} \\
\Leftrightarrow&\sqrt{|\hat{J}_n|}\leq8\sqrt{\Phi(|\hat{J}_n|,J_n)}\frac{\sqrt{|J_n|}w^{(u)}}{k\left(\bar{w},J_n\right)^{\frac{1}{2}}w^{(l)}}.
\end{align*}
\end{proof}

Having established this result, we can also proof our own version of Theorem 3 in \citet{BC13}.
\begin{lemma}
\label{lem:sparsity}
Denote
$$L_n=\frac{4}{k\left(\bar{w},J_n\right)}\left(\frac{4w^{(u)}}{w^{(l)}}\right)^2.$$
On $\tilde{\mathcal{T}}(b)$ we have
$$\left|\hat{J}_n\setminus J_n\right|\leq L_n|J_n|\min_{m\in \mathcal{M}}\Phi(\min(m,n),J_n),$$
where $\Phi$ is defined in Definition \ref{def:RSE} and
$$\mathcal{M}=\left\{m\in\IN: m>|J_n|\Phi(\min(m,n),J_n)2L_n\right\}.$$
\end{lemma}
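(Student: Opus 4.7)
The plan is to follow the template of Theorem~3 in \citet{BC13}, adapted to the localized setting. First, squaring the bound of Lemma~\ref{lem:BC1} and tracking that the coefficient vector constructed in its proof has at most $|\hat{J}_n\setminus J_n|$ nonzero entries outside $J_n$ (so that monotonicity of $\Phi$ yields a tighter version of the stated inequality), I would obtain the implicit sparsity bound
\[
|\hat{J}_n|\;\leq\; L_n\,|J_n|\,\Phi(|\hat{J}_n\setminus J_n|,J_n)\qquad\text{on }\tilde{\mathcal{T}}(b).
\]
Combined with the fact that the Lasso solution has at most $n$ nonzero coefficients (so $|\hat{J}_n\setminus J_n|\leq n$), the task reduces to turning this implicit bound into an explicit one by choosing a suitable $m\in\mathcal{M}$.

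The main auxiliary step is a sublinearity property of the restricted sparse eigenvalue: for any $m^*\geq 1$ and integer $k\geq 1$,
\[
\Phi(km^*,J_n)\;\leq\; k\,\Phi(m^*,J_n).
\]
To prove this, I would take any admissible $\alpha=(\theta,\gamma)$ with $\|\gamma_{J_n^c}\|_0\leq km^*$, partition the nonzero support of $\gamma_{J_n^c}$ into $k$ blocks $G_1,\dots,G_k$ of size at most $m^*$, and decompose $\alpha=\sum_{j=1}^k\alpha^{(j)}$ with $\alpha^{(j)}:=(\theta/k,\,\gamma_{J_n}/k,\,\gamma_{G_j})$. Each $\alpha^{(j)}$ is admissible for $\Phi(m^*,J_n)$, so the triangle inequality followed by Cauchy--Schwarz gives
\[
\Bigl\|\mathbf{K}_b^{1/2}\begin{pmatrix}\mathbf{V}&\mathbf{Z}\end{pmatrix}\alpha\Bigr\|_2^2\;\leq\; n\,k\,\Phi(m^*,J_n)\,\sum_{j=1}^k\|\alpha^{(j)}\|_2^2,
\]
and a direct computation shows $\sum_j\|\alpha^{(j)}\|_2^2=(\|\theta\|^2+\|\gamma_{J_n}\|^2)/k+\|\gamma_{J_n^c}\|^2\leq\|\alpha\|_2^2$, yielding the claim.

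With sublinearity in hand, the result follows by a case split. Fix $m\in\mathcal{M}$ and set $s:=|\hat{J}_n\setminus J_n|$. If $s\leq\min(m,n)$, then monotonicity of $\Phi$ and the implicit bound immediately give $s\leq L_n|J_n|\Phi(\min(m,n),J_n)$. Otherwise $s>\min(m,n)$; since $s\leq n$, this forces $\min(m,n)=m\leq n$ and $s>m$, so the sublinearity with $k=\lceil s/m\rceil\leq 2s/m$ (valid because $s/m>1$) combines with the implicit bound to give
\[
s\;\leq\; L_n|J_n|\,\Phi(s,J_n)\;\leq\; L_n|J_n|\cdot\tfrac{2s}{m}\,\Phi(m,J_n),
\]
which rearranges to $m\leq 2L_n|J_n|\Phi(\min(m,n),J_n)$, contradicting $m\in\mathcal{M}$. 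Hence the first case prevails, and taking the infimum over $m\in\mathcal{M}$ yields the stated bound. The main obstacle is calibrating the sublinearity constants so that the factor $2L_n$ appearing in the definition of $\mathcal{M}$ matches the one produced by the contradiction argument; once that is pinned down, the rest is routine bookkeeping.
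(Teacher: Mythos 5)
Your proposal is correct and follows essentially the same route as the paper, which simply invokes the proof of Theorem~3 in \citet{BC13} after noting that squaring Lemma~\ref{lem:BC1} yields the implicit bound $|\hat{J}_n\setminus J_n|\leq L_n|J_n|\Phi(\cdot,J_n)$; you have merely written out the BC13 argument in full (sublinearity of the restricted sparse eigenvalue via block decomposition, plus the case split and contradiction with $m\in\mathcal{M}$). Your refinement of the first argument of $\Phi$ from $|\hat{J}_n|$ to $|\hat{J}_n\setminus J_n|$ is a legitimate and in fact helpful tightening, consistent with how the quantity $\hat m=|\hat T\setminus T|$ is used in the original BC13 proof.
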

\begin{proof}
The proof can be carried out along the same lines as the proof of Theorem 3 in \citet{BC13} by noting that their equation (A.2) reads in our case (this is a consequence of Lemma \ref{lem:BC1})
$$|\hat{J}_n\setminus J_n|\leq |J_n| \Phi(|\hat{J}_n|,J_n)L_n.$$
\end{proof}

\section{Details on Lasso Parameter Choice}
\label{sec:app_implementation}
In this section, we give a formal description of the tuning parameter choices for the localized Lasso that we mention in Section \ref{sec:numerical_results}. Since the implementation of cross-validation is straightforward, we focus on the two other methods, which both aim at finding a value of $\lambda$ such that the set $\mathcal{T}(b)$  defined in Section \ref{subsec:LSM} has a large probability. We begin with describing our adaptation of the method from \citet{BCH14} to our RD setting. In this algorithm $\hat{w}_{n,k}$ is replaced by an estimate of
$$w_{n,k}=\sqrt{\IE\left(K_h(X)\left(Z_i^{(k)}r_i(J_n,b)\right)^2\right)}.$$
One then sets $\lambda=2c\sqrt{nb}\Phi^{-1}(1-\gamma/2p_n)$, where $\Phi$ is the standard normal distribution function and, following \citet{BCH14},  $c=1.1$ and $\gamma=0.05$. In order to construct estimates $\hat{w}_{n,k}$ of the infeasible weights $w_{n,k}$, we employ the following algorithm:
\begin{enumerate}
	\item Obtain residuals $\hat{r}_i$ from estimating a standard local linear RD regression without covariates.
	\item Compute
	$$\hat{w}_{n,k}=\sqrt{\frac{1}{nb}\sum_{i=1}^nK\left(\frac{X_i}{b}\right)^2\left(Z^{(k)}_i\hat{r}_i\right)^2}.$$
	\item Fit the model with covariates, using $\hat{w}_{n,k}$ as penalty loadings and $\lambda$ as described above. Compute new residuals $\hat{r}_i$ and let $\hat{J}_n$ be the set of covariates which receive a non-zero parameter.
	\item Update the penalty loadings as
	$$\hat{w}_{n,k}=\sqrt{\frac{1}{nb}\sum_{i=1}^nK\left(\frac{X_i}{b}\right)^2\left(Z_i^{(k)}\hat{r}_i\right)^2}\cdot\sqrt{\frac{nb}{nb-|\hat{J}|_n+4}}.$$
	\item Repeat steps 3-4 until either the absolute change in the updated penalty loadings is smaller than $\nu$, or after $K$ repetitions.  In Section \ref{sec:numerical_results}, we choose $\nu=10^{-5}$ and $K=10$.
\end{enumerate}

We also adapt the method of \citet{VL20} to our RD setting. Here all notation is the same as in the main paper. The algorithm is as follows:
\begin{enumerate}
	\item Define a sequence of $M$ values $0<\lambda_1<...<\lambda_M$ such that for $\lambda_M$ all parameters are estimated to be zero.
	\item Compute for all $m=1,...,M$ the estimators using $\lambda_k$ as tuning parameter and compute the corresponding empirical residuals $\hat{r}_i(\lambda_k)$.
	\item Compute $e^{(1)},...,e^{(L)}$, where each $e_l$ comprises of $n$ i.i.d. standard normal random variables.
	\item Compute for each $m=1,...,M$:
	$$\left\{\max_{k=1,...,p_n}\left|\frac{2}{nb}\sum_{i=1}^n\hat{w}_{n,k}K\left(\frac{X_i}{b}\right)Z_i^{(k)}\hat{r}_i(\lambda_m)e_i^{(l)}\right|:l=1,...,L\right\},$$
	and let $\hat{q}_{\alpha}(\lambda_m)$ be the empirical $\alpha$-quantile of the above set.
	\item Let $\hat{m}=\min\{m: \hat{q}_{\alpha}(\lambda_{m'})\leq \lambda_{m'}\textrm{ for all }m'\geq m\}$ if $\hat{q}_{\alpha}(\lambda_M)\leq\lambda_M$ and $\hat{m}=M$ otherwise.
	\item Choose $\lambda_{\hat{m}}$ as the value of the tuning parameter. In Section \ref{sec:numerical_results}, we choose $M=5p_n$, $L=100$ and $\alpha=0.05$.
\end{enumerate}

\section{Additional Simulations}
\subsection{Simulation Results with Robust Bias Correction}
\label{sec:rdrobust_empirics}

In this section we show the same analysis as in Section \ref{sec:simulations}, except that  we use the bandwidth selection and standard error computations as implemented in the package {\tt rdrobust}. 
The results are shown in Table \ref{tab1rdrobust}. The interpretation of the results is very similar to those presented in the main text. An important difference, however, is that the standard errors from {\tt rdrobust} underestimate the true standard deviation of \emph{all} estimators, including the baseline and linear adjustment estimators with only a single covariate, by a factor that is at least about 10\%, and increases with the number of selected covariates. Correspondingly, confidence intervals based on the different estimators undercover slightly in cases where the number of selected covariates is small, and exhibit substantial distortions otherwise. Otherwise, the different methods for choosing the Lasso tuning parameter show the same tendency as observed in Section \ref{sec:simulations}: Cross-Validation (CV) chooses many covariates while the other two methods select a lower number of covariates.

 \begin{table}[!t]
 	\centering\caption{Simulation Results for {\tt rdrobust}}\label{tab1rdrobust}
 	\begin{tabular}{l|cccccc}
 		\toprule
 		Covariate Selection &  \#Cov. & Bias  & SD    & Avg.\  SE &  CI Length & Coverage           \\
 		\toprule
 		Lasso (CV)   & 11.4 & 0.0149 & 0.0335 & 0.0227 & 0.1295 & 86.9 \\ 
 		Lasso (BCH)  & 2.5 & 0.0155 & 0.0341 & 0.0281 & 0.1617 & 93.0 \\ 
 		Lasso (LV)   & 3.2 & 0.0154 & 0.0331 & 0.0270 & 0.1550 & 92.6 \\ 
 		\midrule
 		Fixed: No Covariates     & 0.0 & 0.0171 & 0.0588 & 0.0524 & 0.3015 & 94.0 \\ 
 		Fixed: Covariate 1       & 1.0 & 0.0164 & 0.0379 & 0.0329 & 0.1519 & 91.8 \\ 
 		Fixed: Covariates 1--10  & 10.0 & 0.0149 & 0.0309 & 0.0245 & 0.1126 & 88.3 \\ 
 		Fixed: Covariates 1--30  & 30.0 & 0.0132 & 0.0332 & 0.0221 & 0.1011 & 82.2 \\ 
 		Fixed: Covariates 1--50  & 50.0 & 0.0117 & 0.0374 & 0.0193 & 0.0892 & 72.0 \\ 
 		Fixed: Optimal Covariate & - & 0.0159 & 0.0318 & 0.0270 & 0.1555 & 93.4 \\ 
 		\bottomrule
 	\end{tabular}
 
 	\raggedright \footnotesize Results based on $10000$ Monte Carlo replications when the method from {\tt rdrobust} is used for inference. For each estimator, the table shows shows average number of selected covariates (\#Cov.), the bias (Bias), the standard deviation (SD), the average value of the final estimator's standard error (SE), the average length of the corresponding confidence interval for the parameter of interest (CI Length), and the share of simulation runs in which the respective confidence interval covered the true parameter value (Coverage). 	
 \end{table}
 
\subsection{Simulation in a Non-Sparse Setting}
\label{sec:non_sparse-simulations}
We complement the results from Section \ref{sec:simulations} by investigating a non-sparse setup. We use the same data generating process as in Section \ref{sec:simulations}, except that we chose $\alpha$ differently, namely
$$\alpha=(\underbrace{\alpha_0 \, \dots \,\alpha_0}_{50{\rm times}},\underbrace{0,\,\dots,\,0}_{150{\rm times}}),$$
where $\alpha_0=0.3883765$. This choice guarantees that $Y$ has the same variance as with the choice of $\alpha$ from Section \ref{sec:simulations}. As before we use a sample size of $n=1,000$ and we consider $10,000$ Monte-Carlo repetitions. In this setup, there are thus 50 covariates of roughly equal importance, which is a large number given the rather moderate sample size. The tables \ref{tab5} (using {\tt RDHonest}) and \ref{tab4} (using {\tt rdrobust}) show the simulation results. We see that the performance of the Lasso critically depends on the method used to determine the tuning parameter. While Cross-Validation (CV) chooses too many covariates and results in biased standard errors and CI under-coverage, the method (BCH) is very conservative in selecting very few covariates. While this is no problem for the coverage, the standard deviation could be lower when more covariates are incorporated: The method (LV) selects more covariates but not too many to cause problems for the coverage. These patterns appear for both {\tt RDHonest} and {\tt rdrobust}.

  \begin{table}[!t]
 	\centering\caption{Simulation Results for {\tt RDHonest}}\label{tab5}
 	\begin{tabular}{l|cccccc}
 		\toprule
 		Covariate Selection &  \#Cov. & Bias  & SD    & Avg.\ SE &  CI Length & Coverage           \\
 		\toprule
 		Lasso (CV)   & 28.2 & 0.0068 & 0.0638 & 0.0294 & 0.1576 & 73.0 \\ 
 		Lasso (BCH)  &  0.5 & 0.0067 & 0.0720 & 0.0693 & 0.3096 & 96.3 \\ 
 		Lasso (LV)   &  1.2 & 0.0063 & 0.0684 & 0.0636 & 0.2861 & 95.9\\ 
 		\midrule
 		Fixed: No Covariates     &  0.0 & 0.0070 & 0.0749 & 0.0739 & 0.3294 & 96.7 \\ 
 		Fixed: Covariate 1       &  1.0 & 0.0063 & 0.0676 & 0.0658 & 0.2940 & 96.4 \\ 
 		Fixed: Covariates 1--10  & 10.0 & 0.0058 & 0.0599 & 0.0513 & 0.2343 & 94.2 \\ 
 		Fixed: Covariates 1--30  & 30.0 & 0.0059 & 0.0548 & 0.0345 & 0.1692 & 87.0 \\ 
 		Fixed: Covariates 1--50  & 50.0 & 0.0062 & 0.0489 & 0.0214 & 0.1181 & 76.9 \\ 
 		Fixed: Optimal Covariate & -    & 0.0050 & 0.0532 & 0.0515 & 0.2305 & 96.5 \\ 
 		\bottomrule
 	\end{tabular} 
 	\raggedright \footnotesize Results based on $10000$ Monte Carlo replications. For each estimator, the table shows shows average number of selected covariates (\#Cov.), the bias (Bias), the standard deviation (SD), the average value of the final estimator's standard error (SE), the average length of the corresponding confidence interval for the parameter of interest (CI Length), and the share of simulation runs in which the respective confidence interval covered the true parameter value (Coverage). 	
 \end{table}
 
   \begin{table}[!t]
 	\centering\caption{Simulation Results for {\tt rdrobust}}\label{tab4}
 	\begin{tabular}{l|cccccc}
 		\toprule
 		Covariate Selection &  \#Cov. & Bias  & SD    & Avg.\ SE &  CI Length & Coverage           \\
 		\toprule
 		Lasso (CV)   & 65.4 & 0.0117 & 0.1742 & 0.0166 & 0.0939 & 49.9 \\ 
 		Lasso (BCH)  &  1.9 & 0.0156 & 0.0532 & 0.0443 & 0.2544 & 93.2 \\ 
 		Lasso (LV)   &  3.1 & 0.0155 & 0.0513 & 0.0417 & 0.2394 & 92.8 \\ 
 		\midrule
 		Fixed: No Covariates     &  0.0 & 0.0169 & 0.0595 & 0.0528 & 0.3036 & 94.2 \\ 
 		Fixed: Covariate 1       &  1.0 & 0.0166 & 0.0525 & 0.0463 & 0.2143 & 92.3 \\ 
 		Fixed: Covariates 1--10  & 10.0 & 0.0153 & 0.0453 & 0.0371 & 0.1714 & 89.8 \\ 
 		Fixed: Covariates 1--30  & 30.0 & 0.0136 & 0.0413 & 0.0280 & 0.1287 & 83.6 \\ 
 		Fixed: Covariates 1--50  & 50.0 & 0.0116 & 0.0375 & 0.0194 & 0.0896 & 71.7 \\ 
 		Fixed: Optimal Covariate & -    & 0.0165 & 0.0392 & 0.0340 & 0.1960 & 93.9 \\ 
 		\bottomrule
 	\end{tabular} 
 	\raggedright \footnotesize Results based on $10000$ Monte Carlo replications. For each estimator, the table shows shows average number of selected covariates (\#Cov.), the bias (Bias), the standard deviation (SD), the average value of the final estimator's standard error (SE), the average length of the corresponding confidence interval for the parameter of interest (CI Length), and the share of simulation runs in which the respective confidence interval covered the true parameter value (Coverage). 	
 \end{table}
\end{document}